\documentclass[11pt,letterpaper]{article}

\usepackage{fix-cm}
\usepackage{amsmath,amssymb}
\usepackage{bm,bbm}
\usepackage{xspace}
\usepackage{latexsym}
\usepackage{color}
\usepackage{verbatim}
\usepackage{times}

\usepackage{ktmath}

\newtheorem{myclaim}[theorem]{Claim}


 


\renewcommand{\epsilon}{\vareps}

\newcommand{\dist}{\ensuremath{\mathit{dist}}\xspace}
\newcommand{\ucup}{\uplus} 
\newcommand{\gstretch}{\ensuremath{\mathsf{str}}\xspace}

\newcommand{\logoeps}{\log \mathopen{}\left(\ts\frac1{\vareps}\right)}

\newcommand{\procName}[1]{\ensuremath{{\tt #1}}\xspace}
\newcommand{\linOprName}[1]{\ensuremath{{\sf #1}}\xspace}



\setlength{\pdfpagewidth}{8.5in}
\setlength{\pdfpageheight}{11in}

\usepackage[nohead,
            left=0.9in,right=0.9in,top=1in,
            footskip=0.5in,bottom=1in     
            ]{geometry}



\usepackage{enumitem}




\interfootnotelinepenalty=10000  


\usepackage{booktabs}
\usepackage{color}
\usepackage{colortbl}
\usepackage{tabularx}

\usepackage{courier}
\usepackage[scaled=0.92]{helvet}

\usepackage[
colorlinks,
citecolor=blue,
pagebackref=true
]{hyperref}
\usepackage{url}



\definecolor{placeholderbg}{rgb}{0.85,0.85,0.85}

\usepackage{graphicx}

\addtolength{\parskip}{9pt}


\usepackage[absolute]{textpos}

\usepackage{float}
\newfloat{acmcr}{b}{acmcr}


\date{}


\title{Near Linear-Work Parallel SDD Solvers, Low-Diameter
  Decomposition, and Low-Stretch Subgraphs\footnote{Contact Address: 5000 Forbes
    Ave. Computer Science Department. Pittsburgh, PA 15213. E-mail:
    \texttt{\{guyb, anupamg, i.koutis, glmiller, yangp, ktangwon\}@cs.cmu.edu}.}}


\author{
Guy E. Blelloch \qquad Anupam Gupta \qquad Ioannis Koutis$^\dag$
\qquad Gary L. Miller \\[0.2em] Richard Peng \qquad Kanat Tangwongsan\\[0.5em]
\it Carnegie Mellon University and $^\dag$University of Puerto Rico, Rio Piedras
}




\floatstyle{ruled}
\newfloat{algo}{tbp}{lop}[section]
\floatname{algo}{Algorithm}

\usepackage{microtype}  

\begin{document}


\maketitle

\begin{abstract}
  We present the design and analysis of a near linear-work parallel algorithm
  for solving symmetric diagonally dominant (SDD) linear systems.  On input of a
  SDD $n$-by-$n$ matrix $A$ with $m$ non-zero entries and a vector $b$, our
  algorithm computes a vector $\tilde{x}$ such that $\norm[A]{\tilde{x} - A^+b}
  \leq \vareps \cdot \norm[A]{A^+b}$ in $O(m\log^{O(1)}{n}\log
  {\frac1\epsilon})$ work and $O(m^{1/3+\theta}\log \frac1\epsilon)$ depth for
  any fixed $\theta > 0$.

  The algorithm relies on a parallel algorithm for generating low-stretch
  spanning trees or spanning subgraphs.  To this end, we first develop a
  parallel decomposition algorithm that in polylogarithmic depth and
  $\otilde(|E|)$ work\footnote{The $\otilde(\cdot)$ notion hides polylogarithmic
    factors.}, partitions a graph into components with polylogarithmic diameter
  such that only a small fraction of the original edges are between the
  components.  This can be used to generate low-stretch spanning trees with
  average stretch $O(n^{\alpha})$ in $O(n^{1+\alpha})$ work and $O(n^{\alpha})$
  depth.  Alternatively, it can be used to generate spanning subgraphs with
  polylogarithmic average stretch in $\otilde(|E|)$ work and polylogarithmic
  depth.  We apply this subgraph construction to derive a parallel linear system
  solver.
  By using this solver in known applications, our results imply improved
  parallel randomized algorithms for several problems, including single-source
  shortest paths, maximum flow, minimum-cost flow, and approximate maximum flow.
\end{abstract}

\section{Introduction}
\label{sec:intro}

Solving a system of linear equations $Ax = b$ is a fundamental
computing primitive that lies at the core of many numerical and
scientific computing algorithms, including the popular interior-point
algorithms.
The special case of symmetric diagonally dominant (SDD) systems has seen
substantial progress in recent years; in particular, the ground-breaking work of
Spielman and Teng showed how to solve SDD systems to accuracy $\vareps$ in time
$\otilde(m \log (\tfrac{1}{\vareps}))$, where $m$ is the number of non-zeros in the
$n$-by-$n$-matrix $A$.\footnote{The Spielman-Teng solver and all subsequent
  improvements are randomized algorithms. As a consequence, all algorithms
  relying on the solvers are also randomized. For simplicity, we omit standard
  complexity factors related to the probability of error.} This is
algorithmically significant since solving SDD systems has implications to
computing eigenvectors, solving flow problems, finding graph sparsifiers, and
problems in vision and graphics (see~\cite{Spielman:icm10,Teng10Survey} for
these and other applications).

In the sequential setting, the current best SDD solvers run in $O(m
\log n (\log\log n)^2\log (\tfrac1{\vareps}))$
time~\cite{KoutisMP:focs11}. However, with the exception of the
special case of planar SDD systems~\cite{KoutisM:soda07}, we know of
no previous parallel SDD solvers that perform
near-linear\footnote{i.e. linear up to polylog factors.} work and
achieve non-trivial parallelism. This raises a natural question:
\emph{Is it possible to solve an SDD linear system in $o(n)$ depth and
  $\otilde(m)$ work?}  This work answers this question affirmatively:

\begin{theorem}
  \label{thm:main}
  For any fixed $\theta>0$ and any $\epsilon > 0$, there is an
  algorithm \procName{SDDSolve} that on input an $n\times n$ SDD
  matrix $A$ with $m$ non-zero elements and a vector $b$, computes a
  vector $\tilde{x}$ such that $\norm[A]{\tilde{x} - A^+b} \leq
  \vareps \cdot \norm[A]{A^+b}$ in $O(m\log^{O(1)}{n}\log
  {\frac1\epsilon})$ work and $O(m^{1/3+\theta}\log \frac1\epsilon)$
  depth.
\end{theorem}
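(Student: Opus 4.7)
The plan is to build the SDD solver by combining preconditioned iterative methods with a recursively constructed preconditioner chain, where the per-level preconditioner is obtained from a low-stretch spanning subgraph produced by the parallel decomposition machinery the paper develops. Concretely, the reduction from solving $Ax=b$ to solving a Laplacian system $L_G y = c$ is standard (splitting $A$ into diagonal, positive, and negative parts and extending to a weighted graph Laplacian), so I would focus the argument on the Laplacian case and restore the SDD case as a black-box wrapper at the end.

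The main engine is a chain $L_G = L_{G_0}, L_{G_1}, \ldots, L_{G_k}$ in which each $G_{i+1}$ is a spectral approximation of $G_i$ with substantially fewer edges. To construct $G_{i+1}$, I would first invoke the paper's parallel low-stretch subgraph algorithm on $G_i$ to obtain a subgraph of polylogarithmic average stretch in $\otilde(|E(G_i)|)$ work and polylogarithmic depth; then perform edge sampling with probabilities proportional to stretch (as in the Koutis--Miller--Peng ultrasparsifier construction), which is trivially parallel and yields $G_{i+1}$ whose Laplacian is a spectral approximation of $L_{G_i}$ within a controllable condition-number factor $\kappa_i$ while shrinking the edge count by a prescribed factor. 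Given such a chain, the recursive solver uses preconditioned Chebyshev iteration at level $i$, calling the level-$(i+1)$ solve as its preconditioner and performing $O(\sqrt{\kappa_i})$ iterations; each iteration is a handful of sparse matrix-vector products, each of which has polylogarithmic depth and work proportional to $|E(G_i)|$.

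The accuracy of the top-level solve follows from the standard Chebyshev error analysis: an $\vareps$-accurate solution in the $A$-norm requires the logarithmic $\log \tfrac{1}{\vareps}$ overhead that appears in the theorem. The work bound is the usual ``geometric-series'' calculation: the product of iteration counts down the chain, times the (geometrically shrinking) edge counts, telescopes to $\otilde(m) \log \tfrac{1}{\vareps}$, provided the condition-number factors $\kappa_i$ and the edge-shrinkage factors are balanced so that $\prod_i \sqrt{\kappa_i}$ stays polylogarithmic. This balancing is exactly where the parameter $\theta$ and the $m^{1/3+\theta}$ depth enter: the depth of the solver is essentially $\bigl(\prod_i \sqrt{\kappa_i}\bigr) \cdot \mathrm{polylog}(n) \cdot \log \tfrac{1}{\vareps}$, and choosing the shrinkage schedule to make the chain length roughly $\log m / \log \log m$ while keeping the $\kappa_i$'s balanced forces a trade-off that is minimized around the $m^{1/3+\theta}$ bound stated in the theorem.

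The step I expect to be the main obstacle is the depth accounting for the recursive preconditioned iteration, because depth does \emph{not} telescope the way work does: every Chebyshev iteration at level $i$ sequentially invokes a level-$(i+1)$ solve, so depths multiply across levels rather than adding. Making this multiplication come out to $m^{1/3+\theta}$ requires carefully co-designing the stretch bound of the subgraphs, the sampling probabilities, the number of chain levels, and the per-level iteration count, and then verifying that the polylogarithmic depth of one matrix-vector product and of one call to the parallel low-stretch subgraph routine is preserved under the recursion. Everything else (the reduction from SDD to Laplacian, the correctness of Chebyshev iteration under spectrally-close preconditioners, and the sparsifier guarantees from stretch-based sampling) is either standard or immediate from results cited or developed earlier in the paper.
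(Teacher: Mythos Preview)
Your high-level architecture is right (low-stretch subgraph $\to$ ultrasparsifier $\to$ preconditioning chain $\to$ recursive Chebyshev), but two concrete ingredients are missing, and without them the $m^{1/3+\theta}$ depth bound cannot be derived.

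First, edge sampling alone does not make the chain shrink. The ultrasparsifier of $G_i$ has $n_i - 1 + m_i/\mathrm{polylog}$ edges but still has $n_i$ vertices, so iterating gives graphs that bottom out near $n$ edges, not near $m^{1/3}$. The paper interleaves sparsification with a partial Cholesky step (\procName{GreedyElimination}) that contracts degree-$1$ and degree-$2$ vertices; this is what converts ``$n_i - 1 + q$ edges'' into ``$O(q)$ vertices'' at the next level and lets both $n_i$ and $m_i$ decrease geometrically. You need to account for this step and its parallelization.

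Second, your explanation of where $m^{1/3}$ comes from is off. You write that the work bound requires $\prod_i \sqrt{\kappa_i}$ to stay polylogarithmic, and then that the depth is essentially $\prod_i \sqrt{\kappa_i}$; these cannot both hold if the depth is $m^{1/3+\theta}$. In the paper, $\prod_i \sqrt{\kappa_i}$ is \emph{not} polylogarithmic: it is $\approx m^{1/3+\theta}$. The work stays near-linear because $m_i \cdot \prod_{j\le i}\sqrt{\kappa_j} \le \otilde(m)$ at every level (the edge count shrinks fast enough to absorb the growing product), and because the chain is \emph{terminated early} at $m_d \approx m^{1/3}$, where a dense factorization is precomputed. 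The bottom-level cost is then $m_d^2 \cdot \prod_j \sqrt{\kappa_j} \approx m^{2/3} \cdot m^{1/3} = m$, and balancing this against the depth $\prod_j \sqrt{\kappa_j}$ is exactly what pins the exponent at $1/3$. Your proposal never mentions truncating the chain or solving the bottom system directly, and attributes the exponent instead to a ``chain length $\approx \log m/\log\log m$'' trade-off, which is not the mechanism.
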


In the process of developing this algorithm, we give parallel
algorithms for constructing graph decompositions with strong-diameter
guarantees, and parallel algorithms to construct low-stretch spanning
trees and low-stretch ultra-sparse subgraphs, which may be of
independent interest.  An overview of these algorithms and their
underlying techniques is given in Section~\ref{sec:our-results-techn}.

\medskip
\noindent
\textbf{Some Applications.} Let us mention some of the implications of
Theorem~\ref{thm:main}, obtained by plugging it into known reductions.

\smallskip
\noindent
\emph{\;--- Construction of Spectral Sparsifiers.} Spielman and Srivastava
  \cite{SpielmanS:stoc08} showed that spectral sparsifiers can be
  constructed using $O(\log{n})$ Laplacian solves, and using our theorem
  we get spectral and cut sparsifiers in $\tilde{O}(m^{1/3+\theta})$
  depth and $\tilde{O}(m)$ work.

\smallskip
\noindent\emph{\;--- Flow Problems.} Daitsch and
Spielman~\cite{DaitschSpielman08} showed that various graph
optimization problems, such as max-flow, min-cost flow, and lossy flow
problems, can be reduced to $\otilde(m^{1/2})$
applications\footnote{here $\tilde{O}$ hides $\log{U}$ factors as
  well, where it's assumed that the edge weights are integers in the range
  $[1 \dots U]$} of SDD solves via interior point methods described in
\cite{Ye:book1997,Renegar:book01,Boyd&Vandenberghe:2004}.
Combining this with our main theorem implies that these algorithms can
be parallelized to run in $\otilde(m^{5/6+\theta})$ depth and
$\otilde(m^{3/2})$ work.  This gives the first parallel algorithm with
$o(n)$ depth which is work-efficient to within $\polylog(n)$ factors
relative to the sequential algorithm for all problems analyzed in
\cite{DaitschSpielman08}.  In some sense, the parallel bounds are more
interesting than the sequential times because in many cases the
results in \cite{DaitschSpielman08} are not the best known
sequentially (e.g. max-flow)---but do lead to the best know parallel
bounds for problems that have traditionally been hard to parallelize.
Finally, we note that although \cite{DaitschSpielman08} does not
explicitly analyze shortest path, their analysis naturally generalizes
the LP for it.

Our algorithm can also be applied in the inner loop of \cite{ChristianoKMST:arxiv10-fastflow},
yielding a $\tilde{O}(m^{5/6 +\theta}\text{poly}(\epsilon^{-1}))$ depth
and $\tilde{O}(m^{4/3}\text{poly}(\epsilon^{-1}))$ work
algorithm for finding $1 - \epsilon$ approximate maximum flows and
$1 + \epsilon$ approximate minimum cuts in undirected graphs.

\section{Preliminaries and Notation}
\label{sec:prelim-notation}



We use the notation $\otilde(f(n))$ to mean $O(f(n)\polylog(f(n)))$. We use $A
\ucup B$ to denote disjoint unions, and $[k]$ to denote the set $\{1, 2, \ldots,
k\}$.  Given a graph $G = (V,E)$, 
let $\dist(u, v)$ denote the \emph{edge-count distance} (or hop distance)
between $u$ and $v$, ignoring the edge lengths. When the graph has edge lengths
$w(e)$ (also denoted by $w_e$), let $d_G(u, v)$ denote the \emph{edge-length
  distance}, the shortest path (according to these edge lengths) between $u$ and
$v$. If the graph has unit edge lengths, the two definitions coincide. We drop
subscripts when the context is clear. We denote by $V(G)$ and $E(G)$,
respectively, the set of nodes and the set of edges, and use $n = |V(G)|$ and $m
= |E(G)|$.  For an edge $e = \{u, v\}$, the stretch of $e$ on $G'$ is
$\gstretch_{G'}(e) = d_{G'}(u, v)/w(e)$. The \emph{total stretch} of $G = (V, E,
w)$ with respect to $G'$ is $\gstretch_{G'}(E(G)) = \sum_{e \in E(G)}
\gstretch_{G'}(e)$.


Given $G = (V,E)$, a distance function $\delta$ (which is either
$\dist$ or $d$), and a partition of $V$ into $C_1 \ucup C_2 \ucup
\ldots \ucup C_p$, let $G[C_i]$ denote the induced subgraph on set
$C_i$. The \emph{weak diameter} of $C_i$ is $\max_{u,v \in C_i}
\delta_G(u,v)$, whereas the \emph{strong diameter} of $C_i$ is
$\max_{u,v \in C_i} \delta_{G[C_i]}(u,v)$; the former measures
distances in the original graph whereas the latter measures distances
within the induced subgraph.  The strong (or weak) diameter of the
partition is the maximum strong (or weak) diameter over all the
components $C_i$'s.

\smallskip
\noindent\textbf{Graph Laplacians.}
For a fixed, but arbitrary, numbering of the nodes and edges in a
graph $G = (V, E)$, the Laplacian $L_G$ of $G$ is the $|V|$-by-$|V|$
matrix given by
\begin{equation*}
  L_G(i, j) =
  \begin{cases}
    -w_{ij} & \text{ if } i \neq j\\
    \sum_{\{j,i\} \in E(G)} w_{ij} & \text{ if } i = j
  \end{cases},
\end{equation*}
When the context is clear, we use $G$ and $L_G$ interchangeably.
Given two graphs $G$ and $H$ and a scalar $\mu \in \R$, we say $G
\preceq \mu H$ if $\mu L_H - L_G$ is positive semidefinite, or
equivalently $x^{\tr} L_G x \leq \mu x^{\tr} L_H x$ for all vector $x
\in \R^{|V|}$.

\smallskip
\noindent\textbf{Matrix Norms, SDD Matrices.}
%
For a matrix $A$, we denote by $A^+$ the Moore-Penrose pseudoinverse
of $A$ (i.e., $A^+$ has the same null space as $A$ and acts as the
inverse of $A$ on its image).  Given a symmetric positive
semi-definite matrix $A$, the \emph{$A$-norm} of a vector $x$ is
defined as $\norm[A]{x} = \sqrt{x^{\tr} A x}$.  A matrix $A$ is
symmetric diagonally dominant (SDD) if it is symmetric and for all
$i$, $A_{i,i} \geq \sum_{j \neq i} |A_{i,j}|$. Solving an SDD system
reduces in $O(m)$ work and $O(\log^{O(1)} m)$ depth to solving a graph
Laplacian (a subclass of SDD matrices corresponding to undirected
weighted graphs)~\cite[Section~7.1]{Gremban-thesis}.


\smallskip
\noindent\textbf{Parallel Models.} We analyze algorithms in the
standard PRAM model, focusing on the work and depth parameters of the
algorithms. By \emph{work}, we mean the total operation count---and by
\emph{depth}, we mean the longest chain of dependencies (i.e.,
parallel time in PRAM).

\noindent\textbf{Parallel Ball Growing.} Let $B_G(s, r)$ denote the ball of
edge-count distance $r$ from a source $s$, i.e., $B_G(s, r) = \{ v \in
V(G) : \dist_{G}(s, v) \leq r\}$. We rely on an elementary form of
parallel breadth-first search to compute $B_G(s, r)$. The algorithm
visits the nodes level by level as they are encountered in the BFS
order. More precisely, level $0$ contains only the source node $s$,
level $1$ contains the neighbors of $s$, and each subsequent level
$i+1$ contains the neighbors of level $i$'s nodes that have not shown
up in a previous level. On standard parallel models (e.g., \pCRCW),
this can be computed in $O(r\log n)$ depth and $O(m' + n')$ work,
where $m'$ and $n$' are the total numbers of edges and nodes,
respectively, encountered in the
search~\cite{UllmanY:siamjc91,KleinS:jal97}.  Notice that we could
achieve this runtime bound with a variety of graph (matrix)
representations, e.g., using the compressed sparse-row (CSR) format.
Our applications apply ball growing on $r$ roughly $O(\log^{O(1)} n)$, resulting
in a small depth bound.  We remark that the idea of small-radius parallel ball
growing has previously been employed in the context of approximate shortest
paths (see, e.g.,~\cite{UllmanY:siamjc91,KleinS:jal97,Cohen:jacm00}). There is
an alternative approach of repeatedly squaring a matrix, which can give a better
depth bound for large $r$ \emph{at the expense} of a much larger work bound
(about $n^3$).

Finally, we state a tail bound which will be useful in our analysis. This bound
is easily derived from well-known facts about the tail of a hypergeometric
random
variable~\cite{Chvatal-hypgeometric,Hoeffding-full:jasa63,Skala:misc2009}.
\begin{lemma}[Hypergeometric Tail Bound]
  \label{lem:hypgtailbound}
  Let $H$ be a hypergeometric random variable denoting the number of
  red balls found in a sample of $n$ balls drawn from a total of $N$
  balls of which $M$ are red. Then, if $\mu = \expct{H} = nM/N$, then
  \begin{equation*}
    \prob{H \geq 2\mu} \;\;\leq\;\; e^{-\mu/4}
  \end{equation*}
\end{lemma}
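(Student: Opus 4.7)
The plan is to reduce the hypergeometric tail to a standard Chernoff-style bound on a binomial, which is a well-known reduction due to Hoeffding. Concretely, Hoeffding showed that the moment generating function of a hypergeometric $H$ with parameters $(N,M,n)$ is dominated by the moment generating function of $X \sim \text{Bin}(n, M/N)$: for all $t \geq 0$,
\[
\mathbb{E}[e^{tH}] \;\leq\; \mathbb{E}[e^{tX}].
\]
Therefore any Markov/Chernoff-style upper-tail bound valid for the binomial transfers to the hypergeometric, with the same mean $\mu = nM/N$. This is exactly the content of the results cited (Hoeffding, Chv\'atal, Skala).

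Given that reduction, I would apply the standard multiplicative Chernoff bound for the binomial, namely
\[
\prob{X \geq (1+\delta)\mu} \;\leq\; \left(\frac{e^{\delta}}{(1+\delta)^{1+\delta}}\right)^{\mu},
\]
and specialize to $\delta = 1$ to get
\[
\prob{H \geq 2\mu} \;\leq\; \left(\frac{e}{4}\right)^{\mu} \;=\; e^{\mu(1 - \ln 4)}.
\]

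Finally I would check the numerical constant: since $\ln 4 > 5/4$, we have $1 - \ln 4 < -1/4$, so $e^{\mu(1-\ln 4)} \leq e^{-\mu/4}$, which is the claimed bound. The only conceptual step is the MGF domination; after that everything is a short, routine Chernoff calculation. There is no real obstacle, and the whole argument fits in a few lines by pointing to the cited references for the MGF comparison and then performing the one-line Chernoff specialization.
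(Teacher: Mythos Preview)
Your proposal is correct and is essentially the same argument as the paper's: both invoke Hoeffding's result for the hypergeometric, arrive at the identical bound $(e/4)^{\mu} = e^{\mu(1-\ln 4)}$, and finish by noting $\ln 4 - 1 > 1/4$. The only cosmetic difference is that the paper quotes the Hoeffding--Chv\'atal relative-entropy tail inequality directly and simplifies it, whereas you factor the same argument into ``MGF domination by the binomial'' followed by the standard multiplicative Chernoff bound.
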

\begin{proof}
  We apply the following theorem of
  Hoeffding~\cite{Chvatal-hypgeometric,Hoeffding-full:jasa63,Skala:misc2009}. For
  any $t > 0$,
  \begin{align*}
    \prob{H \geq \mu + tn}
    \leq
    \left(
      \Big(\frac{p}{p+t}\Big)^{p+t} \Big(\frac{1 - p}{1 - p - t} \Big)^{1 - p - t}
    \right)^n,
  \end{align*}
  where $p = \mu/n$. Using $t = p$, we have
  \begin{align*}
    \prob{H \geq 2\mu }
    &\leq \left( \Big(\frac{p}{2p}\Big)^{2p}
      \Big(\frac{1 - p}{1 - 2p} \Big)^{1 - 2p} \right)^n \\
    &\leq \left( e^{-p\ln 4} \Big(1 + \frac{p}{1 -2p}\Big)^{1-2p}\right)^n \\
    &\leq \left( e^{-p\ln 4} \cdot e^{p}\right)^n \\
    &\leq e^{-\frac14 pn},
  \end{align*}
  where we have used the fact that $1 + x \leq \exp(x)$.
\end{proof}


\section{Overview of Our  Techniques}
\label{sec:our-results-techn}
In the general solver framework of Spielman and
Teng~\cite{SpielmanTengSolver,KoutisMP:focs10}, near linear-time SDD solvers
rely on a suitable preconditioning chain of progressively smaller graphs.
Assuming that we have an algorithm for generating low-stretch spanning trees,
the algorithm as given in \cite{KoutisMP:focs10} parallelizes under the
following modifications: (i) perform the partial Cholesky factorization in
parallel and (ii) terminate the preconditioning chain with a graph that is of
size approximately $m^{1/3}$.  The details in Section~\ref{sec:SDD} are the
primary motivation of the main technical part of the work in this chapter, a
parallel implementation of a modified version of Alon et al.'s low-stretch
spanning tree algorithm~\cite{AKPW95}.

More specifically, as a first step, we find an algorithm to embed a
graph into a spanning tree with average stretch $2^{O(\sqrt{\log n
    \log\log n})}$ in $\otilde(m)$ work and $O(2^{O(\sqrt{\log n
    \log\log n})}\log \Delta)$ depth, where $\Delta$ is the ratio of
the largest to smallest distance in the graph.  The original AKPW
algorithm relies on a parallel graph decomposition scheme of
Awerbuch~\cite{Awer85}, which takes an unweighted graph and breaks it
into components with a specified diameter and few crossing edges.
While such schemes are known in the sequential setting, they do not
parallelize readily because removing edges belonging to one component
might increase the diameter or even disconnect subsequent components.
We present the first near linear-work parallel decomposition algorithm
that also gives strong-diameter guarantees, in
Section~\ref{sec:partition}, and the tree embedding results in
Section~\ref{sec:akpw}.

Ideally, we would have liked for our spanning trees to have a polylogarithmic
stretch, computable by a polylogarithmic depth, near linear-work algorithm.
However, for our solvers, we make the additional observation that we do not
really need a spanning \emph{tree} with small stretch; it suffices to give an
``ultra-sparse'' graph with small stretch, one that has only $O(m/\polylog(n))$
edges more than a tree.  Hence, we present a parallel algorithm in
Section~\ref{sec:low-stretch-subgraphs} which outputs an ultra-sparse graph with
$O(\polylog(n))$ average stretch, performing $\otilde(m)$ work with
$O(\polylog(n))$ depth.  Note that this removes the dependence of $\log \Delta$
in the depth, and reduces both the stretch and the depth from $2^{O(\sqrt{\log n
    \log\log n})}$ to $O(\polylog (n))$.\footnote{As an aside, this construction
  of low-stretch ultra-sparse graphs shows how to obtain the $\otilde(m)$-time
  linear system solver of Spielman and Teng~\cite{SpielmanTengSolver} without
  using their low-stretch spanning trees result~\cite{EEST05,AbrahamBN:focs08}.}
When combined with the aforementioned routines for constructing a SDD solver
presented in Section~\ref{sec:SDD}, this low-stretch spanning subgraph
construction yields a parallel solver algorithm.


\section{Parallel Low-Diameter Decomposition}
\label{sec:partition}

In this section, we present a parallel algorithm for partitioning a
graph into components with low (strong) diameter while cutting only a
few edges in each of the $k$ disjoint subsets of the input edges.  The
sequential version of this algorithm is at the heart of the
low-stretch spanning tree algorithm of Alon, Karp, Peleg, and West
(AKPW)~\cite{AKPW95}.

For context, notice that the outer layer of the AKPW algorithm (more
details in Section~\ref{sec:lowstretch}) can be viewed as bucketing
the input edges by weight, then partitioning and contracting them
repeatedly.  In this view, a number of edge classes are ``reduced''
simultaneously in an iteration.  Further, as we wish to output a
spanning subtree at the end, the components need to have low
strong-diameter (i.e., one could not take ``shortcuts'' through other
components). In the sequential case, the strong-diameter property is
met by removing components one after another, but this process does
not parallelize readily.  For the parallel case, we guarantee this by
growing balls from multiple sites, with appropriate ``jitters'' that
conceptually delay when these ball-growing processes start, and
assigning vertices to the first region that reaches them.  These
``jitters'' terms are crucial in controlling the probability that an
edge goes across regions.  But this probability also depends on the
number of regions that could reach such an edge.  To keep this number
small, we use a repeated sampling procedure motivated by Cohen's
$(\beta, W)$-cover construction~\cite{Cohen93}.

More concretely, we prove the following theorem:

\begin{theorem}[Parallel Low-Diameter Decomposition]
  \label{thm:graph-partition}
  Given an input graph $G = (V,E_1 \ucup \dots \ucup E_k$) with $k$
  edge classes and a ``radius'' parameter $\rho$, the algorithm
  $\procName{Partition}(G, \rho)$, upon termination, outputs a
  partition of $V$ into components $\mathcal{C} = ( C_1, C_2, \dots,
  C_p) $, each with center $s_i$ such that
  \begin{enumerate}[topsep=1pt, itemsep=1pt]
  \item the center $s_i \in C_i$ for all $i \in [p]$,

  \item for each $i$, every $u \in C_i$ satisfies $\dist_{G[C_i]}(s_i,
    u) \leq \rho$, and

  \item for all $j = 1, \dots, k$, the number of edges in $E_j$ that
    go between components is at most $|E_j| \cdot
    \frac{c_{1} \cdot k \log^3{n}}{\rho}$, where $c_{1}$ is
    an absolute constant.
  \end{enumerate}
  Furthermore, $\procName{Partition}$ runs in $O(m\log^2 n)$ expected work
  and $O(\rho\log^2{n})$ expected depth.
\end{theorem}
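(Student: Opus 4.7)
The plan is to adapt Cohen's iterated $(\beta,W)$-cover construction to parallel ball-growing with exponential (or similarly heavy-tailed) jitter offsets, thereby obtaining a \emph{strong}-diameter decomposition in polylogarithmic rounds. The core subroutine works in phases: in each phase, sample each still-unassigned vertex independently with probability roughly $1/\log n$ to form a set $S$ of candidate centers; assign each center $s\in S$ an independent jitter $\delta_s$ drawn from an exponential (or geometric) distribution with a parameter tuned to $1/\rho$; then run parallel truncated BFS from all centers simultaneously, assigning a vertex $u$ to the center $s$ that minimizes $\dist_{G}(s,u)+\delta_s$ among those for which this quantity is at most $\rho$. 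Vertices not reached within depth $\rho$ (after accounting for jitter) remain unassigned and are passed to the next phase. Components for the centers selected in this phase are ``frozen,'' and the next phase operates on the residual graph induced by the still-unassigned vertices. Iterate $O(\log n)$ phases.

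The analysis breaks into four pieces. \textbf{Strong diameter} is immediate from the construction: each $C_i$ is grown inside the residual subgraph by BFS to depth $\rho$, so every $u\in C_i$ satisfies $\dist_{G[C_i]}(s_i,u)\le\rho$, yielding properties (1) and (2) by inspection. \textbf{Termination / progress} requires showing that each phase assigns a constant fraction of the remaining vertices in expectation: a vertex $v$ is assigned as soon as some sampled center lies within its $\rho$-ball (up to jitter), and since $v$ itself is sampled with constant probability and would then serve as its own center, a Markov/hypergeometric argument, using the tail bound of Lemma~\ref{lem:hypgtailbound}, suffices to show that after $O(\log n)$ phases every vertex is assigned w.h.p. \textbf{Work and depth} follow because each phase performs one truncated BFS of radius $O(\rho)$, costing $O(\rho\log n)$ depth and $O(m)$ work by the parallel BFS primitive of Section~\ref{sec:prelim-notation}, and there are $O(\log n)$ phases.

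The heart of the argument — and what I expect to be the main obstacle — is bounding the cut edges of each class $E_j$. For a fixed edge $e=\{u,v\}\in E_j$, I would argue that conditioned on the set of centers whose jittered arrival time at $u$ is at most $\rho$, the probability that the \emph{minimizing} center differs between $u$ and $v$ is governed by the memoryless/smoothness property of the jitter distribution and is at most $O(1/\rho)$ per candidate center. A standard Cohen-style $(\beta,W)$-cover analysis shows that, in expectation, only $O(\log n)$ centers are ``competitive'' for any given edge in any given phase. Summing this $O(\log n / \rho)$ per-phase cut probability over the $O(\log n)$ phases gives an $O(\log^2 n/\rho)$ probability that $e$ is cut; summing $|E_j|$ such contributions and paying one additional $k\log n$ factor for the union bound across the $k$ edge classes (so that the bound holds simultaneously for every class, which is what the outer AKPW iteration requires) yields the claimed $|E_j|\cdot O(k\log^3 n/\rho)$ bound on cut edges of class $E_j$.

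The delicate point in this step is the interaction between the \emph{iterated} sampling (which is what gives near-linear work and strong diameter rather than merely weak diameter) and the per-class edge bound: when an edge's endpoints survive into different phases, the residual-graph distances used by the later phase may differ from the original graph distances. I would resolve this by noting that a vertex is only frozen into a component once a center has reached it inside the \emph{current} residual graph, so a cut of $e$ in a later phase can be charged either to $e$ having been cut across residual components in that phase (bounded as above) or to $e$ having already been cut by a frozen component boundary in an earlier phase (also bounded as above); union-bounding over the $O(\log n)$ phases supplies the final logarithmic factor. Finally, I would wrap up by verifying that the work bound of $O(m\log^2 n)$ and depth bound $O(\rho\log^2 n)$ absorb the $O(\log n)$-phase overhead and the constant-factor slack from the sampling probability, completing the proof.
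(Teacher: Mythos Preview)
Your plan has a real gap at exactly the point you flag as ``the heart of the argument.'' You invoke a ``standard Cohen-style $(\beta,W)$-cover analysis'' to conclude that only $O(\log n)$ centers are competitive for any given edge in any given phase, but the sampling scheme you describe --- each unassigned vertex sampled independently with probability $\approx 1/\log n$, with a \emph{fixed} truncation radius $\rho$ --- does not produce a Cohen cover. The bounded-overlap property of Cohen's construction comes from pairing a \emph{geometrically increasing} sample rate with a \emph{geometrically decreasing} radius across iterations, so that a vertex with a dense $r^{(t)}$-ball is hit (and removed) before iteration $t$, while the sample in iteration $t$ is sparse enough that surviving vertices see only $O(\log n)$ centers. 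With constant-rate sampling and fixed radius, a vertex whose $\rho$-ball has $b$ vertices expects $\Theta(b/\log n)$ competing centers in the very first phase; nothing in your scheme bounds $b$. This breaks both your cut-edge calculation (as you present it, via ``$O(1/\rho)$ per center times $O(\log n)$ centers'') and your per-phase $O(m)$ work bound (if the BFS's are grown separately).

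The paper's proof implements the doubling explicitly: over $T=2\log n$ iterations it takes $\sigma_t \propto n^{t/T}|V^{(t)}|$ centers and shrinks the radius to $r^{(t)}=(T-t+1)\cdot\rho/(2\log n)$, with \emph{uniform} jitter on $[0,R]$. Two coupled claims then show that any vertex lies in at most $O(\log n)$ balls per iteration and hence $O(\log^2 n)$ balls overall (this is where the hypergeometric tail bound is actually used), and a union bound over those $O(\log^2 n)$ potential separators times the $1/R$ chance that a uniform jitter lands on the separating value gives the $O(\log^3 n/\rho)$ cut probability. The multiple-class guarantee comes not from a high-probability union bound but from Markov's inequality applied per class, a union bound over the $k$ classes giving success probability $\ge 1/4$, and a repeat-until-success wrapper --- hence the factor $k$ (not $k\log n$) and the \emph{expected} work/depth in the theorem statement. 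Finally, strong diameter is not ``by inspection'': the components are Voronoi-like regions, not BFS balls, and one must argue (as the paper does) that any shortest path from a center to a vertex in its cell stays inside the cell; this is where the additive jitter assignment rule is used.
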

\subsection{Low-Diameter Decomposition for Simple Unweighted Graphs}
To prove this theorem, we begin by presenting an algorithm
\procName{splitGraph} that works with simple graphs with only
\emph{one} edge class and describe how to build on top of it an
algorithm that handles multiple edge classes.

The basic algorithm takes as input a simple, unweighted graph $G =
(V,E)$ and a radius (in hop count) parameter $\rho$ and outputs a
partition $V$ into components $C_1, \dots, C_p$, each with center
$s_i$, such that
\begin{enumerate}[topsep=1pt, itemsep=1pt, label=(P\arabic*)]
\item Each center belongs to its own component. That is, the center
  $s_i \in C_i$ for all $i \in [p]$;
\item Every component has radius at most $\rho$. That is, for each $i
  \in [p]$, every $u \in C_i$ satisfies $\dist_{G[C_i]}(s_i, u) \leq
  \rho$;
\item Given a technical condition (to be specified) that holds with
  probability at least $3/4$, the probability that an edge of the
  graph $G$ goes between components is at most $\frac{136}\rho\log^3
  n$.
\end{enumerate}
In addition, this algorithm runs in $O(m\log^2 n)$ expected work and
$O(\rho\log^2{n})$ expected depth.  (These properties should be
compared with the guarantees in Theorem~\ref{thm:graph-partition}.)

Consider the pseudocode of this basic algorithm in
Algorithm~\ref{algo:split-graph}.  The algorithm takes as input an
unweighted $n$-node graph $G$ and proceeds in $T = O(\log n)$
iterations, with the eventual goal of outputting a partition of the
graph $G$ into a collection of sets of nodes (each set of nodes is
known as a component).  Let $G^{(t)} = (V^{(t)}, E^{(t)})$ denote the
graph at the beginning of iteration $t$.  Since this graph is
unweighted, the distance in this algorithm is always the hop-count
distance $\dist(\cdot, \cdot)$.
For iteration $t = 1, \dots,
T$, the algorithm picks a set of starting centers $S^{(t)}$ to grow
balls from; as with Cohen's $(\beta, W)$-cover, the number of centers
is progressively larger with iterations, reminiscent of the doubling
trick (though with more careful handling of the growth rate), to
compensate for the balls' shrinking radius and to ensure that the
graph is fully covered.

Still within iteration $t$, it chooses a random ``jitter'' value
$\delta_s^{(t)} \in_R \{0, 1, \dots, R\}$ for each of the centers in
$S^{(t)}$ and grows a ball from each center $s$ out to radius $r^{(t)}
- \delta_s^{(t)}$, where $r^{(t)} = \frac{\rho}{2\log n}(T - t +
1)$. Let $X^{(t)}$ be the union of these balls (i.e., the nodes
``seen'' from these starting points).  In this process, the ``jitter''
should be thought of as a random amount by which we delay the
ball-growing process on each center, so that we could assign nodes to
the first region that reaches them while being in control of the
number of cross-component edges.  Equivalently, our algorithm forms
the components by assigning each vertex $u$ reachable from one of
these centers to the center that minimizes $\dist_{G^{(t)}}(u, s) +
\delta_s^{(t)}$ (ties broken in a consistent manner, e.g.,
lexicographically).  Note that because of these ``jitters,'' some
centers might not be assigned any vertex, not even itself.  For
centers that are assigned some nodes, we include their components in
the output, designating them as the components' centers.  Finally, we
construct $G^{(t+1)}$ by removing nodes that were ``seen'' in this
iteration (i.e., the nodes in $X^{(t)}$)---because they are already
part of one of the output components---and adjusting the edge set
accordingly.


\begin{algo}[th]
  \caption{\procName{splitGraph}$(G = (V, E),\rho)$ --- Split an input
    graph $G=(V,E)$ into components of hop-radius at most $\rho$.}
  \label{algo:split-graph}

  \small Let $G^{(1)} = (V^{(1)}, E^{(1)}) \gets G$.  \small Define $R
  = \rho/(2 \log{n})$.
  \small Create empty collection of components $\mathcal{C}$. \\
  \small Use $\dist^{(t)}$ as shorthand for $\dist_{G^{(t)}}$, and
  define $B^{(t)}(u, r) \eqdef B_{G^{(t)}}(u,r) = \{ v \in V^{(t)} \;\mid\;
  \dist^{(t)}(u, v) \leq r\}$.

  \medskip
  \small For $t = 1, 2, \dots, T = 2\log_2 n$,

  \begin{enumerate}[ref=\arabic*, label=\arabic*., topsep=0pt,itemsep=0pt]
  \item \label{step:sg-pick-sites} Randomly sample $S^{(t)} \subseteq V^{(t)} $, where
    $|S^{(t)|}| = \sigma_t = 12n^{t/T - 1}|V^{(t)}|\log n$, or use
    $S^{(t)} = V^{(t)}$ if $|V^{(t)}| < \sigma_t$.

  \item For each ``center'' $s \in S^{(t)}$, draw $\delta_s^{(t)}$
    uniformly at random from $\Z \cap [0, R]$.

  \item Let $r^{(t)} \gets (T - t + 1)R$.

  \item \label{part:centercreation} For each center $s \in S^{(t)}$,
     compute the ball  $B^{(t)}_s = B^{(t)}(s, r^{(t)} - \delta_s^{(t)})$.


  \item  Let $X^{(t)} = \cup_{s \in S^{(t)}}
    B^{(t)}_s$.

  \item \label{part:clusterassignment}
  Create components $\{C_s^{(t)} \;\mid \; s \in S^{(t)}\}$ by assigning
    each $u \in X^{(t)}$ to the component $C_s^{(t)}$ such that $s$
    minimizes $\dist_{G^{(t)}}(u, s) + \delta^{(t)}_s$ (breaking ties
    lexicographically).

  \item Add non-empty $C^{(t)}_s$ components to $\mathcal{C}$.

  \item Set $V^{(t+1)} \gets V^{(t)} \setminus X^{(t)}$, and let
    $G^{(t+1)} \gets G^{(t)}[V^{(t+1)}]$. Quit early if $V^{(t+1)}$ is
    empty.
  \end{enumerate}
  \small Return $\mathcal{C}$.
\end{algo}

\noindent \textbf{Analysis.} Throughout this analysis, we make
reference to various quantities in the algorithm and assume the
reader's basic familiarity with our algorithm.  We begin by proving
properties (P1)--(P2).  First, we state an easy-to-verify fact, which
follows immediately by our choice of radius and components' centers.

\begin{fact}
  \label{fact:dist1}
  If vertex $u$ lies in component $C^{(t)}_s$, then $\dist^{(t)}(s, u)
  \leq r^{(t)}$.  Moreover, $u \in B^{(t)}_s$.
\end{fact}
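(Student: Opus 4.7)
The plan is simply to unpack the component-assignment rule in Step~\ref{part:clusterassignment} of Algorithm~\ref{algo:split-graph}. Membership $u \in C^{(t)}_s$ presupposes $u \in X^{(t)} = \bigcup_{s' \in S^{(t)}} B^{(t)}_{s'}$, so there must exist some center $s' \in S^{(t)}$ with $u \in B^{(t)}_{s'}$, which by the definition of $B^{(t)}_{s'}$ as $B^{(t)}(s', r^{(t)} - \delta^{(t)}_{s'})$ is the same as $\dist^{(t)}(u, s') + \delta^{(t)}_{s'} \leq r^{(t)}$. The assignment rule then picks $s$ to be the minimizer of $\dist^{(t)}(u, \cdot) + \delta^{(t)}_{\cdot}$ over all centers in $S^{(t)}$, so by optimality
\[
\dist^{(t)}(u, s) + \delta^{(t)}_s \;\leq\; \dist^{(t)}(u, s') + \delta^{(t)}_{s'} \;\leq\; r^{(t)}.
\]

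From this single inequality both conclusions drop out. Discarding the non-negative term $\delta^{(t)}_s$ (it is drawn from $\Z \cap [0, R]$) gives $\dist^{(t)}(s, u) \leq r^{(t)}$, the first claim. Rearranging instead yields $\dist^{(t)}(u, s) \leq r^{(t)} - \delta^{(t)}_s$, which is precisely the condition for $u$ to lie in $B^{(t)}(s, r^{(t)} - \delta^{(t)}_s) = B^{(t)}_s$, the second claim. There is no genuine obstacle here --- the statement is a pure unpacking of the ball-growing and tie-breaking definitions, which is why the authors label it easy-to-verify; the one ingredient worth calling out is the non-negativity of the jitters, which is what allows both conclusions to be read off from the same inequality.
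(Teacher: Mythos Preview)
Your argument is correct and is exactly the unpacking the paper has in mind when it says the fact ``follows immediately by our choice of radius and components' centers.'' The only point worth noting is that Step~\ref{part:clusterassignment} minimizes over \emph{all} $s \in S^{(t)}$, not just those whose balls contain $u$; you handle this correctly by first exhibiting a witness $s'$ with $\dist^{(t)}(u,s') + \delta^{(t)}_{s'} \le r^{(t)}$ and then invoking optimality of $s$.
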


We also need the following lemma to argue about strong diameter.

\begin{lemma}
  \label{lem:dist2}
  If vertex $u \in C^{(t)}_s$, and vertex $v \in V^{(t)}$ lies on any
  $u$-$s$ shortest path in $G^{(t)}$, then $v \in C^{(t)}_s$.
\end{lemma}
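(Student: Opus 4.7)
The plan is to leverage the assignment rule from Step~\ref{part:clusterassignment} together with the triangle inequality, using the key identity that when $v$ lies on a shortest $u$--$s$ path we have $\dist^{(t)}(u,s) = \dist^{(t)}(u,v) + \dist^{(t)}(v,s)$. From this, I can convert any optimality statement for $u$ with respect to the score $\dist^{(t)}(\cdot, s') + \delta^{(t)}_{s'}$ into an essentially identical statement for $v$, shifted by the constant $\dist^{(t)}(u,v)$.

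First, I would verify that $v$ is actually a candidate to be assigned, i.e., $v \in X^{(t)}$. Since $u \in C^{(t)}_s \subseteq B^{(t)}_s$, Fact~\ref{fact:dist1} gives $\dist^{(t)}(u,s) \le r^{(t)} - \delta^{(t)}_s$, and then
\[
   \dist^{(t)}(v,s) \;=\; \dist^{(t)}(u,s) - \dist^{(t)}(u,v) \;\le\; r^{(t)} - \delta^{(t)}_s,
\]
so $v \in B^{(t)}_s \subseteq X^{(t)}$. In particular, the assignment rule does apply to $v$, and $s$ is one of the candidate centers for $v$.

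Second, I would show $v$ is in fact assigned to $s$. Fix any other center $s' \in S^{(t)}$. By the triangle inequality, $\dist^{(t)}(u, s') \le \dist^{(t)}(u, v) + \dist^{(t)}(v, s')$. Combining this with the shortest-path identity for $v$ gives
\[
   \dist^{(t)}(v, s') + \delta^{(t)}_{s'}
   \;\ge\; \bigl[\dist^{(t)}(u,s') + \delta^{(t)}_{s'}\bigr] - \dist^{(t)}(u,v),
\]
\[
   \dist^{(t)}(v, s) + \delta^{(t)}_{s}
   \;=\; \bigl[\dist^{(t)}(u,s) + \delta^{(t)}_{s}\bigr] - \dist^{(t)}(u,v).
\]
Since $u$ was assigned to $s$, the bracketed quantity on the first line is at least that on the second (with $s$ winning lexicographic ties against $s'$). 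Subtracting the common offset $\dist^{(t)}(u,v)$ preserves the inequality, so the score for $v$ at $s$ is at most the score at $s'$, and ties against $s'$ are still broken in favor of $s$ (since the tie-breaking depends only on the labels, not on the vertex being assigned). Hence $v$ is also assigned to $s$, which gives $v \in C^{(t)}_s$.

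The only subtle point is the tie-breaking, which I would handle by splitting into two cases at the $u$-inequality (strict inequality versus equality with $s <_{\text{lex}} s'$) and observing that shifting both sides by the same constant $\dist^{(t)}(u,v)$ preserves the strictness/equality and the lexicographic winner. Everything else is a direct manipulation of the triangle inequality and the shortest-path decomposition, so I do not anticipate any serious obstacle beyond being careful with the tie-breaking convention.
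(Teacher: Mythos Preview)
Your proof is correct and follows essentially the same approach as the paper: first show $v \in B^{(t)}_s \subseteq X^{(t)}$, then use the triangle inequality together with the shortest-path identity $\dist^{(t)}(u,s) = \dist^{(t)}(u,v) + \dist^{(t)}(v,s)$ to transfer $u$'s optimality at $s$ to $v$. The paper phrases the second step as a proof by contradiction (assume $v$ is assigned to some $j \neq s$ and derive that $u$ should also have been assigned to $j$), whereas you argue directly; your treatment of the lexicographic tie-breaking is in fact more careful than the paper's, which glosses over that point.
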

\begin{proof}
  Since $u \in C^{(t)}_s$, Fact~\ref{fact:dist1} implies $u$ belongs
  to $B^{(t)}_s$. But $\dist^{(t)}(v, i) < \dist^{(t)}(u, i)$, and
  hence $v$ belongs to $B^{(t)}_s$ and $X^{(t)}$ as well. This implies
  that $v$ is assigned to \emph{some} component $C^{(t)}_j$; we claim
  $j = s$.

  For a contradiction, assume that $j \neq s$, and hence
  $\dist^{(t)}(v, j) + \delta^{(t)}_j \leq \dist^{(t)}(v,s) +
  \delta^{(t)}_s$. In this case $\dist^{(t)}(u,j) + \delta^{(t)}_j
  \leq \dist^{(t)}(u,v) + \dist^{(t)}(v, j) + \delta^{(t)}_j$ (by the
  triangle inequality). Now using the assumption, this expression is
  at most $\dist^{(t)}(u,v) + \dist^{(t)}(v,s) + \delta^{(t)}_s =
  \dist^{(t)}(u,s) + \delta^{(t)}_s$ (since $v$ lies on the shortest
  $u$-$s$ path). But then, $u$ would be also assigned to $C^{(t)}_j$,
  a contradiction.
\end{proof}

Hence, for each non-empty component $C^{(t)}_s$, its center~$s$ lies
within the component (since it lies on the shortest path from $s$ to
any $u \in C^{(t)}_s$), which proves (P1). Moreover, by
Fact~\ref{fact:dist1} and Lemma~\ref{lem:dist2}, the (strong) radius
is at most $TR$, proving (P2). It now remains to prove (P3), and the
work and depth bounds.

\begin{lemma}
  \label{lem:overlapbound}
  For any vertex $u \in V$, with probability at least $1 - n^{-6}$,
  there are at most $68\log^2 {n}$ pairs\footnote{In fact, for a given
    $s$, there is a unique $t$---if this $s$ is ever chosen as a
    ``starting point.''} $(s,t)$ such that $s \in S^{(t)}$ and $u \in
  B^{(t)}(s, r^{(t)})$,
\end{lemma}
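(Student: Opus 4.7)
\textbf{Proof Plan for Lemma~\ref{lem:overlapbound}.}
Fix the vertex $u$ and let $H_t := |S^{(t)} \cap B^{(t)}(u, r^{(t)})|$, so the quantity to bound equals $\sum_t H_t$: by symmetry of the hop metric, $u \in B^{(t)}(s, r^{(t)}) \iff s \in B^{(t)}(u, r^{(t)})$, and iterations with $u \notin V^{(t)}$ contribute nothing because $B^{(t)}(\cdot,\cdot)$ lives inside $V^{(t)}$. Write $\pi_t := \sigma_t/|V^{(t)}| = 12 n^{t/T - 1}\log n$ for the sampling rate; the geometric growth factor $\pi_t/\pi_{t-1} = n^{1/T} = \sqrt{2}$ is what ultimately drives the argument.

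The core of the proof is a ``survival $\Rightarrow$ small ball'' claim: I would show that with high probability, whenever $u \in V^{(t)}$, the ball $B^{(t)}(u, r^{(t)})$ has size at most $8\log n/\pi_{t-1}$. The reason is that if any $s \in S^{(t-1)}$ lies in $B^{(t-1)}(u, r^{(t)}) = B^{(t-1)}(u, r^{(t-1)} - R)$, then $\dist^{(t-1)}(u, s) \leq r^{(t)} \leq r^{(t-1)} - \delta_s^{(t-1)}$, so $u$ must land in $B_s^{(t-1)} \subseteq X^{(t-1)}$, contradicting $u \in V^{(t)}$. A standard hypergeometric/$(1-p)^\sigma \leq e^{-p\sigma}$ estimate then yields $\prob{|B^{(t-1)}(u, r^{(t)})| > 8\log n / \pi_{t-1} \text{ and } u \in V^{(t)}} \leq n^{-8}$. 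Union-bounding over $t$, with probability at least $1 - n^{-7}$ we have $|B^{(t)}(u, r^{(t)})| \leq |B^{(t-1)}(u, r^{(t)})| \leq 8\log n/\pi_{t-1}$ for every $t$ with $u \in V^{(t)}$; the first inequality uses that passing from $G^{(t-1)}$ to its induced subgraph $G^{(t)}$ can only remove vertices and enlarge distances.

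On this good event, $\expct{H_t} = \pi_t \cdot |B^{(t)}(u, r^{(t)})| \leq 8\sqrt{2}\log n$ for $t \geq 2$, and the base case is bounded directly by $\expct{H_1} \leq \sigma_1 = 12\sqrt{2}\log n$. I would then invoke Lemma~\ref{lem:hypgtailbound} (with a mild Chernoff-type strengthening to cover the regime where $\mu_t = \Theta(\log n)$) to conclude $H_t \leq 34\log n$ except with probability $n^{-8}$ per iteration. Summing over the $T = 2\log_2 n$ iterations and a final union bound give $\sum_t H_t \leq 68 \log^2 n$ with probability at least $1 - n^{-6}$.

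The main obstacle is the correlated conditioning in the ball-size step: the event $\{u \in V^{(t)}\}$ itself depends on the sample $S^{(t-1)}$, so ``survival implies small ball'' must be phrased via the contrapositive ``large ball implies almost-certain death,'' and the joint probability of surviving with a large local ball is what one actually bounds. A secondary book-keeping nuisance is that Lemma~\ref{lem:hypgtailbound} is only tight when $\mu = \Omega(\log n)$; one either invokes a slightly sharper Chernoff of the form $\prob{H \geq a} \leq (e\mu/a)^a$ when $\mu_t$ is small, or bounds $H_t$ against the template mean $17\log n$ instead of $\mu_t$ so that the stated tail bound yields an $n^{-\Omega(1)}$ decay per iteration.
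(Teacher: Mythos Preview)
Your proposal is correct and matches the paper's approach: the paper proves the per-round bound $H_t \leq 34\log n$ (Claim~\ref{claim:manyneighbor}) by conditioning on rounds $1,\ldots,t-2$, invoking exactly the contrapositive you describe (a large ball $B^{(t-1)}(u,r^{(t)})$ forces $u\in X^{(t-1)}$, stated separately as Claim~\ref{claim:denseclobbering}) to guarantee $|B^{(t)}(u,r^{(t)})| < n^{1-(t-1)/T}$, and then applying Lemma~\ref{lem:hypgtailbound} before union-bounding over the $T=2\log_2 n$ rounds. The tail-bound nuisance you flag (needing $\mu=\Theta(\log n)$) is handled the way you suggest, by bounding against the template mean $17\log n$ rather than the true $\mu_t$.
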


We will prove this lemma in a series of claims.
\begin{myclaim}
  \label{claim:denseclobbering}
  For $t \in [T]$ and $v \in V^{(t)}$, if
  $|B^{(t)}(v, r^{(t + 1)})| \geq n^{1 - t/T}$,
  then $v \in X^{(t)}$ w.p. at least $1 - n^{-12}$.
\end{myclaim}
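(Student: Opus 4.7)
The plan is to reduce the statement to a simple sampling event: we will show that whenever the random center set $S^{(t)}$ touches the ball $B^{(t)}(v, r^{(t+1)})$, the vertex $v$ is deterministically swept into $X^{(t)}$, and then bound the probability that this intersection is empty.

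The first step is the deterministic geometric observation. Note that $r^{(t)} - r^{(t+1)} = R$, and each jitter $\delta_s^{(t)}$ is drawn from $\{0,1,\dots,R\}$, so $\delta_s^{(t)} \leq R$. Consequently, for any center $s \in S^{(t)}$ with $\dist^{(t)}(s,v) \leq r^{(t+1)}$, we have
\[
r^{(t)} - \delta_s^{(t)} \;\geq\; r^{(t)} - R \;=\; r^{(t+1)} \;\geq\; \dist^{(t)}(s,v),
\]
which places $v$ in $B^{(t)}_s$ and therefore in $X^{(t)}$. Thus it suffices to bound the probability that $S^{(t)} \cap B^{(t)}(v, r^{(t+1)}) = \emptyset$.

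The next step is a routine sampling estimate. If $|V^{(t)}| < \sigma_t$ then the algorithm sets $S^{(t)} = V^{(t)}$, in which case $v \in S^{(t)}$ and the intersection is nonempty deterministically. Otherwise, $S^{(t)}$ is a uniformly random size-$\sigma_t$ subset of $V^{(t)}$ with $\sigma_t = 12 n^{t/T - 1} |V^{(t)}| \log n$, and we let $b = |B^{(t)}(v, r^{(t+1)})| \geq n^{1 - t/T}$ by hypothesis. The probability that none of the $b$ ball vertices is picked equals
\[
\frac{\binom{|V^{(t)}|-b}{\sigma_t}}{\binom{|V^{(t)}|}{\sigma_t}}
\;=\;\prod_{i=0}^{\sigma_t-1}\frac{|V^{(t)}|-b-i}{|V^{(t)}|-i}
\;\leq\;\Bigl(1-\tfrac{b}{|V^{(t)}|}\Bigr)^{\sigma_t}
\;\leq\;\exp\!\Bigl(-\tfrac{\sigma_t\, b}{|V^{(t)}|}\Bigr)
\;\leq\;\exp(-12\log n)\;=\;n^{-12},
\]
using $b/|V^{(t)}| \geq n^{-t/T}$ and the definition of $\sigma_t$ in the last inequality.

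The only substantive step is the first (geometric) one; once the reduction to ``some sample lands inside $B^{(t)}(v, r^{(t+1)})$'' is in place, the tail bound is immediate and does not even require the hypergeometric lemma quoted earlier (although that lemma could also be used). I expect no real obstacles beyond carefully tracking that $\delta_s^{(t)} \leq R$ is the exact inequality needed and that the radius schedule $r^{(t)} = (T-t+1)R$ yields precisely the slack $R$ between successive iterations.
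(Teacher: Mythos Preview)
Your proof is correct and follows essentially the same route as the paper: the same geometric reduction (any center landing in $B^{(t)}(v,r^{(t+1)})$ forces $v\in X^{(t)}$ because $r^{(t)}-\delta_s^{(t)}\ge r^{(t+1)}$) followed by the same $(1-b/|V^{(t)}|)^{\sigma_t}$ miss-probability bound. Your treatment is actually slightly more careful than the paper's, since you explicitly handle the $|V^{(t)}|<\sigma_t$ branch; the only cosmetic quibble is that in the final inequality the clean way to see $\sigma_t b/|V^{(t)}|\ge 12\log n$ is to note that $\sigma_t/|V^{(t)}|=12\,n^{t/T-1}\log n$ exactly and then use $b\ge n^{1-t/T}$, rather than invoking $b/|V^{(t)}|\ge n^{-t/T}$.
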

\begin{proof}
  First, note that for any $s \in S^{(t)}$, $r^{(t)}- \delta_s \geq
  r^{(t)} - R = r^{(t + 1)}$, and so if $s \in B^{(t)}(v, r^{(t +
    1)})$, then $v \in B^{(t)}_{s}$ and hence in $X^{(t)}$.  Therefore,
  \[
  \prob{v \in X^{(t)}} \geq \prob{S^{(t)} \cap B^{(t)}(v, r^{(t + 1)})
    \neq \emptyset},\] which is the probability that a random subset
  of $V^{(t)}$ of size $\sigma_t$ hits the ball $B^{(t)}(v, r^{(t +
    1)})$.  But, \[\prob{S^{(t)} \cap B^{(t)}(v, r^{(t + 1)}) \neq
    \emptyset} \geq 1 - \left(1 - \textstyle\frac{|B^{(t)}(v, r^{(t +
        1)})|}{|V^{(t)}|}\right)^{\sigma_t},\] which is at least $ 1 -
  n^{-12}$.
\end{proof}

\begin{myclaim}
\label{claim:manyneighbor}
  For $t \in [T]$ and $v \in V$, the number of $s \in S^{(t)}$ such
  that $v \in B^{(t)}(s, r^{(t)})$ is at most $34\log n$ w.p.\ at
  least $1 - n^{-8}$.
\end{myclaim}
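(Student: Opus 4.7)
Fix $v \in V$ and iteration $t$. Since hop distance in the unweighted graph $G^{(t)}$ is symmetric, the quantity we must bound equals $|S^{(t)} \cap B^{(t)}(v, r^{(t)})|$. If $v \notin V^{(t)}$ this intersection is empty, so assume $v \in V^{(t)}$. My approach has two steps: (i) use the previous iteration's clobbering guarantee (Claim~\ref{claim:denseclobbering}) to show that $B^{(t)}(v, r^{(t)})$ itself must be small, and (ii) apply the hypergeometric tail bound of Lemma~\ref{lem:hypgtailbound} to the random sample $S^{(t)}$.

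\textbf{Ball-size step.} The radius $r^{(t)}$ at iteration $t$ is exactly the ``$r^{(\tau+1)}$'' appearing in Claim~\ref{claim:denseclobbering} at iteration $\tau = t-1$, as both equal $(T - t + 1)R$. Applying that claim at iteration $t-1$ yields, with probability at least $1 - n^{-12}$, that either $v \in X^{(t-1)}$ or $|B^{(t-1)}(v, r^{(t)})| < n^{1-(t-1)/T}$. Since $v \in V^{(t)} = V^{(t-1)} \setminus X^{(t-1)}$ rules out the first alternative, and since $G^{(t)}$ is vertex-induced in $G^{(t-1)}$ (so balls only shrink), we conclude $|B^{(t)}(v, r^{(t)})| < n^{1-(t-1)/T}$. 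For $t = 1$ the trivial bound $|B^{(1)}(v, r^{(1)})| \leq n$ plays the same role.

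\textbf{Sampling step.} Consider first the case $|V^{(t)}| \geq \sigma_t$, so $S^{(t)}$ is a uniform random subset of size $\sigma_t = 12 n^{t/T-1} |V^{(t)}| \log n$. Then $|S^{(t)} \cap B^{(t)}(v, r^{(t)})|$ is a hypergeometric random variable with mean
\[
\mu \;\leq\; \sigma_t \cdot \frac{n^{1-(t-1)/T}}{|V^{(t)}|} \;=\; 12 \log n \cdot n^{1/T} \;=\; 12\sqrt{2}\,\log n,
\]
using $T = 2\log_2 n$ so that $n^{1/T} = \sqrt{2}$. Lemma~\ref{lem:hypgtailbound} then bounds the count by $2\mu \leq 34\log n$ with failure probability $e^{-\mu/4} = n^{-\Omega(1)}$, which is readily driven below $\tfrac12 n^{-8}$ after a routine constant adjustment in the sampling rate. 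In the alternative case $|V^{(t)}| < \sigma_t$, we have $S^{(t)} = V^{(t)}$ and the count equals $|B^{(t)}(v, r^{(t)})|$; but the very inequality $|V^{(t)}| < 12 n^{t/T-1} |V^{(t)}| \log n$ forces $n^{1-t/T} < 12\log n$, and hence $|B^{(t)}(v, r^{(t)})| < n^{1-(t-1)/T} = \sqrt{2}\, n^{1-t/T} < 34\log n$ holds deterministically.

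A union bound over the two failure events (the ball-size bound from Claim~\ref{claim:denseclobbering} and the hypergeometric tail from Lemma~\ref{lem:hypgtailbound}) yields the claimed $1 - n^{-8}$ probability. The main point of delicacy is that Claim~\ref{claim:denseclobbering} is an unconditional statement while we wish to apply it given $v \in V^{(t)}$; this is handled cleanly by reading that claim as a disjunction so that conditioning on $v \in V^{(t)}$ simply eliminates one disjunct without degrading its $1 - n^{-12}$ guarantee. The rest is bookkeeping on constants.
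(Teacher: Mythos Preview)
Your approach is essentially the paper's: both use Claim~\ref{claim:denseclobbering} at iteration $t-1$ to bound $|B^{(t)}(v,r^{(t)})|$ and then apply the hypergeometric tail bound of Lemma~\ref{lem:hypgtailbound}; the paper phrases this via explicit conditioning on rounds $1,\dots,t-2$ and a two-case split, while you phrase the same thing as a disjunction plus a union bound, and you are actually more careful than the paper in separately handling the $|V^{(t)}|<\sigma_t$ branch. One small wart: ``a routine constant adjustment in the sampling rate'' is not available to you since the algorithm's constants are fixed---the paper finesses the same point by asserting $\Pr[X\geq 34\log n]\leq e^{-\frac14\cdot 34\log n}$ directly, and the clean justification (which neither of you spells out) is stochastic monotonicity of the hypergeometric in $M$, so one may take $\mu$ to be exactly its upper bound before invoking Lemma~\ref{lem:hypgtailbound}.
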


\begin{proof}
  For $t = 1$, the size $\sigma_1 = O(\log n)$ and hence the claim
  follows trivially. For $t \geq 2$, we condition on all the choices
  made in rounds $1, 2, \ldots, t-2$. Note that if $v$ does not
  survive in $V^{(t-1)}$, then it does not belong to $V^{(t)}$ either,
  and the claim is immediate. So, consider two cases, depending on the
  size of the ball $B^{(t-1)}(v, r^{(t)})$ in iteration $t - 1$:\

  \noindent \emph{--- Case 1.} If $|B^{(t-1)}(v, r^{(t)})| \geq n^{1 -
    (t-1)/T}$, then by Claim 3.5, with probability at least $1 -
  n^{-12}$, we have $v \in X^{(t-1)}$, so $v$ would \emph{not} belong
  to $V^{(t)}$ and this means \textbf{no} $s \in S^{(t)}$ will satisfy $v
  \in B^{(t)}(s, r^{(t)})$, proving the claim for this case.

  \noindent \emph{--- Case 2.} Otherwise, $|B^{(t-1)}(v, r^{(t)})| <
  n^{1 - (t-1)/T}$.  We have \[ |B^{(t)}(v, r^{(t)})| \leq |B^{(t-1)}(v,
  r^{(t)})| < n^{1 - (t-1)/T} \] as $B^{(t)}(v, r^{(t)})$ $\subseteq
  B^{(t-1)}(v, r^{(t)})$.  Now let $X$ be the number of $s$ such that
  $v \in B^{(t)}(s, r^{(t)})$, so 
  $X = \sum_{s \in S^{(t)}}\onef{s \in B^{(t)}(v, r^{(t)})}$.  Over
  the random choice of $S^{(t)}$,
    \[
    \prob{s \in B^{(t)}(v, r^{(t)})} = \frac{|B^{(t)}(v, r^{(t)})|}{|V^{(t)}|}
    \leq \frac{1}{|V^{(t)}|} n^{1 - (t-1)/T},
    \]
    which gives \[ \expct{X} = \sigma_t \cdot \prob{s \in B^{(t)}(v,
      r^{(t)})} \leq 17\log n.\]

    To obtain a high probability bound for $X$, we will apply the tail
    bound in Lemma~\ref{lem:hypgtailbound}. Note that $X$ is simply a
    hypergeometric random variable with the following parameters
    setting: total balls $N = |V^{(t)}|$, red balls $M = |B^{(t)}(v,
    r^{(t)})|$, and the number balls drawn is $\sigma_t$.  Therefore,
    $\prob{X \geq 34\log n} \leq \exp\{-\frac14\cdot 34\log n\}$, so
    $X \leq 34\log n$ with probability at least $1 - n^{-8}$.

    Hence, regardless of what choices we made in rounds $1, 2, \ldots,
    t-2$, the conditional probability of seeing more than $34 \log n$
    different $s$'s is at most $n^{-8}$. Hence, we can remove the
    conditioning, and the claim follows.
\end{proof}

\begin{lemma}
  \label{lem:cutprob}
  If for each vertex $u \in V$, there are at most $68\log^2{n}$ pairs
  $(s, t)$ such that $s \in S^{(t)}$ and $u \in B^{(t)}(s, r^{(t)})$,
  then for an edge $uv$, the probability that $u$ belongs to a
  different component than $v$ is at most $68\log^2 n /
  R$.
\end{lemma}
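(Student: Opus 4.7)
The plan is to prove the bound by a union bound over the at most $68\log^2 n$ pairs $(s,t)$ provided by the hypothesis (those with $s\in S^{(t)}$ and $u\in B^{(t)}(s,r^{(t)})$). For each such pair I would define a ``separating'' event $\mathcal{B}_{s,t}$ on the random jitter $\delta_s^{(t)}$ and establish (i) $\Pr[\mathcal{B}_{s,t}]\le 1/R$, and (ii) if none of the $\mathcal{B}_{s,t}$ fire then $u$ and $v$ end up in the same component. The target bound then follows immediately by the union bound.

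The natural choice of $\mathcal{B}_{s,t}$ is the event that the ball $B^{(t)}(s,r^{(t)}-\delta_s^{(t)})$ contains exactly one of $u$ and $v$. Step (i) is the easy part: because $uv$ is a single edge, $|\dist^{(t)}(s,u)-\dist^{(t)}(s,v)|\le 1$, and the ball's outer radius $r^{(t)}-\delta_s^{(t)}$ is an integer determined by $\delta_s^{(t)}$. As $\delta_s^{(t)}$ ranges over the $R+1$ equally likely integers in $\{0,1,\ldots,R\}$, at most one value produces a radius that strictly separates $u$ from $v$, so $\Pr[\mathcal{B}_{s,t}]\le 1/(R+1)\le 1/R$.

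The main obstacle is step (ii). Under the no-separation assumption, every ball produced in every iteration contains both $u$ and $v$ or neither, so the two endpoints are first reached in the same iteration $t^*$ by exactly the same set of candidate centers. Within iteration $t^*$ the algorithm assigns each endpoint to the center minimizing $\dist^{(t^*)}(\cdot,s)+\delta_s^{(t^*)}$ with lexicographic tie-breaking; the complication is that the priorities for $u$ and $v$ relative to the same center can differ by $\pm 1$, so in principle an argmin-flip between the top two candidates could still cut the edge even though no ball's membership splits. I plan to rule this out by a case analysis showing that any such flip between two candidates $s_1,s_2$ requires the priority gap to be exactly $1$, which in turn forces the radius $r^{(t^*)}-\delta_{s_i}^{(t^*)}$ of one of them to fall precisely on one of $u,v$ --- i.e., to trigger a separating event already excluded. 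The lexicographic tie-breaking rule is essential here, since it guarantees that the only asymmetry between $u$ and $v$ is the $\pm 1$ distance gap captured by the $\mathcal{B}_{s,t}$ events. If the cleanest statement of the argument requires slightly enlarging the bad set for each pair by an extra $O(1)$ jitter values, that only changes the constant and is absorbable into $c_1$ of Theorem~\ref{thm:graph-partition}; combining (i) and (ii) then closes the union bound.
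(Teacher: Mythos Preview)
Your scheme is exactly the paper's: per-center events $\mathcal B_{s,t}=\{\,|B^{(t)}_s\cap\{u,v\}|=1\,\}$, each of probability at most $1/R$, then a union bound over the $\le 68\log^2 n$ relevant pairs. You are right that step~(ii) is where the content lies --- the paper handles it with the single word ``Clearly'' --- but the case analysis you sketch does not close it, and in fact the implication it needs is false. Consider one iteration with two centers $s,s'$ and distances $\dist(s,u)=\dist(s',v)=5$, $\dist(s,v)=\dist(s',u)=6$ (realised, e.g., by the path $s\,x_1x_2x_3x_4\,u\,v\,y_4y_3y_2y_1\,s'$), with $r^{(t)}=10$ and $\delta_s=\delta_{s'}=0$. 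Both balls have radius $10$ and hence contain \emph{both} endpoints; no $\mathcal B_{s,t}$ fires, and no ball radius equals any of the four distances $5,6$. Yet the jittered scores are $(5,6)$ for $u$ and $(6,5)$ for $v$, so $u$ is assigned to $s$ and $v$ to $s'$: the edge is cut with no separating event and no boundary hit. For these distances the cut event is the diagonal band $\delta_s-\delta_{s'}\in\{-1,0\}$ (or $\{0,1\}$, depending on the lexicographic tie-break); a diagonal band of length $\Theta(R)$ in the $(\delta_s,\delta_{s'})$ square cannot be covered by the union of an $O(1)$-sized set of $\delta_s$-values and an $O(1)$-sized set of $\delta_{s'}$-values, so your fallback of ``enlarging the bad set for each pair by an extra $O(1)$ jitter values'' cannot rescue the argument either.

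One way to repair the argument is to replace $\mathcal B_{s,t}$ by the event ``$s$ is the $(\dist+\delta)$-minimiser for exactly one of $u,v$'' and to bound it by conditioning on all other jitters. With $\{\delta_{s'}:s'\neq s\}$ fixed, the threshold on $\delta_s$ for $s$ to win $u$ and the threshold for $s$ to win $v$ differ by at most $2$ (each of the minimum competing score and $\dist(s,\cdot)$ changes by at most $1$ between $u$ and $v$), so the conditional probability is $O(1)/R$; a union bound over the $\le 68\log^2 n$ candidate winners then gives $O(\log^2 n)/R$, with the extra constant absorbed into $c_1$.
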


\begin{proof}
  We define a center $s \in S^{(t)}$ as ``separating'' $u$ and $v$ if
  $ |B^{(t)}_s \cap \{u,v\}| = 1$.  Clearly, if $u,v$ lie in
  different components then there is some $t \in [T]$ and some center
  $s$ that separates them.  For a center $s \in S^{(t)}$, this can
  happen only if $\delta_s = R - \dist(s, u)$, since $\dist(s, v) \leq
  \dist(s, u) - 1$.  As there are $R$ possible values of $\delta_s$,
  this event occurs with probability at most $1/R$.  And since there
  are only $68 \log^2 n$ different centers $s$ that can possibly cut
  the edge, using a trivial union bound over them gives us an upper
  bound of $68 \log^2 n/R$ on the probability.
\end{proof}

To argue about (P3), notice that the premise to
Lemma~\ref{lem:cutprob} holds with probability exceeding $1 - o(1)
\geq 3/4$.  Combining this with Lemma~\ref{lem:overlapbound} proves
property (P3), where the technical condition is the premise to
Lemma~\ref{lem:cutprob}.

Finally, we consider the work and depth of the algorithm. These are
randomized bounds.  Each computation of $B^{(t)}(v, r^{(t)})$ can be
done using a BFS.  Since $r^{(t)} \leq \rho$, the depth is bounded by
$O(\rho \log {n})$ per iteration, resulting in $O(\rho \log^2 {n})$
after $T = O(\log n)$ iterations.  As for work, by
Lemma~\ref{lem:overlapbound}, each vertex is reached by at most
$O(\log^2{n})$ starting points, yielding a total work of
$O(m\log^2{n})$.

\subsection{Low-Diameter Decomposition for Multiple Edge Classes}

Extending the basic algorithm to support multiple edge classes is
straightforward.  The main idea is as follows. Suppose we are given a
unweighted graph $G = (V, E)$, and the edge set $E$ is composed of $k$
edge classes $E_1 \ucup \cdots \ucup E_k$.  So, if we run
\procName{splitGraph} on $G = (V,E)$ and $\rho$ treating the different
classes as one, then property (P3) indicates that each
edge---regardless of which class it came from---is separated (i.e., it
goes across components) with probability $p = \frac{136}{\rho}\log^3
n$.  This allows us to prove the following corollary, which follows
directly from Markov's inequality and the union bounds.

\begin{corollary}
  \label{cor:successprob}
  With probability at least $1/4$, for all $i \in [k]$, the number of
  edges in $E_i$ that are between components is at most $|E_i|
  \frac{272 k \log^3{n}}{\rho}$.
\end{corollary}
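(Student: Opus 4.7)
The plan is to split the analysis into two independent probabilistic events: the ``technical condition'' $\mathcal{A}$ from Lemma~\ref{lem:cutprob} (every vertex is in at most $68 \log^2 n$ relevant balls), and the good event that each individual edge class has few cross-component edges. The corollary's factor $\tfrac{272 k \log^3 n}{\rho}$ is suspiciously exactly $2k$ times the per-edge cut probability $p = \tfrac{136 \log^3 n}{\rho}$ established in property (P3), which is the signal that a single application of Markov with slack $2k$ is what we want, combined with a union bound over the $k$ classes.

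More concretely, I would proceed as follows. First, let $\mathcal{A}$ be the condition guaranteed by Lemma~\ref{lem:overlapbound}; by that lemma together with a union bound over all $n$ vertices, $\Pr[\mathcal{A}] \geq 1 - n^{-5} \geq 3/4$. Lemma~\ref{lem:cutprob} then gives, conditioned on $\mathcal{A}$, that any fixed edge is separated across components with probability at most $p$. For each $i \in [k]$, let $X_i$ be the number of edges in $E_i$ that go across components. Linearity of expectation yields $\mathbb{E}[X_i \mid \mathcal{A}] \leq p \cdot |E_i|$, and Markov's inequality gives
\begin{equation*}
\Pr\!\left[ X_i \geq 2k p \cdot |E_i| \;\middle|\; \mathcal{A} \right] \;\leq\; \frac{1}{2k}.
\end{equation*}
Taking a union bound over $i = 1, \dots, k$, the probability that \emph{some} class $E_i$ has more than $2kp \cdot |E_i|$ separated edges is at most $1/2$, conditional on $\mathcal{A}$. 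Therefore the conjunction of $\mathcal{A}$ and ``all classes are good'' occurs with probability at least $\tfrac{3}{4} \cdot \tfrac{1}{2} = \tfrac{3}{8} \geq \tfrac{1}{4}$, and substituting $2kp = 272 k \log^3 n / \rho$ gives the claimed bound.

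There is no real obstacle here; the only subtle point is the bookkeeping of conditional versus unconditional probabilities, i.e., keeping track that the $p$ bound in Lemma~\ref{lem:cutprob} is implicitly conditional on $\mathcal{A}$, so that Markov is applied under that conditioning before combining with $\Pr[\mathcal{A}] \geq 3/4$. The factor of $2k$ slack in Markov is exactly calibrated so that the union bound over $k$ edge classes eats only half of the remaining probability budget, leaving a clean $1/4$ in the final statement.
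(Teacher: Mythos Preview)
Your proposal is correct and follows exactly the approach the paper indicates: the paper gives no detailed proof beyond the sentence ``which follows directly from Markov's inequality and the union bounds,'' and your write-up fills in precisely those details, including the careful separation into the event $\mathcal{A}$ (holding with probability $\geq 3/4$) and the Markov-plus-union-bound step conditional on $\mathcal{A}$. The arithmetic $2kp = 272k\log^3 n/\rho$ and the final bound $\tfrac{3}{4}\cdot\tfrac{1}{2} \geq \tfrac{1}{4}$ are exactly right.
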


The corollary suggests a simple way to use \procName{splitGraph} to
provide guarantees required by Theorem~\ref{thm:graph-partition}: as
summarized in Algorithm~\ref{algo:graph-partition}, we run
\procName{splitGraph} on the input graph treating all edge classes as
one and repeat it if any of the edge classes had too many edges cut
(i.e., more than $|E_i|\frac{272 k \log^3{n}}{\rho}$). As the
corollary indicates, the number of trials is a geometric random
variable with with $p=1/4$, so in expectation, it will finish after
$4$ trials.  Furthermore, although it could go on forever in the worst
case, the probability does fall exponentially fast.

\begin{algo}[th]
  \caption{\procName{Partition}$(G = (V, E = E_1 \ucup \cdots \ucup
    E_k),\rho)$ --- Partition an input graph $G$ into components of radius
   at most $\rho$.}
  \label{algo:graph-partition}

  \begin{enumerate}[topsep=1pt, itemsep=1pt]
  \item  Let $\mathcal{C} = \procName{splitGraph}((V, \ucup E_i), \rho)$.
  \item If there is some $i$ such that $E_{i}$ has more than $|E_{i}|
    \frac{272 \cdot k \log^{3}{n}}{\rho}$ edges between components,
    start over. (Recall that $k$ was the number of edge classes.)
  \end{enumerate}
  \small Return $\mathcal{C}$.
\end{algo}

Finally, we note that properties (P1) and (P2) directly give
Theorem~\ref{thm:graph-partition}(1)--(2)---and the validation step in
\procName{Partition} ensures Theorem~\ref{thm:graph-partition}(3),
setting $c_1 = 272$. The work and depth bounds for
\procName{Partition} follow from the bounds derived for
\procName{splitGraph} and Corollary~\ref{cor:successprob}.  This
concludes the proof of Theorem~\ref{thm:graph-partition}.


\section{Parallel Low-Stretch Spanning Trees and Subgraphs}
\label{sec:lowstretch}

\newcommand{\akpwfactor}{\ensuremath{2^{O(\sqrt{\log{n} \cdot \log\log{n}})}}}

This section presents parallel algorithms for low-stretch spanning
trees and for low-stretch spanning subgraphs.  To obtain the
low-stretch spanning tree algorithm, we apply the construction of Alon
et al.~\cite{AKPW95} (henceforth, the AKPW construction), together
with the parallel graph partition algorithm from the previous section.
The resulting procedure, however, is not ideal for two reasons: the
depth of the algorithm depends on the ``spread'' $\Delta$---the ratio
between the heaviest edge and the lightest edge---and even for
polynomial spread, both the depth and the average stretch are
super-logarithmic (both of them have a \akpwfactor~term).
Fortunately, for our application, we observe that we do not need
spanning trees but merely low-stretch sparse graphs. In
Section~\ref{sec:low-stretch-subgraphs}, we describe modifications to
this construction to obtain a parallel algorithm which computes sparse
subgraphs that give us only polylogarithmic average stretch and that
can be computed in polylogarithmic depth and $\otilde(m)$ work. We
believe that this construction may be of independent interest.

\subsection{Low-Stretch Spanning Trees}
\label{sec:akpw}

Using the AKPW construction, along with the \procName{Partition}
procedure from Section~\ref{sec:partition}, we will prove the
following theorem:
\begin{theorem}[Low-Stretch Spanning Tree]
  \label{thm:parallelAKPW}
  There is an algorithm $\procName{AKPW}(G)$ which given as input a
  graph $G = (V, E, w)$, produces a spanning tree in $O(\log^{O(1)} n
  \cdot \akpwfactor \log \Delta )$ expected depth and $\otilde(m)$
  expected work such that the total stretch of all edges is bounded by
  $m \cdot \akpwfactor$.
\end{theorem}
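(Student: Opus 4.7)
The plan is to parallelize the classical AKPW hierarchical clustering scheme, using the decomposition algorithm of Section~\ref{sec:partition} as the per-phase primitive. First I would normalize edge weights to lie in $[1,\Delta]$ and bucket them geometrically into $k=\lceil\log_2\Delta\rceil$ edge classes $E_1\ucup\cdots\ucup E_k$, where $E_i$ holds edges of weight in $[2^{i-1},2^i)$. This casts the problem into the exact input format expected by \procName{Partition}, with the edge classes tracking the distinct weight scales.

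Next I would run a sequence of phases, each producing one level of a laminar cluster hierarchy whose leaves are the vertices of $G$ and whose internal nodes represent super-vertices formed by contraction. Fix $y=\akpwfactor$ as the target and set the radius parameter $\rho=\Theta(y\cdot k\log^3 n)$ so that, by Theorem~\ref{thm:graph-partition}(3), at most a $1/y$ fraction of each class of edges is cut per phase. In phase $t$, starting from the current contracted multi-graph $G^{(t)}$ with its inherited edge classes, I would:
\begin{enumerate}[topsep=1pt,itemsep=1pt]
\item invoke $\procName{Partition}(G^{(t)},\rho)$ to obtain strong-diameter clusters $C_1^{(t)},\ldots,C_{p_t}^{(t)}$ with centers $s_i^{(t)}$;
\item within each cluster, compute a BFS tree rooted at $s_i^{(t)}$ in $G^{(t)}[C_i^{(t)}]$, and add the preimages of these tree edges in $G$ to the output tree $T$;
\item contract each cluster into a super-vertex, dropping any intra-cluster edges and retaining the remaining edges (and their original classes) in $G^{(t+1)}$.
\end{enumerate}
The strong-diameter guarantee (P2) of \procName{Partition} is essential here: it ensures that the intra-cluster BFS trees, concatenated across levels, actually realize short paths in $G$ itself rather than paths that ``shortcut'' through neighboring clusters. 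Since each phase either contracts everything into one super-vertex or shrinks $|V|$ by a factor of at least $2$ (once the inter-cluster edge budget is depleted at a given weight scale), the process terminates in $O(k\cdot\log_y n)=O(\log\Delta)$ phases.

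For the stretch analysis, consider an edge $e\in E_i$ that survives as an inter-cluster edge at the end of phase $t$. Its endpoints lie in clusters of hop-radius at most $\rho$ in $G^{(t)}$, where one hop at level $t$ corresponds to a tree-path of weighted length at most $y^{t}\cdot 2^{i}$ in the original graph (a telescoping sum over the nested BFS trees). Hence $\gstretch_T(e)\le\rho\cdot y^{t-i}$. Multiplying by the surviving-fraction bound $|E_i|\cdot y^{-(t-i)}$ from the cut-probability guarantee, the contribution of class $E_i$ to the total stretch summed over all phases telescopes to $O(|E_i|\cdot\rho\cdot\log\Delta)$; choosing $y=2^{\Theta(\sqrt{\log n\log\log n})}$ balances $\rho=\widetilde O(y)$ against the number of phases so that the total stretch is $m\cdot\akpwfactor$, matching the sequential AKPW bound.

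Finally, for resources, each phase costs $O(\rho\log^2 n)=O(\akpwfactor\cdot\log^{O(1)}n)$ expected depth and $\otilde(|E(G^{(t)})|)$ expected work by Theorem~\ref{thm:graph-partition}, plus a BFS-tree computation per cluster of the same order. Summing over the $O(\log\Delta)$ phases yields the advertised $\akpwfactor\cdot\log^{O(1)}n\cdot\log\Delta$ depth, and since the total edge count across phases forms a geometric series (each phase contracts out at least a constant fraction of the remaining edges), total work is $\otilde(m)$. The main obstacle I anticipate is cleanly carrying the strong-diameter guarantee through the contraction: a BFS tree inside $G^{(t)}[C_i^{(t)}]$ is a tree of super-edges, each of which must be expanded consistently into actual $G$-paths without double-counting or creating cycles. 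I would handle this by maintaining, at every contracted super-vertex, a canonical ``portal'' vertex in $G$ and routing all inter-level paths through these portals, so that the laminar union of BFS trees is a genuine tree in $G$ and its distances inherit the phase-wise radius bound used in the stretch calculation above.
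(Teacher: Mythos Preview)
Your high-level plan---hierarchical clustering via Theorem~\ref{thm:graph-partition}, per-phase BFS trees, contraction---is the paper's. The gap is in how you handle the edge classes, and it breaks the stretch bound.

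You bucket edges into $k=\lceil\log_2\Delta\rceil$ classes and pass all of them to \procName{Partition} in every phase. The paper does two things differently. First, it buckets with base $z=4c_1 y\tau\log^3 n$, not base~$2$. Second, in iteration $j$ it calls \procName{Partition} only on $\bigcup_{i\le j} E_i^{(j)}$: class $E_j$ is \emph{introduced} only at iteration~$j$. These two choices synchronize cluster diameters with edge weights. Because every edge appearing in a BFS tree up through iteration $j$ has weight $<z^j$, the weighted radius of a super-vertex at iteration $j$ is at most $z^{j+1}$ (Fact~\ref{fact:akpw-comp-radius}); and because each class is depleted within $\tau=O(\sqrt{\log n/\log\log n})$ iterations of its introduction (Fact~\ref{fact:edgeclass-size}), at most $\tau$ classes are ever simultaneously active, so the radius parameter $z/4$ depends only on $n$, not on $\Delta$.

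Without this synchronization your central claim---``one hop at level $t$ corresponds to a tree-path of weighted length at most $y^t\cdot 2^i$''---does not hold. The weighted diameter of a level-$t$ super-vertex is a single number; it cannot depend on the class $i$ of the particular edge you happen to be stretching. If all classes are present in phase~$1$, the BFS tree inside a cluster may route an $e\in E_1$ through an edge of weight $\approx\Delta$, giving $\gstretch_T(e)\approx\Delta$ immediately, so the bound $\gstretch_T(e)\le\rho\cdot y^{t-i}$ is unjustified and the telescoping collapses. Even if you additionally withheld class $E_i$ until phase $i$, base-$2$ bucketing would still fail: the diameter recursion gives $D_j\approx\rho^{j}$, and an $e\in E_i$ contracted at phase $i+\tau$ would get stretch $\approx\rho^{i+\tau}/2^{i}=(\rho/2)^{i}\rho^{\tau}$, unbounded in $i$. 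Matching the bucket base to the hop radius is precisely what makes diameters and weights grow in lockstep and yields Lemma~\ref{lemma:class-cost}. A secondary casualty of feeding $k=\log\Delta$ classes to \procName{Partition} is that your $\rho=\Theta(yk\log^3 n)$ already carries a $\log\Delta$ factor, which would leak into both the per-phase depth and the stretch even if the rest were repaired.
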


\begin{algo}[h]
  \caption{\procName{AKPW}$(G = (V, E, w))$ --- a low-stretch spanning tree
    construction.}
  \label{algo:lowstretch-tree-AKPW}
  \small
  %





  %
  \begin{enumerate}[label=\roman*., topsep=0pt, itemsep=0pt, parsep=0pt,
    labelindent=2pt, leftmargin=15pt]

  \item Normalize the edges so that $\min \{ w(e) : e \in E \} = 1$.

  \item Let $y = 2^{\sqrt{6\log n \cdot \log\log n}}$,
    $\tau = \lceil 3 \log(n)/\log y \rceil$,
    $z = 4c_1y\tau\log^3 n$.
    Initialize $T = \emptyset$.

  \item Divide $E$ into $E_1, E_2, \dots$, where $E_i = \{e \in E
    \mid w(e) \in [z^{i-1}, z^{i})\}$. \\
    Let $E^{(1)} = E$ and $E^{(1)}_i = E_i$ for all $i$.

  \item
    For $j = 1, 2, \dots,$ until the graph is exhausted,
    \begin{enumerate}[label=\arabic*., topsep=2pt, itemsep=0pt]
    \item $(C_1, C_2, \dots, C_p) = \procName{Partition}((V^{(j)},
      \ucup_{i\leq j} E^{(j)}_i), z/4)$
    \item Add a BFS tree of each component to $T$.
    \item Define graph $(V^{(j+1)}, E^{(j+1)})$ by contracting all
      edges within the components and removing all self-loops (but
      maintaining parallel edges). Create $E_i^{(j+1)}$ from
      $E_i^{(j)}$ taking into account the contractions.
   \end{enumerate}
  \item Output the tree $T$.
  \end{enumerate}
\end{algo}

Presented in Algorithm~\ref{algo:lowstretch-tree-AKPW} is a
restatement of the AKPW algorithm, except that here we will use our
parallel low-diameter decomposition for the partition step.  In words,
iteration $j$ of Algorithm~\ref{algo:lowstretch-tree-AKPW} looks at a
graph $(V^{(j)}, E^{(j)})$ which is a minor of the original graph
(because components were contracted in previous iterations, and
because it only considers the edges in the first $j$ weight
classes). It uses $\procName{Partition}((V, \ucup_{j \leq k} E_j),
z/4)$ to decompose this graph into components such that the hop radius
is at most $z/4$ and each weight class has only $1/y$ fraction of its
edges crossing between components. (Parameters $y,z$ are defined in
the algorithm and are slightly different from the original settings in
the AKPW algorithm.) It then shrinks each of the components into a
single node (while adding a BFS tree on that component to $T$), and
iterates on this graph.  Adding these BFS trees maintains the
invariant that the set of original nodes which have been contracted
into a (super-)node in the current graph are connected in $T$; hence,
when the algorithm stops, we have a spanning tree of the original
graph---hopefully of low total stretch.

%

We begin the analysis of the total stretch and running time by proving
two useful facts:
\begin{fact}
  \label{fact:edgeclass-size}
  The number of edges $|E_i^{(j)}|$ is at most $|E_i|/y^{j -i}$.
\end{fact}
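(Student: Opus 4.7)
The plan is to prove this fact by induction on $j$. For $j \leq i$ the bound is trivial, since contractions in previous iterations can only delete class-$i$ edges (as self-loops) rather than create them, so $|E_i^{(j)}| \leq |E_i|$, and $|E_i|/y^{j-i} \geq |E_i|$ because $y \geq 1$.

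For the inductive step with $j > i$, the key observation is that $|E_i^{(j)}|$ counts exactly those edges of $E_i^{(j-1)}$ that straddle two components of the partition produced in iteration $j-1$, since intra-component edges become self-loops under contraction and are discarded. That iteration calls \procName{Partition} on $(V^{(j-1)}, \ucup_{t \leq j-1} E_t^{(j-1)})$ with radius $\rho = z/4$ and $k = j-1$ edge classes; since $i \leq j-1$, property~(3) of Theorem~\ref{thm:graph-partition} applied to class $i$ gives
\[
|E_i^{(j)}| \;\leq\; |E_i^{(j-1)}| \cdot \frac{c_1 (j-1) \log^3 n}{z/4} \;=\; |E_i^{(j-1)}| \cdot \frac{j-1}{y\tau},
\]
after substituting $z = 4 c_1 y \tau \log^3 n$. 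Combined with $j - 1 \leq \tau$, this shrinkage is at most $1/y$, and the inductive hypothesis yields $|E_i^{(j)}| \leq |E_i^{(j-1)}|/y \leq |E_i|/y^{j-i}$.

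The main point requiring care is confirming $j - 1 \leq \tau$. This is not a genuine obstacle: $\tau = \lceil 3\log n/\log y\rceil$ is chosen to exceed $\log_y m$, so the fact itself forces every class to be exhausted within $\tau$ iterations, and hence the outer loop of the algorithm cannot meaningfully reach $j > \tau + 1$. One should also keep track that the relevant ``$k$'' in Theorem~\ref{thm:graph-partition} at iteration $j - 1$ is $j - 1$ (the number of classes actually present in the partition's input edge set), not a universal constant or the total weight-class spectrum; this linear dependence on $k$ is precisely what the parameter $z$ was designed to absorb, making the per-iteration shrinkage factor uniformly bounded by $1/y$.
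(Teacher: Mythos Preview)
Your inductive skeleton and the per-iteration shrinkage computation are exactly right; the gap is in the last paragraph, where you assert that the outer loop ``cannot meaningfully reach $j > \tau+1$.'' That is false: the algorithm has $\lceil \log_z \Delta \rceil$ weight classes in all, class $E_i$ only enters the partition at iteration $i$, and the proof of Theorem~\ref{thm:parallelAKPW} explicitly counts $O(\log \Delta + \tau)$ iterations. So for large $\Delta$ the index $j$ far exceeds $\tau$, and your bound $k = j-1 \leq \tau$ breaks.

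What the paper actually argues is not that $j$ stays small, but that the number of \emph{non-empty} classes in play at any iteration is at most $\tau$, so one may take $k \leq \tau$ in Theorem~\ref{thm:graph-partition}. This is a bootstrap: for $j \leq \tau$ there are at most $j \leq \tau$ classes trivially; for $j = \tau+1$, the fact applied to class~$1$ (which is already established for $j \leq \tau$) gives $|E_1^{(\tau+1)}| \leq |E_1|/y^{\tau} < 1$, so $E_1$ has vanished and again only $\tau$ classes survive; and this invariant propagates forward. Your induction needs to carry this ``at most $\tau$ live classes'' invariant alongside the edge-count bound, rather than trying to cap $j$ itself.
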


\begin{proof}
  If we could ensure that the number of weight classes in play at any
  time is at most $\tau$, the number of edges in each class would fall
  by at least a factor of $\frac{c_1 \tau \log^3 n}{z/4} = 1/y$ by
  Theorem~\ref{thm:graph-partition}(3) and the definition of $z$, and
  this would prove the fact. Now, for the first $\tau$ iterations, the
  number of weight classes is at most $\tau$ just because we consider
  only the first $j$ weight classes in iteration $j$. Now in iteration
  $\tau+1$, the number of surviving edges of $E_1$ would fall to
  $|E_1|/y^{\tau} \leq |E_1|/n^3 < 1$, and hence there would only be
  $\tau$ weight classes left. It is easy to see that this invariant
  can be maintained over the course of the algorithm.
\end{proof}


\begin{fact}
  \label{fact:akpw-comp-radius}
  In iteration $j$, the radius of a component according to edge
  weights (in the expanded-out graph) is at most $z^{j+1}$.
\end{fact}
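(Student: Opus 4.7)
The plan is to prove Fact~\ref{fact:akpw-comp-radius} by induction on $j$, tracking how the weighted radius in the expanded-out (original) graph accumulates as we iterate. Let $R_j$ denote the maximum weighted radius, measured in the original graph, of any component produced in iteration $j$. We want to show $R_j \le z^{j+1}$.

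For the base case $j=1$, only edges of $E_1$ participate, which have weight $<z$. Partition returns components of hop-radius $\le z/4$ in this unweighted view, and since these hops correspond to actual edges of the original graph (no supernodes yet), the weighted radius is $< (z/4)\cdot z = z^2/4 \le z^2$.

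For the inductive step, assume $R_{j-1} \le z^j$. The graph $(V^{(j)}, \cup_{i\le j} E_i^{(j)})$ fed to Partition has edges of original weight $< z^j$ (since only classes $E_1,\dots,E_j$ are included), and each vertex of $V^{(j)}$ is a supernode of weighted radius $\le R_{j-1}$ in the original graph. Consider a component $C$ with BFS center supernode $s^\star$, and fix a root vertex inside $s^\star$. By Theorem~\ref{thm:graph-partition}(2) (applied inside Partition with parameter $z/4$), every supernode in $C$ lies within $h\le z/4$ hops of $s^\star$ along a path that stays in the induced subgraph on $C$. Expanding this hop-path back to the original graph, we traverse at most $h$ crossing edges of weight $<z^j$ and at most $h{+}1$ supernodes; the initial and terminal supernodes contribute radius $\le R_{j-1}$ each, while each intermediate supernode contributes its diameter, which is $\le 2R_{j-1}$. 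Summing,
\begin{equation*}
R_j \;\le\; h\cdot z^j + (2h{+}1)\, R_{j-1} \;\le\; \tfrac{z}{2}\cdot z^j + (\tfrac{z}{2}{+}1)\, z^j \;=\; (z + 1)\, z^j \;\le\; z^{j+1},
\end{equation*}
where the last inequality uses that $z = 4c_1 y\tau \log^3 n$ is much larger than $2$ (in particular $z\ge 2$ suffices after slightly sharper bookkeeping, and any reasonable regime makes this go through).

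The main obstacle, and the only subtle point, is the accounting between crossing edges and traversals through intermediate supernodes: one must remember that a path of $h$ hops in the contracted graph actually visits up to $h{+}1$ supernodes in the original graph, and that distance inside a supernode can be up to twice its radius (diameter). Once this is set up cleanly and combined with the inductive hypothesis $R_{j-1}\le z^j$ and the fact that an edge from class $E_i$ with $i\le j$ has weight $<z^j$, the geometric blow-up by a factor of $z$ per iteration is immediate.
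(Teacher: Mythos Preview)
Your approach is essentially identical to the paper's: induct on $j$, expand a hop-path of length $h\le z/4$ in the contracted graph into at most $h$ crossing edges of weight $<z^j$ plus traversals through at most $h{+}1$ supernodes, each of diameter $\le 2R_{j-1}\le 2z^j$. One arithmetic slip: the edge contribution is $h\cdot z^j \le (z/4)z^j$, not $(z/2)z^j$, so the correct bound is $(3z/4+1)z^j \le z^{j+1}$, which holds for $z\ge 4$ (the paper counts all $z/4{+}1$ supernodes at full diameter and needs $z\ge 8$); as displayed, your inequality $(z{+}1)z^j \le z^{j+1}$ is never true, though your hedge about ``sharper bookkeeping'' is exactly right.
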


\begin{proof}
  The proof is by induction on $j$. First, note that by
  Theorem~\ref{thm:graph-partition}(2), each of the clusters computed
  in any iteration $j$ has edge-count radius at most $z/4$. Now the
  base case $j=1$ follows by noting that each edge in $E_1$ has weight
  less than $z$, giving a radius of at most $z^2/4 < z^{j+1}$. Now
  assume inductively that the radius in iteration $j - 1$ is at most
  $z^{j}$. Now any path with $z/4$ edges from the center to some node
  in the contracted graph will pass through at most $z/4$ edges of
  weight at most $z^j$, and at most $z/4 + 1$ supernodes, each of
  which adds a distance of $2z^{j}$; hence, the new radius is at most
  $z^{j+1}/4 + (z/4 + 1)2z^j \leq z^{j+1}$ as long as $z \geq 8$.
\end{proof}

Applying these facts, we bound the total stretch of an edge class.

\begin{lemma}
  \label{lemma:class-cost}
  For any $i \geq 1$,
  $\gstretch_{T}(E_i) \leq 4y^2|E_i|(4c_1\tau\log^3{n})^{\tau+1}$.
\end{lemma}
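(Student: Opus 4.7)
The plan is to charge the stretch of each edge $e \in E_i$ to the iteration at which its endpoints are first absorbed into the same component, then sum over iterations using the count bound from Fact~\ref{fact:edgeclass-size} and the radius bound from Fact~\ref{fact:akpw-comp-radius}.

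First I would associate to each $e = (u,v) \in E_i$ an index $j^*(e)$, namely the first iteration $j$ such that $u$ and $v$ lie in a common component $C$ produced by \procName{Partition}. Since the BFS tree of $C$ is added to $T$ and every contracted supernode touched by that BFS tree was itself a component from an earlier iteration (already connected in $T$), the $u$-$v$ path in $T$ stays inside the expanded-out version of $C$. By Fact~\ref{fact:akpw-comp-radius} this expanded component has edge-weight radius at most $z^{j^*+1}$, so $d_T(u,v) \le 2 z^{j^*+1}$. Combined with $w(e) \ge z^{i-1}$ (since $e \in E_i$), this yields the per-edge bound $\gstretch_T(e) \le 2 z^{j^*(e)-i+2}$.

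Next, setting $N_j = |\{e \in E_i : j^*(e) = j\}|$, I would split $\gstretch_T(E_i) \le \sum_j N_j \cdot 2 z^{j-i+2}$ into the ranges $j < i$ and $j \ge i$. For $j < i$ the weight classes containing $e$ are not yet involved in partitioning, but the exponent $j - i + 2 \le 1$ makes each surviving edge cheap: using $\sum_{j<i} N_j \le |E_i|$ and $2z^{j-i+2} \le 2z$ gives a contribution of at most $2z|E_i|$. For $j \ge i$, Fact~\ref{fact:edgeclass-size} gives $N_j \le |E_i^{(j)}| \le |E_i|/y^{j-i}$, and moreover $|E_i^{(i+\tau)}| \le |E_i|/y^\tau \le n^2/n^3 < 1$ by the choice of $\tau$, so only iterations $j = i, i+1, \dots, i+\tau-1$ contribute. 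Thus this part is bounded by
\[
\sum_{l=0}^{\tau-1} \frac{|E_i|}{y^l}\cdot 2 z^{l+2} \;=\; 2 z^2 |E_i|\sum_{l=0}^{\tau-1}(z/y)^l.
\]

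The main piece of work is then to evaluate this geometric sum. Since $z/y = 4c_1 \tau \log^3 n$ is large, the sum is at most $2(z/y)^{\tau-1}$, giving a bound of $4 z^2 (z/y)^{\tau-1}|E_i|$; rewriting $z^2 = y^2(z/y)^2$ turns this into $4 y^2 (z/y)^{\tau+1}|E_i|$. Adding the $2z|E_i|$ term from the $j<i$ regime is absorbed into the same expression (since $2z = 2y(z/y) \ll 2 y^2 (z/y)^{\tau+1}$ for $\tau \ge 1$), yielding the stated bound $\gstretch_T(E_i) \le 4y^2|E_i|(4c_1\tau \log^3 n)^{\tau+1}$. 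The only subtlety I expect is lining up constants in the geometric sum so that the exponent lands at $\tau+1$ rather than $\tau+2$; the gain comes precisely from truncating the sum at $l = \tau - 1$ using Fact~\ref{fact:edgeclass-size} together with the definition of $\tau$.
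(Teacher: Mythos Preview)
Your proposal is correct and follows the paper's approach: bound the per-edge stretch by $2z^{j-i+2}$ via Fact~\ref{fact:akpw-comp-radius}, bound the survivors at iteration $j$ by $|E_i|/y^{j-i}$ via Fact~\ref{fact:edgeclass-size}, and sum the geometric series over $j=i,\dots,i+\tau-1$. The paper does not treat your $j<i$ case separately; it simply starts the sum at $j=i$ and uses the full $|E_i|$ as the count there, which implicitly charges every edge absorbed before iteration $i$ at stretch $2z^2\ge 2z$. If you want the constant $4$ to land exactly, that folding trick is cleaner than your final ``absorption'' step, which as written has already spent the factor of~$2$ on the bound $\sum_{l<\tau}(z/y)^l\le 2(z/y)^{\tau-1}$.
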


\begin{proof}
  Let $e$ be an edge in $E_i$ contracted during iteration $j$. Since $e
  \in E_i$, we know $w(e) > z^{i-1}$. By
  Fact~\ref{fact:akpw-comp-radius}, the path connecting the two
  endpoints of $e$ in $F$ has distance at most $2z^{j+1}$.  Thus,
  $\gstretch_{T}(e) \leq 2z^{j+1}/z^{i-1} = 2z^{j -i + 2}$.
  Fact~\ref{fact:edgeclass-size} indicates that the number of such edges
  is at most $|E^{(j)}_i| \leq |E_i|/y^{j - i}$.  We conclude that
  \begin{align*}
  \gstretch_{T}(E_i)
  &\leq  \sum_{j=i}^{i+\tau-1} 2z^{j-i+2}|E_i|/y^{j - i}\\
  &\leq 4y^2|E_i|(4c_1\tau\log^3{n})^{\tau+1}
  \end{align*}
\end{proof}


\begin{proof}[of Theorem~\ref{thm:parallelAKPW}]
  Summing across the edge classes gives the promised bound on stretch.
  Now there are $\lceil \log_z \Delta \rceil$ weight classes $E_i$'s
  in all, and since each time the number of edges in a (non-empty)
  class drops by a factor of $y$, the algorithm has at most $O(\log
  \Delta + \tau)$ iterations. By Theorem~\ref{thm:graph-partition} and
  standard techniques, each iteration does $O(m\log^2 n)$ work and has
  $O(z\log^2 n) =O(\log^{O(1)} n \cdot \akpwfactor)$ depth in
  expectation.
\end{proof}

\subsection{Low-Stretch Spanning Subgraphs}
\label{sec:low-stretch-subgraphs}

We now show how to alter the parallel low-stretch spanning tree
construction from the preceding section to give a low-stretch spanning
\emph{subgraph} construction that has no dependence on the ``spread,''
and moreover has only polylogarithmic stretch.  This comes at the cost
of obtaining a sparse subgraph with $n-1 + O(m/\polylog n)$ edges instead
of a tree, but suffices for our solver application.  The two main
ideas behind these improvements are the following: Firstly, the number
of surviving edges in each weight class decreases by a logarithmic
factor in each iteration; hence, we could throw in all surviving edges
after they have been whittled down in a constant number of
iterations---this removes the factor of \akpwfactor from both the
average stretch and the depth.  Secondly, if $\Delta$ is large, we
will identify certain weight-classes with $O(m/\polylog n)$ edges, which
by setting them aside, will allow us to break up the chain of
dependencies and obtain $O(\polylog n)$ depth; these edges will be thrown
back into the final solution, adding $O(m/\polylog n)$ extra edges (which
we can tolerate) without increasing the average stretch.

\subsubsection{The First Improvement}

\newcommand{\kay}{\lambda}

Let us first show how to achieve polylogarithmic stretch with an
ultra-sparse subgraph.  Given parameters $\kay \in \Z_{> 0}$ and
$\beta \geq c_2\log^3 n$ (where $c_2 = 2\cdot
(4c_1(\kay+1))^{\frac12(\kay-1)}$), we obtain the new algorithm
$\procName{SparseAKPW}(G, \kay, \beta)$ by modifying
Algorithm~\ref{algo:lowstretch-tree-AKPW} as follows:
\begin{enumerate}[label=(\arabic*), topsep=2pt, itemsep=1pt]

\item use the altered parameters $y = \frac1{c_2}\beta/\log^3 {n}$ and
  $z = 4c_1y(\kay+1)\log^3 n$;

\item in each iteration $j$, call
  $\procName{Partition}$ with at most $\kay + 1$ edge classes---keep
  the $\kay$ classes $E_j^{(j)}, E_{j-1}^{(j)}, \ldots,
  E_{j-\kay+1}^{(j)}$, but then define a ``generic bucket'' $E_0^{(j)}
  := \cup_{j' \leq j-\kay} E_{j'}^{(j)}$ as the last part of the
  partition; and

\item finally, output not just the tree $T$ but the
  subgraph $\widehat{G} = T \cup (\cup_{i \geq 1}E^{(i+\kay)}_i)$.
\end{enumerate}

\begin{lemma}
\label{lem:sparse-akpw}
Given a graph $G$, parameters $\kay \in \Z_{> 0}$ and $\beta \geq
c_2\log^3 n$ (where $c_2 = 2\cdot (4c_1(\kay+1))^{\frac12(\kay-1)}$) the
algorithm $\procName{SparseAKPW}(G, \kay, \beta)$ outputs a subgraph of
$G$ with at most $n - 1 + m ( c_2 (\log^3 {n}/\beta))^\kay$ edges and total
stretch at most $m\beta^2 \log^{3\kay + 3} {n}$. Moreover, the expected
work is $\otilde(m)$ and expected depth is
$O((c_1\beta/c_2) \kay \log^2 n (\log \Delta + \log n))$.
\end{lemma}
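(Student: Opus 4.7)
My plan is to split the lemma into four obligations and handle them in order, each leveraging the key input to Theorem~\ref{thm:graph-partition}: since \procName{Partition} is called with $\kay+1$ edge classes at radius $z/4$, every input class loses at most a $1/y$ fraction of its edges per iteration, by the choice $z = 4c_1 y(\kay+1)\log^3 n$.

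First, the edge-count bound. I would mirror Fact~\ref{fact:edgeclass-size}, but noting the crucial point that each $E_i$ remains a \emph{separate} input class for exactly $\kay$ iterations (from iteration $i$ through $i+\kay-1$), after which it is merged into the generic bucket. Consequently $|E_i^{(i+\kay)}| \leq |E_i|/y^\kay$ by induction, and $\sum_i |E_i^{(i+\kay)}| \leq m/y^\kay = m(c_2\log^3 n/\beta)^\kay$, which together with the $n-1$ tree edges gives the claim.

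Next, the stretch bound. I would re-establish the analog of Fact~\ref{fact:akpw-comp-radius}---that the weighted radius in iteration $j$ is at most $z^{j+1}$---which still holds because the new $z$ exceeds $8$; the induction is unchanged. Then, as in Lemma~\ref{lemma:class-cost}, an edge $e\in E_i$ contracted during iteration $j$ has stretch at most $2z^{j-i+2}$. The new ingredient is that any edge of $E_i$ surviving to iteration $i+\kay$ is added directly to $\widehat G$ (as part of $E_i^{(i+\kay)}$) and therefore contributes stretch $1$ rather than continuing to accrue; so the sum ranges only over $j\in\{i,\ldots,i+\kay-1\}$, giving $\gstretch_{\widehat G}(E_i) \leq 2z^2 |E_i|\sum_{s=0}^{\kay-1}(z/y)^s + |E_i^{(i+\kay)}|$. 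Substituting $z/y = 4c_1(\kay+1)\log^3 n$ and $c_2^2 = 4(4c_1(\kay+1))^{\kay-1}$, the dominant term simplifies to $O(\kay^3 c_1^2 \beta^2 \log^{3\kay-3}{n})|E_i|$, comfortably under the claimed per-edge-class bound $|E_i|\beta^2\log^{3\kay+3}{n}$; summing over $i$ completes the stretch proof.

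For the work and depth bounds, I would amortize \procName{Partition}'s per-iteration cost $O(|E^{(j)}|\log^2 n)$ against a total-edge-iteration count. For the recent-class portion, $\sum_{j=i}^{i+\kay-1}|E_i^{(j)}|$ is geometric with ratio $1/y$, summing to $O(|E_i|)$ per class. For the generic bucket, the recurrence $|E_0^{(j+1)}| \leq |E_0^{(j)}|/y + |E_{j-\kay+1}^{(j)}|/y$ with $|E_{j-\kay+1}^{(j)}|\leq |E_{j-\kay+1}|/y^{\kay-1}$ telescopes to $\sum_j |E_0^{(j)}| = O(m/y^\kay) = O(m)$. Thus $\sum_j |E^{(j)}|=O(m)$ and total work is $\otilde(m)$. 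For depth, each \procName{Partition} call runs in $O((z/4)\log^2 n) = O((c_1\beta/c_2)\kay\log^2 n)$ depth after substitution; the iteration count is $O(\log_z\Delta)$ for all classes to appear, plus $O(\log_y m)=O(\log n)$ more for the remaining graph to drain to empty, yielding $O(\log\Delta+\log n)$ iterations and the claimed depth.

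The step I expect to be the main obstacle is the bookkeeping around the generic bucket: I need the per-class geometric decay (for the edge-count and stretch arguments, where only the $\kay$ separate-class iterations contribute) to coexist with the aggregate-bucket decay (for the work argument, where telescoping the bucket recurrence avoids a spurious $\log\Delta$ factor). Verifying that the Partition guarantee applied to $\kay+1$ classes simultaneously delivers both drops---and that the transition of each $E_i$ from separate class to bucket member does not disrupt either inductive invariant---is the technically delicate part; everything else is essentially mechanical substitution into the AKPW template.
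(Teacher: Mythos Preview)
Your proposal is correct and follows essentially the same approach as the paper: the edge-count bound via the $\kay$-iteration analog of Fact~\ref{fact:edgeclass-size}, the stretch bound via Fact~\ref{fact:akpw-comp-radius} (unchanged) together with the observation that edges surviving to iteration $i+\kay$ get stretch~$1$, and the depth bound via $(\log\Delta+\tau)$ iterations at $O(z\log^2 n)$ each. Your treatment of the work bound is in fact more careful than the paper's---the paper simply asserts ``the work remains the same'' without tracking the generic bucket, whereas your telescoping recurrence for $|E_0^{(j)}|$ explicitly verifies that merging old classes into the bucket does not inflate the total edge-iteration count beyond $O(m)$.
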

\begin{proof}
  The proof parallels that for Theorem~\ref{thm:parallelAKPW}.
  Fact~\ref{fact:akpw-comp-radius} remains unchanged.  The claim from
  Fact~\ref{fact:edgeclass-size} now remains true only for $j \in \{i,
  \ldots, i + \kay - 1\}$; after that the edges in $E_i^{(j)}$ become
  part of $E_0^{(j)}$, and we only give a cumulative guarantee on the
  generic bucket. But this does hurt us: if $e \in E_i$ is contracted
  in iteration $j \leq i + \kay - 1$ (i.e., it lies within a component
  formed in iteration $j$), then $\gstretch_{\widehat{G}}(e) \leq
  2z^{j-i+2}$. And the edges of $E_i$ that survive till iteration $j
  \geq i + \kay$ have stretch $1$ because they are eventually all
  added to $\widehat{G}$; hence we do not have to worry that they
  belong to the class $E_0^{(j)}$ for those iterations. Thus,
  \[\gstretch_{\widehat{G}}(E_i) \leq \sum_{j=i}^{i + \kay -1} 2z^{j- i
    + 2}\cdot |E_i|/y^{j - i} \leq 4y^2 (\frac{z}{y})^{\kay -
    1}|E_i|.\]

  Summing across the edge classes gives $\gstretch_{\widehat{G}}(E)
  \leq 4y^2(\frac{z}y)^{\kay-1}m$, which simplifies to $O(m \beta^2
  \log^{3\kay +3} n)$.  Next, the number of edges in the output
  follows directly from the fact $T$ can have at most $n - 1$ edges,
  and the number of extra edges from each class is only a $1/y^{\kay}$
  fraction (i.e., $|E^{(i + \kay)}_i| \leq |E_i|/y^\kay$ from
  Fact~\ref{fact:edgeclass-size}).  Finally, the work remains the
  same; for each of the $(\log \Delta + \tau)$ distance scales the
  depth is still $O(z \log^2 n)$, but the new value of $z$ causes this
  to become $O((c_1 \beta/c_2) \kay \log^2 n)$.
\end{proof}

\subsubsection{The Second Improvement}

The depth of the \procName{SparseAKPW} algorithm still depends on $\log
\Delta$, and the reason is straightforward: the graph $G^{(j)}$ used in
iteration $j$ is built by taking $G^{(1)}$ and contracting edges in each
iteration---hence, it depends on all previous iterations. However, the
crucial observation is that if we had $\tau$ consecutive weight classes
$E_i$'s which are empty, we could break this chain of dependencies at this
point. However, there may be no empty weight classes; but having weight
classes with relatively few edges is enough, as we show next.

\begin{fact}
  \label{fact:twoparts}
  Given a graph $G = (V,E)$ and a subset of edges $F \subseteq E$, let
  $G' = G \setminus F$ be a potentially disconnected graph. If
  $\widehat{G}'$ is a subgraph of $G'$ with total stretch
  $\gstretch_{\widehat{G}'}(E(G')) \leq D$, then the total stretch of
$E$ on $\widehat{G} := \widehat{G}' \cup F$ is at most $|F| + D$.
\end{fact}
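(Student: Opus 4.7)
The plan is to split the stretch sum $\gstretch_{\widehat{G}}(E) = \sum_{e\in E}\gstretch_{\widehat{G}}(e)$ along the partition $E = F \ucup E(G')$ and bound the two pieces separately, exploiting that $\widehat{G}' \subseteq \widehat{G}$ and $F \subseteq \widehat{G}$.

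First, I would handle the edges in $F$. Each such edge $e = \{u,v\}$ lies in $\widehat{G}$ by construction, so the single-edge path $e$ itself witnesses $d_{\widehat{G}}(u,v) \leq w(e)$, giving $\gstretch_{\widehat{G}}(e) \leq 1$. Summing over $e \in F$ contributes at most $|F|$ to the total stretch.

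Next, I would handle the edges in $E(G') = E \setminus F$. Since $\widehat{G}' \subseteq \widehat{G}$, any $u$-$v$ path in $\widehat{G}'$ is also a path in $\widehat{G}$, so for every $e = \{u,v\} \in E(G')$ we have $d_{\widehat{G}}(u,v) \leq d_{\widehat{G}'}(u,v)$, hence $\gstretch_{\widehat{G}}(e) \leq \gstretch_{\widehat{G}'}(e)$. Summing over $e \in E(G')$ and using the hypothesis gives a contribution of at most $\gstretch_{\widehat{G}'}(E(G')) \leq D$. Adding the two bounds yields $\gstretch_{\widehat{G}}(E) \leq |F| + D$, which is the claim.

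There is essentially no obstacle here: the fact is a bookkeeping observation, and the only subtlety worth flagging is that $G'$ may be disconnected, but this causes no trouble because the hypothesis concerns $\gstretch_{\widehat{G}'}(E(G'))$ only for edges $e \in E(G')$, whose endpoints are automatically in the same component of $G'$ (and of $\widehat{G}'$, assuming the stretch is defined as finite for those edges). Consequently, no infinities arise in either sum.
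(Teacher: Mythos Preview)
Your argument is correct and is exactly the intended justification: split $E$ into $F$ and $E(G')$, use that $F \subseteq \widehat{G}$ to get stretch at most $1$ on each $F$-edge, and use $\widehat{G}' \subseteq \widehat{G}$ to bound the remaining edges by $D$. The paper in fact states this as a \emph{Fact} without proof, treating it as an immediate bookkeeping observation; your write-up simply spells out that observation.
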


Consider a graph $G = (V,E, w)$ with edge weights $w(e) \geq 1$, and let
$E_i(G) := \{e \in E(G) \mid w(e) \in [z^{i-1}, z^{i})\}$ be the weight
classes.  Then, $G$ is called $(\gamma, \tau)$-\emph{well-spaced} if
there is a set of \emph{special} weight classes $\{E_i(G)\}_{i \in I}$
such that for each $i \in I$, (a) there are at most $\gamma$ weight
classes before the following special weight class $\min\{i' \in I \cup
\{\infty\} \mid i' > i\}$, and (b) the $\tau$ weight classes
$E_{i-1}(G), E_{i-2}(G), \dots, E_{i - \tau}(G)$ preceding $i$ are all
empty.

\begin{lemma}
  \label{lem:nukeintervals}
  Given any graph $G = (V,E)$, $\tau \in \Z_+$, and $\theta \leq 1$,
  there exists a graph $G' = (V,E')$ which is $(4\tau/\theta,
  \tau)$-well-spaced, and $|E' \setminus E| \leq \theta \cdot |E|$.
  Moreover, $G'$ can be constructed in $O(m)$ work and $O(\log n)$
  depth.
\end{lemma}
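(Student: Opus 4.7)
The plan is to build $G'$ by removing a carefully chosen set $F$ of edges from $G$ so that $E' = E \setminus F$ has $\tau$ consecutive empty weight classes appearing at regular intervals, with $|F| \leq \theta \cdot |E|$. (I interpret the stated bound $|E' \setminus E| \leq \theta|E|$ as $|E \setminus E'|$, matching the way this lemma is invoked via Fact~\ref{fact:twoparts} in the proof of Lemma~\ref{lem:sparse-akpw}; equivalently, one may view removed edges as a subset to be returned to the final subgraph.) The special positions will form a periodic set with period $M := \lceil 4\tau/\theta \rceil$, and the classes to be emptied will be the $\tau$ positions immediately preceding each special one.

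Concretely, for each offset $s \in \{\tau+1, \tau+2, \dots, \tau+M\}$ I define the candidate special set $I(s) := \{s + kM : k \geq 0\}$ and the ``forbidden'' index set $F(s) := \{i - j : i \in I(s),\ j \in [1, \tau]\}$. If every edge whose class index lies in $F(s)$ is removed, then the resulting graph has its $\tau$ classes just before each $i \in I(s)$ empty and its consecutive special classes exactly $M$ apart, so it is $(4\tau/\theta, \tau)$-well-spaced with special set $I(s)$ by construction.

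The main step is a simple counting argument: each class index $i$ lies in $F(s)$ for exactly $\tau$ of the $M$ offsets (one $s \pmod M$ per $j \in [1,\tau]$, via $s \equiv i + j \pmod M$), so
\begin{equation*}
\sum_{s} \Bigl|\bigcup_{i \in F(s)} E_i\Bigr| \;=\; \tau \cdot |E|,
\end{equation*}
and some offset $s^*$ achieves $\bigl|\bigcup_{i \in F(s^*)} E_i\bigr| \leq \tau|E|/M \leq (\theta/4)|E| \leq \theta|E|$. Setting $F := \bigcup_{i \in F(s^*)} E_i$ and $E' := E \setminus F$ gives the required $G'$.

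For the parallel implementation I would first compute each edge's class index in $O(m)$ work and $O(1)$ depth, then aggregate to obtain the counts $c_i$ of the nonempty classes by a parallel semisort in $O(m)$ work and $O(\log n)$ depth. The per-offset quantity $\sum_{i \in F(s)} c_i$ can be evaluated for all $M$ offsets simultaneously: build $\tau$-wide sliding-window sums over the at-most-$m$ nonempty classes and their immediate $\tau$-neighborhoods, then reduce those windows along the $M$ residues mod $M$ using standard parallel prefix primitives, all in $O(m)$ work and $O(\log n)$ depth. The main subtlety to watch for is that the raw index range can be as large as $\log \Delta$, so the entire computation must be indexed by the nonempty classes (of which there are at most $m$) rather than by raw weight-class positions; once this is handled, selecting the best $s^*$ and filtering out the corresponding edges is another $O(m)$-work, $O(\log n)$-depth pass.
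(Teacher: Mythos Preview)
Your argument is correct and takes a genuinely different route from the paper. The paper partitions the weight classes into consecutive blocks of size $\lceil \tau/\theta \rceil$ and, \emph{within each block independently}, uses a local averaging argument to locate a window of $\tau$ consecutive classes carrying at most a $\theta$-fraction of that block's edges; the special index is then placed just past each chosen window, so the inter-special gap is at most $2\lceil \tau/\theta\rceil - (\tau-1)$. You instead fix a single period $M=\lceil 4\tau/\theta\rceil$, consider the $M$ phase shifts of a periodic $\tau$-window pattern, and use a \emph{global} averaging (each class lies in at most $\tau$ of the $M$ patterns) to select one phase removing at most $\tau|E|/M\le(\theta/4)|E|$ edges. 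Your construction yields a perfectly periodic special set and in fact removes a factor-$4$ fewer edges than required; the paper's construction has irregular gaps but works with the smaller block size $\lceil \tau/\theta\rceil$, which makes the parallel implementation a bit more direct since blocks are processed independently and the number of candidate windows in a block is bounded by its nonempty-class count.

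Two small points to tighten. First, your ``exactly $\tau$ offsets'' is only ``at most $\tau$'' for indices $i<\tau$ near the left boundary (some of the $\tau$ residues correspond to a negative $k$), but the inequality $\sum_s |\cdot|\le \tau|E|$ you actually need still holds. Second, $M=\lceil 4\tau/\theta\rceil$ can exceed $4\tau/\theta$ by one, so to land exactly at $(4\tau/\theta,\tau)$-well-spaced you should either take $M=\lfloor 4\tau/\theta\rfloor$ (still $>\tau$ since $\theta\le 1$) or note that the extra $+1$ is harmless downstream.
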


\begin{proof}
  Let $\delta = \frac{\log{\Delta}}{\log {z}}$; note that the edge
  classes for $G$ are $E_1, \dots, E_{\delta}$, some of which may be
  empty. Denote by $E_J$ the union $\cup_{i \in J} E_i$.
  We construct $G'$ as follows: Divide these edge classes into
  disjoint groups $J_1, J_2, \ldots \subseteq [\delta]$, where each
  group consists of $\lceil\tau/\theta\rceil$ consecutive
  classes. Within a group $J_i$, by an averaging argument, there must
  be a range $L_i \subseteq J_i$ of $\tau$ \emph{consecutive} edge
  classes that contains at most a $\theta$ fraction of all the edges
  in this group, i.e., $|E_{L_i}| \leq \theta \cdot |E_{J_i}|$ and
  $|L_i| \geq \tau$. We form $G'$ by removing these the edges in all
  these groups $L_i$'s from $G$, i.e., $G' = (V, E\setminus (\cup_i
  E_{L_i}))$. This removes only a $\theta$ fraction of all the edges
  of the graph.

  We claim $G'$ is $(4\tau/\theta, \tau)$-well-spaced. Indeed, if we
  remove the group $L_i$, then we designate the smallest $j \in
  [\delta]$ such that $j > \max\{j' \in L_i\}$ as a special bucket (if
  such a $j$ exists). Since we removed the edges in $E_{L_i}$, the
  second condition for being well-spaced follows. Moreover, the number
  of buckets between a special bucket and the following one is at most
  \[2\lceil\tau/\theta\rceil - (\tau-1) \leq 4\tau/\theta.\]  Finally,
  these computations can be done in $O(m)$ work and $O(\log n)$ depth
  using standard techniques~\cite{JaJa:book92,Leighton:book92}.
\end{proof}

\begin{lemma}
  \label{lem:wellspacedAKPW}
  Let $\tau = 3{\log n}/{\log y}$. Given a graph $G$ which is $(\gamma,
  \tau)$-well-spaced, \procName{SparseAKPW} can be computed on $G$ with
  $\otilde(m)$ work and $O(\frac{c_1}{c_2}\gamma \kay \beta \log^2 n)$
  depth.
\end{lemma}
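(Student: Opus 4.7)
The plan is to exploit the $(\gamma,\tau)$-well-spaced structure to break the sequential dependency chain in \procName{SparseAKPW} into $O(\log \Delta/\gamma)$ chunks that can be executed in parallel. Let $i_1 < i_2 < \cdots$ be the special weight classes guaranteed by the well-spaced hypothesis, so that (a) the gap between consecutive special classes is at most $\gamma$, and (b) each $i_k$ is preceded by the $\tau$ empty weight classes $E_{i_k-\tau},\ldots,E_{i_k-1}$. Define the $k$-th chunk to consist of the iterations corresponding to weight classes in the interval $[i_k,\,i_{k+1}-1]$ (with the convention that $i_0 = 1$ and the last chunk runs to the largest nonempty class). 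Each chunk thus spans at most $\gamma$ consecutive nonempty weight classes plus the $\tau$ empty ones before the next special position.

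The second step is to run a separate copy of \procName{SparseAKPW} on each chunk in parallel, operating on the original vertex set $V$ and restricted to the edge set of that chunk. Each chunk produces its own forest of BFS trees and its own collection of surviving edges of each class; the output of \procName{SparseAKPW} is defined as the union of these contributions. The chunks execute concurrently because a chunk's input consists only of original weights and an initial vertex set, both known in advance; no chunk waits on the contraction structure computed by another.

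The key step is to argue that this chunked execution is equivalent to the sequential execution with respect to both stretch and sparsity. The core observation is Fact~\ref{fact:edgeclass-size}: after $\tau$ iterations a class $E_i$ has at most $|E_i|/y^\tau < 1$ surviving edges (by the choice $\tau = 3\log n/\log y$). Consequently, in the sequential execution, by the time iteration $i_k$ begins, every weight class strictly below $i_k-\tau$ has been completely exhausted, and the $\tau$ classes $E_{i_k-\tau},\ldots,E_{i_k-1}$ are empty to begin with. Hence the sequential graph $G^{(i_k)}$ is already a contraction minor of the subgraph induced by weights $\geq z^{i_k-1}$ and is oblivious to the particular contractions done earlier (except for identifying which original vertices are grouped together, which only \emph{coarsens} the partition and cannot increase the number of components or the stretch). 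In particular, running chunk $k$ on the uncontracted vertex set strictly dominates the sequential behavior: its BFS trees still connect original vertices with at most the same stretch, and the count of surviving edges per class is only increased by a factor bounded by the well-spacedness. The stretch and edge-count guarantees of Lemma~\ref{lem:sparse-akpw} therefore transfer (up to constants already absorbed into the parameters $\kay$ and $\beta$).

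Given correctness, the work and depth bounds follow directly. Each chunk performs at most $\gamma + \tau = O(\gamma)$ iterations of \procName{Partition}, each with depth $O(z \log^2 n)$ where $z = 4c_1(\kay+1)y\log^3 n = 4c_1(\kay+1)\beta/c_2$, yielding depth $O\!\left(\tfrac{c_1}{c_2}\gamma \kay \beta \log^2 n\right)$ per chunk. Since chunks run in parallel, this is also the total depth. Summed over all chunks, the work remains $\otilde(m)$ because each original edge participates in at most one chunk and each iteration of \procName{Partition} is charged $\otilde(\cdot)$ work per participating edge (Theorem~\ref{thm:graph-partition}). The main obstacle I anticipate is the formalization in the third paragraph: one must verify carefully that executing chunks on the uncontracted vertex set does not worsen the per-class stretch and edge-count bounds used in the proof of Lemma~\ref{lem:sparse-akpw}, which likely requires reproving Facts~\ref{fact:edgeclass-size} and~\ref{fact:akpw-comp-radius} in the chunked setting and invoking a Fact~\ref{fact:twoparts}-style argument to combine the chunk subgraphs into a single $\widehat{G}$ whose total stretch and edge count are within the claimed bounds.
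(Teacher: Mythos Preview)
Your high-level strategy---chunk at the special buckets and run the chunks in parallel---is the right one and matches the paper. The gap is in your second step: you propose to run each chunk on the \emph{uncontracted} vertex set $V$ restricted to that chunk's edge classes, and you rightly flag the equivalence argument as the main obstacle. It does not go through, and the failure is on the sparsity side rather than the stretch side. Edges in chunk $k$'s weight classes whose endpoints were already merged by earlier sequential iterations become self-loops and are discarded in the sequential run; in your chunked execution on the original $V$ they are live edges and can land in chunk $k$'s BFS forest. Since each chunk's forest may contain up to $n-1$ edges, the union over $\Theta(\log\Delta/\gamma)$ chunks can have $\Theta(n\log\Delta/\gamma)$ edges, which destroys the $n - 1 + m\big(c_2\log^3 n/\beta\big)^\kay$ bound of Lemma~\ref{lem:sparse-akpw}. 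Your ``domination'' intuition controls stretch but not edge count.

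The missing idea is that you do not need to approximate the sequential run---you can \emph{reproduce it exactly}. At a special bucket $i$, the $\tau$ preceding classes are empty by well-spacedness, and every class below $i-\tau$ has been completely contracted (Fact~\ref{fact:edgeclass-size} with $\tau > \log_y n^2$). Hence the supernode set $V^{(i)}$ is determined solely by the connected components of $(V,\,E_1\cup\cdots\cup E_{i-\tau-1})$ and is independent of the randomness in \procName{Partition}. The paper computes it by taking the MST of $G$, keeping only MST edges from classes $\leq i-\tau$, and contracting components---$O(m)$ work, $O(\log n)$ depth. Each chunk then starts from the \emph{correct} contracted graph, so it is literally a segment of the sequential execution, and the guarantees of Lemma~\ref{lem:sparse-akpw} carry over with no further argument. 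One small correction: each chunk spans at most $\gamma$ iterations, not $\gamma+\tau$; the $\tau$ empty classes preceding the next special bucket are already counted inside the $\gamma$ gap.
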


\begin{proof}
  Since $G$ is $(\gamma, \tau)$-well-spaced, each special bucket $i
  \in I$ must be preceded by $\tau$ empty buckets. Hence, in iteration
  $i$ of \procName{SparseAKPW}, any surviving edges belong to buckets
  $E_{i - \tau}$ or smaller. However, these edges have been reduced by
  a factor of $y$ in each iteration and since $\tau > \log_y n^2$, all
  the edges have been contracted in previous iterations---i.e.,
  $E^{(i)}_\ell$ for $\ell < i$ is empty.

  Consider any special bucket $i$: we claim that we can construct the
  vertex set $V^{(i)}$ that \procName{SparseAKPW} sees at the
  beginning of iteration $i$, without having to run the previous
  iterations.  Indeed, we can just take the MST on the entire graph $G
  = G^{(1)}$, retain only the edges from buckets $E_{i-\tau}$ and
  lower, and contract the connected components of this forest to get
  $V^{(i)}$. And once we know this vertex set $V^{(i)}$, we can drop
  out the edges from $E_i$ and higher buckets which have been
  contracted (these are now self-loops), and execute iterations $i,
  i+1, \ldots$ of \procName{SparseAKPW} without waiting for the
  preceding iterations to finish. Moreover, given the MST, all this
  can be done in $O(m)$ work and $O(\log n)$ depth.

  Finally, for each special bucket $i$ in parallel, we start running
  \procName{SparseAKPW} at iteration $i$. Since there are at most
  $\gamma$ iterations until the next special bucket, the total depth
  is only $O(\gamma z \log^2 n) = O(\frac{c_1}{c_2}\gamma \kay \beta
  \log^2 n)$.
\end{proof}

\begin{theorem}[Low-Stretch Subgraphs]
  \label{thm:lowstretchsubgraph}
  Given a weighted graph $G$, $\kay \in \Z_{> 0}$, and $\beta \geq
  c_2\log^3 n$ (where $c_2 = 2\cdot
  (4c_1(\kay+1))^{\frac12(\kay-1)}$), there is an algorithm
  $\procName{LSSubgraph}(G, \beta, \kay)$ that finds a subgraph
  $\widehat{G}$ such that
  \begin{enumerate}[itemsep=0pt,topsep=1pt]
  \item $|E(\widehat{G})| \leq n - 1 + m \left( c_{\textrm{LS}}
      \frac{\log^3{n}}{\beta} \right)^\kay$
  \item The total stretch (of all $E(G)$ edges) in the subgraph
    $\widehat{G}$ is at most by $m \beta^2 \log^{3\kay + 3}n$,
  \end{enumerate}
  where $c_{\textrm{LS}}$ ($ = c_2 + 1$) is a constant.  Moreover, the procedure
  runs in $\otilde(m)$ work and $O(\kay \beta^{\kay+1} \log^{3 - 3\kay} {n})$
  depth. If $\kay = O(1)$ and $\beta = \polylog(n)$, the depth term simplifies
  to $O(\log^{O(1)} n)$.
\end{theorem}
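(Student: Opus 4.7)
\bigskip
\noindent\textbf{Proof Proposal.} The plan is to combine the SparseAKPW algorithm (Lemma~\ref{lem:sparse-akpw}) with the well-spacing preprocessing (Lemma~\ref{lem:nukeintervals}) and the ``glue back removed edges'' trick (Fact~\ref{fact:twoparts}). The obstacle that Lemma~\ref{lem:sparse-akpw} leaves open is its $\log \Delta$-dependent depth; Lemma~\ref{lem:wellspacedAKPW} resolves this \emph{provided} the input is $(\gamma,\tau)$-well-spaced, so the job is to set things up so that well-spacing costs us only the slack already permitted in the target edge-count bound.

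First, I would fix the parameters to match those of SparseAKPW: take $y = \beta/(c_2 \log^3 n)$ and $\tau = \lceil 3\log n/\log y\rceil$, and then set the well-spacing ``throw away'' fraction to
\[
\theta := \bigl( \log^3 n/\beta \bigr)^{\kay}.
\]
I would then invoke $\procName{\textsc{LSSubgraph}}(G,\beta,\kay)$ in three stages:
\begin{enumerate}[topsep=2pt,itemsep=1pt]
\item Run the preprocessing of Lemma~\ref{lem:nukeintervals} on $G$ with parameters $\tau$ and $\theta$ to obtain a $(4\tau/\theta,\tau)$-well-spaced graph $G' = (V, E')$ together with the removed set $F := E \setminus E'$, with $|F| \leq \theta m$. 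This costs $O(m)$ work and $O(\log n)$ depth.
\item Run $\procName{SparseAKPW}(G', \kay, \beta)$ to obtain a subgraph $\widehat{G}' \subseteq G'$.
\item Output $\widehat{G} := \widehat{G}' \cup F$.
\end{enumerate}

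For the edge-count bound, Lemma~\ref{lem:sparse-akpw} gives $|E(\widehat{G}')| \leq (n-1) + |E'|(c_2 \log^3 n/\beta)^\kay \leq (n-1) + m(c_2 \log^3 n/\beta)^\kay$, and adding $|F| \leq m (\log^3 n/\beta)^\kay$ yields the claimed bound with $c_{\textrm{LS}} = c_2 + 1$. For the stretch bound, Fact~\ref{fact:twoparts} gives
\[
\gstretch_{\widehat{G}}(E) \;\leq\; |F| + \gstretch_{\widehat{G}'}(E') \;\leq\; \theta m + |E'|\,\beta^2 \log^{3\kay+3} n,
\]
and since $\theta m \leq m$ is absorbed into the dominant second term (adjusting constants), we obtain the desired $m\beta^2\log^{3\kay+3}n$ bound. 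For the work and depth, Lemma~\ref{lem:wellspacedAKPW} applied to $G'$ (which is $(\gamma,\tau)$-well-spaced with $\gamma = 4\tau/\theta$) gives $\otilde(m)$ work and depth
\[
O\!\left( \tfrac{c_1}{c_2}\,\gamma\, \kay\, \beta \log^2 n\right) \;=\; O\!\left(\tau\, \kay\, \beta\, \theta^{-1}\log^2 n\right) \;=\; O\!\left(\kay\,\beta^{\kay+1}\log^{3-3\kay}n\right),
\]
using $\tau = O(\log n)$ and the definition of $\theta$; when $\kay = O(1)$ and $\beta = \polylog(n)$, this is $\polylog(n)$.

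The main conceptual hurdle here is ensuring the two reductions compose cleanly: the edges thrown out in step~(1) must be cheap enough to re-add in step~(3) without violating either the edge-count target or the stretch target. The choice $\theta = (\log^3n/\beta)^\kay$ is exactly calibrated to sit inside the $c_{\textrm{LS}}$ budget, and Fact~\ref{fact:twoparts} ensures these re-added edges only contribute additively (each with stretch $1$, since they appear in $\widehat G$), rather than multiplicatively, to total stretch. No other interaction between the two procedures arises, since SparseAKPW is run as a black box on $G'$, and the output $\widehat G$ spans $V$ because $F \cup E(\widehat G')$ already contains a spanning connected subgraph whenever $G$ itself is connected.
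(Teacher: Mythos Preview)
Your proposal is correct and matches the paper's proof essentially step for step: set $\theta = (\log^3 n/\beta)^{\kay}$ and $\tau = 3\log n/\log y$, apply Lemma~\ref{lem:nukeintervals} to obtain a $(4\tau/\theta,\tau)$-well-spaced $G'$, run \procName{SparseAKPW} on $G'$ via Lemma~\ref{lem:wellspacedAKPW}, and add back the removed edges using Fact~\ref{fact:twoparts} and Lemma~\ref{lem:sparse-akpw} for the edge-count and stretch accounting. Your depth computation and the $c_{\textrm{LS}} = c_2 + 1$ bookkeeping are likewise the same as the paper's.
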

\begin{proof}
  Given a graph $G$, we set $\tau = 3{\log n}/{\log y}$ and $\theta = (\log^3
  {n}/\beta)^\kay$, and apply Lemma~\ref{lem:nukeintervals} to delete at most
  $\theta m$ edges, and get a $(4\tau/\theta, \tau)$-well-spaced graph $G'$. Let
  $m' = |E'|$. On this graph, we run \procName{SparseAKPW} to obtain a graph
  $\widehat{G}'$ with $n - 1 + m' ( c_2 (\log^3 {n}/\beta))^\kay$ edges and
  total stretch at most $m'\beta^2 \log^{3\kay + 3} {n}$; moreover,
  Lemma~\ref{lem:wellspacedAKPW} shows this can be computed with $\otilde(m)$
  work and the depth is
  \[ O\left(\frac{c_1}{c_2}(4\tau/\theta) \kay \beta \log^2 n\right)
  = O(\kay \beta^{\kay+1} \log^{3 - 3\kay} n).\]

  Finally, we output the graph $\widehat{G} = \widehat{G}' \cup (E(G)
  \setminus E(G'))$; this gives the desired bounds on stretch and the
  number of edges as implied by Fact~\ref{fact:twoparts} and
  Lemma~\ref{lem:sparse-akpw}.
\end{proof}


\section{Parallel SDD Solver} \label{sec:SDD}

In this section, we derive a parallel solver for symmetric diagonally dominant
(SDD) linear systems, using the ingredients developed in the previous sections.
The solver follows closely the line of work of
\cite{SpielmanT:focs03,SpielmanTengSolver,KoutisM:soda07,KoutisMP:focs10}. Specifically,
we will derive a proof for the main theorem (Theorem~\ref{thm:main}), the
statement of which is reproduced below.

\begin{quote}
%
  \textbf{Theorem~\ref{thm:main}.}  For any fixed $\theta>0$ and any $\vareps >
  0$, there is an algorithm \procName{SDDSolve} that on input an SDD
  matrix $A$ and a vector $b$ computes a vector $\tilde{x}$ such that
  $\norm[A]{\tilde{x} - A^+b} \leq \vareps \cdot \norm[A]{A^+b}$ in
  $O(m\log^{O(1)}{n}\log {\frac1\vareps})$ work and
  $O(m^{1/3+\theta}\log \frac1\vareps)$ depth.
\end{quote}

In proving this theorem, we will focus on Laplacian linear systems.
As noted earlier, linear systems on SDD matrices are reducible to
systems on graph Laplacians in $O(\log (m+n))$ depth and $O(m + n)$
work~\cite{Gremban-thesis}.  Furthermore, because of the one-to-one
correspondence between graphs and their Laplacians, we will use the
two terms interchangeably.



The core of the near-linear time Laplacian solvers in
\cite{SpielmanT:focs03,SpielmanTengSolver,KoutisMP:focs10} is a
``preconditioning'' chain of progressively smaller graphs $\langle A_1
=A , A_2, \ldots, A_d\rangle$, along with a well-understood recursive
algorithm, known as recursive preconditioned Chebyshev
method---\procName{rPCh}, that traverses the levels of the chain and
for each visit at level $i<d$, performs $O(1)$ matrix-vector
multiplications with $A_i$ and other simple vector-vector
operations. Each time the algorithm reaches level $d$, it solves a
linear system on $A_d$ using a direct method.  Except for solving the
bottom-level systems, all these operations can be accomplished in
linear work and $O(\log (m+n))$ depth.  The recursion itself is based
on a simple scheme; for each visit at level $i$ the algorithm makes at
most $\kappa_i'$ recursive calls to level $i+1$, where $\kappa_i'\geq
2$ is a fixed system-independent integer. Therefore, assuming we have
computed a chain of preconditioners, the total required depth is (up
to a log) equal to the total number of times the algorithm reaches the
last (and smallest) level $A_d$.

\subsection{Parallel Construction of Solver Chain}
\label{subsec:solverchain}

The construction of the preconditioning chain in
\cite{KoutisMP:focs10} relies on a subroutine that on input a graph
$A_i$, constructs a slightly sparser graph $B_i$ which is spectrally
related to $A_i$.  This ``incremental sparsification'' routine is in
turn based on the computation of a low-stretch tree for $A_i$. The
parallelization of the low-stretch tree is actually the main obstacle
in parallelizing the whole solver presented in \cite{KoutisMP:focs10}.
Crucial to effectively applying our result in
Section~\ref{sec:lowstretch} is a simple observation that the
sparsification routine of \cite{KoutisMP:focs10} only requires a
low-stretch spanning subgraph rather than a tree.
Then, with the exception of some parameters in its construction, the
preconditioning chain remains essentially the same.

The following lemma is immediate from Section 6 of \cite{KoutisMP:focs10}.

\begin{lemma}
\label{thm:incrementalsparsify}
Given a graph $G$ and a subgraph $\widehat{G}$ of $G$ such that the total
stretch of all edges in $G$ with respect to $\widehat{G}$ is $m \cdot
S$, a parameter on condition number $\kappa$, and a success
probability $1 - 1/\xi$, there is an algorithm that constructs a graph $H$
such that
\begin{enumerate}[itemsep=0pt,topsep=1pt]

\item $G \preceq H \preceq \kappa \cdot G$, and

\item $|E(H)| = |E(\widehat{G})| + (c_{\textrm{IS}} \cdot S \log{n}
  \log{\xi})/{\kappa}$
\end{enumerate}
in $O(\log^2{n})$ depth and $O(m \log^2n)$ work, where
$c_{\textrm{IS}}$ is an absolute constant.
\end{lemma}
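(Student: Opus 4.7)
The plan is to invoke the incremental-sparsification framework of~\cite{KoutisMP:focs10}, with the observation that its analysis depends on the input subgraph only through stretch, and so carries over unchanged when $\widehat{G}$ is an ultra-sparse spanning subgraph rather than a spanning tree. Concretely, I would retain every edge of $\widehat{G}$ in $H$, and for each off-subgraph edge $e=(u,v) \in E(G)\setminus E(\widehat{G})$ independently include a rescaled copy with probability $p_e = \min\{1,\; c\cdot \gstretch_{\widehat{G}}(e)\log n \log\xi/\kappa\}$ for a suitable absolute constant $c$, assigning it the appropriately scaled weight upon inclusion.

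The spectral guarantee $G \preceq H \preceq \kappa G$ with failure probability at most $1/\xi$ then follows from a matrix concentration inequality (Rudelson--Vershynin / Tropp), exactly as in Spielman--Srivastava. The key algebraic fact is that $\gstretch_{\widehat{G}}(e)$ upper bounds the ``leverage score'' of $e$ in $\widehat{G}$ (conductance times effective resistance along $e$), because the resistance of the shortest $u$--$v$ path upper bounds the effective resistance; by Rayleigh monotonicity this also dominates the leverage score of $e$ in $G$. Hence our stretch-based probabilities dominate the effective-resistance-based probabilities of Spielman--Srivastava, and their concentration bound applies directly. By linearity, $\mathbb{E}[|E(H)\setminus E(\widehat{G})|] = \sum_e p_e \leq c \cdot mS\log n\log\xi/\kappa$, and a standard tail argument absorbed into the $1/\xi$ failure probability gives the stated deterministic edge count with constant $c_{\textrm{IS}}$.

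For the parallel bounds, the work splits naturally into three stages: (i) compute $\gstretch_{\widehat{G}}(e)$ for every $e \in E(G)$; (ii) draw independent Bernoulli flips; (iii) assemble $H$ by parallel gather. Stages (ii) and (iii) are embarrassingly parallel in $O(\log n)$ depth and $O(m)$ work. Stage (i) is the only non-trivial step: since $\widehat{G}$ is a spanning tree plus $O(m/\polylog n)$ extra edges, path distances along a spanning tree of $\widehat{G}$ between all pairs of edge-endpoints can be computed in $O(\log n)$ depth and $O(m \log n)$ work via standard Euler-tour and least-common-ancestor techniques, and the few extra off-tree edges contribute only polylogarithmic overhead. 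This fits comfortably inside the claimed $O(\log^2 n)$ depth and $O(m\log^2 n)$ work envelope.

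The main obstacle I anticipate is calibrating the KMP constants together with the stretch-vs.-effective-resistance slack, so that the final edge count reads $|E(\widehat{G})| + c_{\textrm{IS}} \cdot S\log n\log\xi/\kappa$ with a clean absolute constant; and verifying that no step of KMP's argument (symmetry, PSD ordering, or the re-scaling of sampled edges) silently assumes $\widehat{G}$ is a tree rather than an ultra-sparse subgraph. Both should be routine given the Rayleigh-monotonicity observation above, which is why the lemma is stated as being immediate from Section~6 of~\cite{KoutisMP:focs10}.
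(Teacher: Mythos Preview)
Your proposal is correct and matches the paper's approach exactly: the paper gives no proof beyond citing Section~6 of~\cite{KoutisMP:focs10} and remarking that the argument there, although stated for a spanning tree $\widehat{G}$, goes through unchanged for an arbitrary spanning subgraph. Your expansion of what that argument entails---retain $\widehat{G}$, sample each off-subgraph edge with probability proportional to its stretch, invoke the stretch-dominates-effective-resistance inequality together with matrix concentration, and compute the needed stretches in parallel via tree LCA---is a faithful unpacking of KMP's Section~6 and of the paper's one-line justification.
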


Although Lemma~\ref{thm:incrementalsparsify} was originally stated
with $\widehat{G}$ being a spanning tree, the proof in fact works
without changes for an arbitrary subgraph. For our purposes, $\xi$ has
to be at most $O(\log n)$ and that introduces an additional
$O(\log\log{n})$ term. For simplicity, in the rest of the section, we
will consider this as an extra $\log{n}$ factor.


\begin{lemma}
  \label{lem:precondition} Given a weighted graph $G$, parameters $\kay$
  and $\eta$ such that $\eta \geq \kay \geq 16$, we can construct in
  $O(\log^{2\eta \kay}n)$ depth and $\otilde(m)$ work another graph $H$
  such that
  \begin{enumerate}[itemsep=0pt,topsep=1pt]
  \item $G \preceq H \preceq \frac1{10}\cdot \log^{\eta \kay}{n} \cdot G$
  \item $|E(H)| \leq n - 1 + m \cdot
    c_{\textrm{PC}}/{\log^{\eta \kay - 2\eta - 4\kay} {(n)}}$,
  \end{enumerate}
  where $c_{\textrm{PC}}$ is an absolute constant.
\end{lemma}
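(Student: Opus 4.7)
The strategy is to combine the two main tools just developed: feed the low-stretch subgraph $\widehat{G}$ produced by Theorem~\ref{thm:lowstretchsubgraph} into the incremental sparsification routine of Lemma~\ref{thm:incrementalsparsify}, with the condition-number parameter chosen to match $\kappa = \tfrac{1}{10}\log^{\eta \kay} n$. Concretely, first run $\procName{LSSubgraph}(G, \beta, \kay)$ with $\beta := \log^{\eta} n$ (note that $\beta \geq c_2 \log^3 n$ since $\eta \geq 16$, so the hypothesis of Theorem~\ref{thm:lowstretchsubgraph} is satisfied). This produces a subgraph $\widehat{G}$ with $|E(\widehat{G})| \leq n - 1 + m(c_{\textrm{LS}} \log^3 n / \log^{\eta} n)^\kay = n - 1 + O(m / \log^{(\eta - 3)\kay} n)$ edges and total stretch at most $m \beta^2 \log^{3\kay + 3} n = m \log^{2\eta + 3\kay + 3} n$. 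Then apply Lemma~\ref{thm:incrementalsparsify} to $G$ and $\widehat{G}$ with $\kappa = \tfrac{1}{10} \log^{\eta \kay} n$ and an inverse-polynomial failure probability (so $\log \xi = O(\log n)$), obtaining the desired $H$.

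\textbf{Verifying the two properties.} The spectral sandwich $G \preceq H \preceq \kappa \cdot G$ is exactly Lemma~\ref{thm:incrementalsparsify}(1) with our choice of $\kappa$. For the edge bound, Lemma~\ref{thm:incrementalsparsify}(2) contributes an additive term of order (total stretch) $\cdot \log n \log \xi / \kappa$, which with our parameters works out to
\[
O\!\left( \frac{m \log^{2\eta + 3\kay + 3} n \cdot \log^2 n}{\log^{\eta \kay} n} \right)
= O\!\left( \frac{m}{\log^{\eta \kay - 2\eta - 3\kay - 5} n} \right).
\]
Adding this to the sparse-subgraph term $O(m / \log^{(\eta-3)\kay} n)$, both exponents on $\log n$ are at least $\eta\kay - 2\eta - 4\kay$ when $\eta \geq \kay \geq 16$ (an easy check using $\eta \geq \kay$), so both error terms are absorbed into $m \cdot c_{\textrm{PC}} / \log^{\eta \kay - 2\eta - 4\kay} n$ for a suitable absolute constant $c_{\textrm{PC}}$.

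\textbf{Work and depth accounting.} The work of both \procName{LSSubgraph} and the incremental sparsifier is $\otilde(m)$, so their composition is also $\otilde(m)$. For depth, Theorem~\ref{thm:lowstretchsubgraph} gives $O(\kay \beta^{\kay+1} \log^{3 - 3\kay} n)$, which with $\beta = \log^\eta n$ becomes $O(\kay \log^{\eta \kay + \eta + 3 - 3\kay} n)$; incremental sparsification adds only $O(\log^2 n)$ more depth. Both terms are comfortably $O(\log^{2\eta \kay} n)$ for $\eta \geq \kay \geq 16$.

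\textbf{Where the real work lies.} The substantive content is already in Theorem~\ref{thm:lowstretchsubgraph} and Lemma~\ref{thm:incrementalsparsify}; the only non-trivial step here is the parameter tuning, namely choosing $\beta = \log^{\eta} n$ and $\kappa = \tfrac{1}{10}\log^{\eta \kay} n$ so that the two sources of extra edges (the slack in the low-stretch subgraph and the slack from sparsification) are simultaneously small enough to meet the stated bound, while leaving the spectral bound unchanged. The dominant concern is the interaction between the exponent of $\log n$ in the stretch bound ($2\eta + 3\kay + 3$) and the exponent of $\log n$ in $\kappa$ ($\eta \kay$); the hypothesis $\eta \geq \kay \geq 16$ is exactly what makes this gap positive enough to yield the claimed $\log^{\eta \kay - 2\eta - 4\kay} n$ savings factor.
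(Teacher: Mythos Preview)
Your proposal is correct and follows essentially the same approach as the paper: set $\beta = \log^{\eta} n$, run $\procName{LSSubgraph}(G,\beta,\kay)$, then apply Lemma~\ref{thm:incrementalsparsify} with $\kappa = \tfrac{1}{10}\log^{\eta\kay} n$ and $\log\xi = O(\log n)$, and finally verify that both sources of surplus edges have exponent at least $\eta\kay - 2\eta - 4\kay$. Your arithmetic on the exponents and on the depth bound matches the paper's (and in fact is cleaner in a couple of places where the paper has minor typos).
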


\begin{proof}
  Let $\widehat{G} = \procName{LSSubgraph}(G, \kay, \log^{\eta}
  n)$. Then, Theorem~\ref{thm:lowstretchsubgraph} shows that
  $|E(\widehat{G})|$ is at most
  \begin{align*}
     n - 1 + m \left( \frac{c_{\textit{LS}} \cdot
        \log^3{n}}{\beta} \right)^\kay = n - 1 + m \left(
      \frac{c_{\textit{LS}}}{\log^{\eta - 3} {n}} \right)^\kay
\end{align*}
Furthermore, the total stretch of all edges in $G$ with respect to $\widehat{G}$
is at most \[ S = m \beta^2 \log^{\kay + 3}{n} \leq m \log^{2\eta + 3\kay +
  3}{n}.\] Applying Lemma~\ref{thm:incrementalsparsify} with $\kappa =
\frac1{10}\log^{\eta \kay}{n}$ gives $H$ such that $G \preceq H \preceq
\frac1{10}\log^{\eta \kay}{n} \cdot G$ and $|E(H)|$ is at most
\begin{eqnarray*}
  && n - 1 + m \cdot \left ( \frac{c_{\textit{LS}}^\kay}{\log^{\kay(\eta -
        3)}n} + \frac{10\cdot c_{\textit{IS}} \log^{2\eta + 3\kay + 5
      }{n}}{\log^{\eta \kay}{n}} \right)\\
	&\leq& n - 1 + m \cdot \frac{c_{\textit{PC}}}{\log^{\eta \kay - 2\kay - 3k - 5}{n}}\\
	&\leq& n - 1 + m \cdot \frac{c_{\textit{PC}}}{\log^{\eta \kay - 2\eta - 4\kay}{n}}.
      \end{eqnarray*}
%
%
%
\end{proof}

We now give a more precise definition of the preconditioning chain we
use for the parallel solver by giving the pseudocode for constructing
it.

\begin{definition}[Preconditioning Chain]
  Consider a chain of graphs \[\mathcal{C} = \langle A_1 =A, B_1, A_2,
  \ldots, A_d\rangle,\] and denote by $n_i$ and $m_i$ the number of
  nodes and edges of $A_i$ respectively.  We say that $\mathcal{C}$ is
  \emph{preconditioning chain} for $A$ if
  \begin{enumerate}[topsep=3pt, itemsep=2pt, parsep=0pt]
   \item $B_i = \procName{IncrementalSparsify}(A_i)$.
   \item $A_{i+1} = \procName{GreedyElimination}(B_i)$.
   \item $A_i \preceq B_i \preceq  1/10 \cdot \kappa_i A_i$, for some explicitly known integer $\kappa_i$. \footnote{The constant of $1/10$ in the condition number is introduced only
to simplify subsequent notation.}
   \end{enumerate}
\end{definition}

As noted above, the \procName{rPCh} algorithm relies on finding the
solution of linear systems on $A_d$, the bottom-level systems.  To
parallelize these solves, we make use of the following fact which can
be found in Sections~3.4. and 4.2 of~\cite{Golub96}.

\begin{fact}
  A factorization $LL^\tr$ of the pseudo-inverse of an $n$-by-$n$ Laplacian
  $A$, where $L$ is a lower triangular matrix, can be computed in
  $O(n)$ time and $O(n^3)$ work, and any solves thereafter can be done
  in $O(\log{n})$ time and $O(n^2)$ work.
\end{fact}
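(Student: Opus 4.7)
The plan is to reduce the claim to standard results about parallel Cholesky factorization and triangular solves, handling the singularity of the Laplacian by working on a suitable principal submatrix. Since $A$ is a connected Laplacian, its nullspace is spanned by $\mathbf{1}$; for a disconnected Laplacian we do the argument component-wise. Fixing an arbitrary vertex and deleting the corresponding row and column yields an $(n-1)\times(n-1)$ symmetric positive definite matrix $\tilde{A}$. Solving $Ax = A^+ b$ for $b$ in the image of $A$ then reduces to solving $\tilde{A}\tilde{x} = \tilde{b}$ on the restricted coordinates and padding with a zero, so it suffices to factorize and solve systems in $\tilde{A}$.

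Next I would invoke the standard parallel (right-looking) Cholesky algorithm as described in Golub--Van Loan \S4.2. The outer loop produces the columns of $L$ one by one, which creates an $O(n)$-depth critical chain along the diagonal; however, the rank-one update of the trailing $(n-k)\times(n-k)$ submatrix performed at each stage is done in $O(1)$ depth (or $O(\log n)$ for the square-root and scaling summations) and uses $O((n-k)^2)$ work in parallel. Summing over the $n$ stages yields $O(n)$ depth and $O(n^3)$ work total, producing the factor $L$ with $\tilde{A} = LL^{\tr}$.

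For the solve phase, naive forward/back substitution against $L$ and $L^{\tr}$ would incur $O(n)$ depth. To achieve $O(\log n)$ depth within $O(n^2)$ work, I would precompute $L^{-1}$ once (also lower triangular) during or right after the factorization; using the standard block-recursive inversion $\bigl(\begin{smallmatrix} L_{11} & 0 \\ L_{21} & L_{22} \end{smallmatrix}\bigr)^{-1}$ of Golub--Van Loan \S3.4, this fits in the same $O(n)$ depth and $O(n^3)$ work budget as the factorization, so it does not change the stated bounds. Once $L^{-1}$ is stored, each solve $\tilde{A}^{-1}\tilde{b} = L^{-\tr}(L^{-1}\tilde{b})$ is two dense matrix-vector products, each of which is a parallel reduction of depth $O(\log n)$ and total work $O(n^2)$. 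Padding the deleted coordinate back with a zero gives the action of $A^+$ within the same bounds.

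The main obstacle is the tension between the $O(\log n)$-depth and $O(n^2)$-work requirement for solves: plain triangular substitution is linear-depth, and general $O(\log n)$-depth linear solvers tend to cost $\omega(n^2)$ work. The trick that makes the claim go through is to amortize the explicit triangular inversion into the factorization phase so that each subsequent solve collapses to two matrix-vector multiplications. Once this amortization is set up, the stated depth and work bounds follow directly from the cited sections of Golub--Van Loan.
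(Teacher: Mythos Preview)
Your approach is essentially the same as the paper's, which treats this as a textbook fact: it only cites Sections~3.4 and~4.2 of Golub--Van Loan and remarks that one drops a row and column (since the nullspace of a connected Laplacian is spanned by~$\mathbf{1}$) to obtain a positive-definite principal submatrix before factoring. Your write-up is in fact more explicit than the paper's---in particular, your observation that the explicit triangular inverse must be precomputed during the factorization phase (so that each subsequent solve is two dense matrix--vector products rather than a sequential substitution) is exactly the point needed to justify the $O(\log n)$-depth, $O(n^2)$-work solve bound, which the paper leaves implicit in the citation.
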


Note that although $A$ is not positive definite, its null space is the
space spanned by the all $1$s vector when the underlying graph is
connected. Therefore, we can in turn drop the first row and column to
obtain a semi-definite matrix on which LU factorization is numerically
stable.

The routine $\procName{GreedyElimination}$ is a partial Cholesky
factorization (for details see \cite{SpielmanTengSolver} or
\cite{KoutisMP:focs10}) on vertices of degree at most $2$. From a
graph-theoretic point of view, the routine
$\procName{GreedyElimination}$ can be viewed as simply recursively
removing nodes of degree one and splicing out nodes of degree two. The
sequential version of $\procName{GreedyElimination}$ returns a graph
with no degree $1$ or $2$ nodes. The parallel version that we present
below leaves some degree-$2$ nodes in the graph, but their number will
be small enough to not affect the complexity.



\begin{lemma}
  If $G$ has $n$ vertices and $n - 1 + m$ edges, then the procedure
  $\procName{GreedyElimination}(G)$ returns a graph with at most
  $2m-2$ nodes in $O(n + m)$ work and $O(\log n)$ depth \textbf{whp.}
\end{lemma}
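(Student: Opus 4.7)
The plan is to establish the size bound via a handshake-lemma counting argument, and then to implement the actual elimination using a randomized rake-and-compress scheme in the style of parallel tree contraction.

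For the size bound, I would track the invariant that if the current graph has $n'$ vertices and $n' - 1 + m'$ edges, then both degree-1 \emph{rake} (remove a leaf along with its single incident edge) and degree-2 \emph{compress} (remove a degree-2 vertex and merge its two incident edges into one) preserve the excess $m'$ above a spanning tree: raking loses one vertex and one edge, while compressing loses one vertex and one edge (two removed, one added). So the final residual graph has $n_f$ vertices and $n_f - 1 + m$ edges. If no degree-1 or degree-2 vertex remained, the handshake lemma gives $2(n_f - 1 + m) \ge 3 n_f$, hence $n_f \le 2m - 2$. Since the parallel version only leaves a few degree-2 nodes untouched in any round, the same counting, applied to the vertices of degree $\ge 3$ plus the handful of surviving degree-2 witnesses, still yields a bound of the form $2m - 2 + o(m)$; I would absorb the extra term into constants by choosing the compression rule so that the number of surviving degree-2 vertices is at most, say, half of what it was one round ago.

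For the parallel implementation, I would alternate two phases per round. The \textbf{rake} phase simply identifies all current degree-1 vertices in parallel and deletes them together with their incident edges; this is one prefix-sum / compaction on the adjacency arrays, costing linear work in the number of deletions and $O(\log n)$ depth per round. The \textbf{compress} phase handles maximal paths of degree-2 vertices: each degree-2 vertex flips a fair coin, and a vertex is spliced out iff it flipped heads and both of its path-neighbors flipped tails. A standard argument shows that along any degree-2 path of length $\ell$, a constant fraction of the vertices are spliced out in expectation, so after $O(\log n)$ rounds every degree-2 path has length $O(1)$ whp, and the few survivors are charged to the $2m-2$ count above. Each round costs $O(n' + m')$ work on the current graph.

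For the global complexity, I would argue that the total work telescopes to $O(n + m)$ because each vertex/edge is touched a constant expected number of times before it is eliminated; a Chernoff bound on the per-round shrinkage, together with a union bound over $O(\log n)$ rounds, upgrades ``expected'' to ``whp.'' The depth is $O(\log n)$ rounds times $O(1)$ parallel primitive operations (prefix sum, compaction) of depth $O(\log n)$ each, which gives $O(\log^2 n)$ naively; to tighten this to $O(\log n)$ whp I would observe that the rake/compress rounds themselves only perform constant-depth local decisions per round, so the $O(\log n)$ depth bound follows once the primitives are implemented on the \pCRCW{} model referenced in the preliminaries.

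The main obstacle is the compress phase: I must ensure that a constant fraction of each degree-2 path is removed per round, even though degree-2 subgraphs can form long chains (and even cycles, which must be broken by designating one vertex per cycle as a ``pseudo-endpoint''). The standard randomized-matching analysis handles this, but care is needed to maintain the excess-$m$ invariant through splicing when parallel edges and self-loops appear after contractions---these should be retained as multi-edges exactly as in the incremental sparsification setup of Section~\ref{subsec:solverchain}, so that the counting argument still yields the $2m - 2$ bound.
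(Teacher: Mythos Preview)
Your overall architecture---rake all degree-$1$ vertices, compress a random independent set of degree-$2$ vertices, iterate for $O(\log n)$ rounds---matches the paper's, and your handshake-lemma argument for the sequential bound is correct. The gap is in the parallel progress measure.

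You argue that each degree-$2$ path shrinks by a constant factor per round, so after $O(\log n)$ rounds every such path has length $O(1)$, leaving ``$2m-2+o(m)$'' vertices that you then ``absorb into constants.'' This does not close. First, the lemma requires exactly $\le 2m-2$, not $O(m)$: even if every degree-$2$ path were reduced to a single vertex, the skeleton on degree-$\ge 3$ vertices has $n_3\le 2m-2$ nodes but up to $n_3-1+m\le 3m-3$ edges, so up to $3m-3$ degree-$2$ vertices could survive---there is no constant to absorb them into. Second, your claim that the degree-$2$ count halves each round ignores that raking \emph{creates} new degree-$2$ vertices (a degree-$3$ vertex adjacent to a leaf drops to degree $2$), so degree-$2$ paths can grow and merge between rounds; the tree-contraction argument you invoke is for trees, where this feedback does not occur.

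The paper sidesteps both issues by tracking a different quantity: the excess $\Delta n = n' - (2m-2)$. It terminates as soon as $n'\le 2m-2$ (degree-$2$ vertices may still remain), and shows that each round removes, in expectation, at least $b_1+\tfrac{4}{27}b_2\ge b_1+\tfrac17 b_2\ge \tfrac17\Delta n$ vertices, so $\Delta n$ shrinks geometrically and hits $0$ in $O(\log n)$ rounds whp by Chernoff. The inequality $b_1+\tfrac17 b_2\ge \tfrac17\Delta n$ is the non-obvious step: fix a spanning tree $T$ with $a_i$ vertices of $T$-degree $i$, classify the $2m$ non-tree-edge endpoints by whether they land on a $T$-leaf (singly or multiply) or on a $T$-degree-$2$ vertex, and combine the resulting bounds on $b_1,b_2$ with the tree identity $2a_1+a_2\ge n+2$. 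This spanning-tree counting argument is precisely what your proposal is missing.
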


\begin{proof}
  The sequential version of $\procName{GreedyElimination}(G)$ is equivalent to
  repeatedly removing degree $1$ vertices and splicing out $2$ vertices until no
  more exist while maintaining self-loops and multiple edges (see, e.g.,
  \cite{SpielmanT:focs03,SpielmanTengSolver} and \cite[Section
  2.3.4]{Koutis:thesis07}).  Thus, the problem is a slight generalization of
  parallel tree contraction~\cite{MillerR:book89}.  In the parallel version, we
  show that while the graph has more than $2m - 2$ nodes, we can efficiently
  find and eliminate a ``large'' independent set of degree two nodes, in
  addition to all degree one vertices.

  We alternate between two steps, which are equivalent to
  $\procName{Rake}$ and $\procName{Compress}$
  in \cite{MillerR:book89}, until the vertex count is at most $2m-2$: \\
  Mark an independent set of degree 2 vertices, then
  \begin{enumerate}[topsep=2pt,itemsep=1pt,parsep=2pt]
     \item Contract all degree $1$ vertices, and
     \item Compress and/or contract out the marked vertices.
  \end{enumerate}

  To find the independent set, we use a randomized marking algorithm
  on the degree two vertices (this is used in place of maximal
  independent set for work efficiency): Each degree two node flips a
  coin with probability $\frac{1}{3}$ of turning up heads; we mark a node if
  it is a heads and its neighbors either did not flip a coin or
  flipped a tail. 

  We show that the two steps above will remove a constant fraction of
  ``extra'' vertices.  Let $G$ is a multigraph with $n$ vertices and
  $m+n-1$ edges.  First, observe that if all vertices have degree at
  least three then $n \leq 2(m-1)$ and we would be finished.  So, let
  $T$ be any fixed spanning tree of $G$; let $a_1$ (resp. $a_2$) be
  the number of vertices in $T$ of degree one (resp. two) and $a_3$
  the number those of degree three or more.  Similarly, let $b_1$,
  $b_2$, and $b_3$ be the number vertices in $G$ of degree $1$, $2$,
  and at least $3$, respectively, where the degree is the vertex's
  degree in $G$.

  It is easy to check that in expectation, these two steps remove
  $b_1 + \frac{4}{27} b_2 \geq b_1  + \frac{1}{7}b_2$ vertices.
  In the following, we will show that $b_1 + \frac{1}{7}b_2 \geq \frac{1}{7} \Delta n$,
  where $\Delta n = n - (2m - 2) = n -2m +2$
  denotes the number of ``extra'' vertices in the graph.  Consider
  non-tree edges and how they are attached to the tree $T$.  Let
  $m_1$, $m_2$, and $m_3$ be the number of attachment of the following
  types, respectively:
  \begin{enumerate}[labelindent=2pt, leftmargin=15pt,
    label=(\arabic*), topsep=3pt, itemsep=1pt, parsep=1pt]

  \item an attachment to $x$, a degree 1 vertex in $T$, where $x$ has at
    least one other attachment.

  \item an attachment to $x$, a degree 1 vertex in $T$, where $x$ has no
    other attachment.

  \item an attachment to a degree $2$ vertex in $T$.
  \end{enumerate}

  As each edge is incident on two endpoints, we have $m_1 + m_2 + m_3
  \leq 2m$. Also, we can lower bound $b_1$ and $b_2$ in terms of
  $m_i$'s and $a_i$'s: we have $b_1 \geq a_1 -m_1/2 - m_2$ and $b_2
  \geq m_2 +a_2 - m_3$. This gives
  \begin{align*}
    b_1 + \ts\frac{1}{7}b_2 & \geq {\ts\frac{2}{7}}(a_1 -m_1/2 -m_2) +
   {\ts\frac{1}{7}}(m_2 + a_2 -m_3) \\
    &  =  \ts\frac{2}{7}a_1 + \ts\frac{1}{7}a_2 - {\ts\frac{1}{7}}(m_1 + m_2 + m_3) \\
    & \geq \ts\frac{2}{7} a_1 + \ts\frac{1}{7}a_2 - \ts\frac{2}{7}m.
  \end{align*}
  Consequently, $b_1 + \frac{1}{7}b_2 \geq \frac{1}{7}(2a_1 + a_2 - 2m) \geq
  \frac{1}{7}\cdot \Delta n$, where to show the last step, it suffices to
  show that $n+2 \leq 2a_1+a_2$ for a tree $T$ of $n$ nodes.  WLOG, we
  may assume that all nodes of $T$ have degree either one or three, in
  which case $2a_1 = n+2$.
  Finally, by Chernoff bounds, the algorithm will finish with high
  probability in $O(\log n)$ rounds.
%
%
%
\end{proof}

\subsection{Parallel Performance of Solver Chain}

Spielman and Teng~\cite[Section 5]{SpielmanTengSolver} gave a
(sequential) time bound for solving a linear SDD system given a
preconditioner chain.  The following lemma extends their Theorem~5.5
to give parallel runtime bounds (work and depth), as a function of
$\kappa_i$'s and $m_i$'s. We note that in the bounds below, the
$m_d^2$ term arises from the dense inverse used to solve the linear
system in the bottom level.

\begin{lemma}
\label{lem:chainrunningtime}
There is an algorithm that given a preconditioner chain $\mathcal{C} =
\langle A_1 = A, A_2, \dots, A_d \rangle$ for a matrix $A$, a vector
$b$, and an error tolerance $\vareps$, computes a vector $\tilde{x}$
such that
\[
\norm[A]{\tilde{x} - A^+b} \; \leq  \; \vareps \cdot \norm[A]{A^+b},
\]
with depth bounded by
\begin{align*}
\ts
&\Bigg(\sum\limits_{1 \leq i \leq d}  \prod\limits_{1 \leq j < i} \sqrt{\kappa_j}\Bigg)
\log{n} \logoeps
\; \leq  \;
O\Bigg( \Bigg(\prod\limits_{1 \leq j < d} \sqrt{\kappa_j} \Bigg) \log{n}
  \logoeps \Bigg)
\end{align*}
%
and work bounded by
\begin{equation*}
  \left (\sum_{1 \leq i \leq d - 1} m_i \cdot \prod_{j \leq i} \sqrt{\kappa_j}
    + m_d^2 \prod_{1 \leq j < d} \sqrt{\kappa_j} \right) \logoeps.
\end{equation*}
\end{lemma}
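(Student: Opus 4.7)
The plan is to instantiate the recursive preconditioned Chebyshev solver \procName{rPCh} described in Spielman--Teng and \cite{KoutisMP:focs10} on the chain $\mathcal{C}$, and then simply count work and depth using the chain's structure. For correctness I would appeal directly to Spielman--Teng's Theorem~5.5: at each level $i < d$, \procName{rPCh} runs $\kappa_i' = O(\sqrt{\kappa_i})$ preconditioned Chebyshev iterations, each of which (i) performs one matrix--vector multiplication against $A_i$ and a constant number of vector operations, and (ii) makes one recursive call on $A_{i+1}$ to approximately apply the preconditioner. The $\sqrt{\kappa_i}$ savings comes from the standard Chebyshev acceleration, and the error analysis propagates through the chain unchanged from the sequential case. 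The outer $\log(1/\vareps)$ factor arises from wrapping the whole solve in a Richardson/outer iteration that drives the relative error down to $\vareps$. At the bottom level, each solve uses the precomputed Cholesky factorization guaranteed by the fact cited above, costing $O(m_d^2)$ work and $O(\log n)$ depth.

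For the depth bound, I would observe that the Chebyshev iterations at a fixed level are inherently sequential because each iterate depends on the previous residual, but each individual iteration parallelizes well: a sparse matrix--vector multiply against $A_i$ has depth $O(\log n)$, and the recursive preconditioner application contributes the depth of a level-$(i{+}1)$ solve. Letting $T(i)$ denote the depth of a solve at level $i$, this gives the recurrence
\[
T(i) \;\leq\; O(\sqrt{\kappa_i}) \cdot \bigl( T(i+1) + O(\log n) \bigr), \qquad T(d) = O(\log n).
\]
Unrolling yields $T(1) = O(\log n) \cdot \sum_{1 \leq i \leq d} \prod_{1 \leq j < i} \sqrt{\kappa_j}$, matching the first displayed expression in the lemma. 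Since each factor $\sqrt{\kappa_j} \geq \sqrt{2}$, the sum is geometric and dominated by its last term $\prod_{1 \leq j < d} \sqrt{\kappa_j}$, giving the second (simplified) form. Multiplying by the $\log(1/\vareps)$ outer iterations yields the claimed depth bound.

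For the work bound, I would count visits to each level. The number of visits to level $i$ is $N_i = \prod_{1 \leq j < i} \sqrt{\kappa_j}$, and on each visit to a non-bottom level $i < d$ the algorithm performs $O(\sqrt{\kappa_i})$ Chebyshev iterations costing $O(m_i)$ each (sparse matvecs plus vector ops). Hence level $i < d$ contributes work $O\!\left(m_i \prod_{1 \leq j \leq i} \sqrt{\kappa_j}\right)$. At the bottom, each of the $N_d = \prod_{1 \leq j < d} \sqrt{\kappa_j}$ visits triggers a dense back-substitution costing $O(m_d^2)$, contributing $O\!\left(m_d^2 \prod_{1 \leq j < d} \sqrt{\kappa_j}\right)$. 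Summing these contributions and multiplying by the outer $\log(1/\vareps)$ factor gives precisely the work expression in the lemma.

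The main obstacle I anticipate is \emph{not} the counting, which is routine once the recursion structure is set, but the numerical error analysis tying the approximate inner solves back to the desired $A$-norm guarantee. One must pick the tolerances $\vareps_i$ of the sub-solves small enough that their accumulated error does not destroy the Chebyshev convergence, yet not so small as to blow up the iteration count. This is precisely the content of Spielman--Teng's Theorem~5.5, whose proof goes through verbatim here because switching from sequential to parallel execution does not change the arithmetic performed; thus I would cite it as a black box, keeping the novel part of the argument confined to the parallel resource accounting described above.
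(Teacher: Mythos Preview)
Your proposal is correct and follows essentially the same approach as the paper: the paper packages the counting into an auxiliary lemma (proved by downward induction on the level~$\ell$) that bounds the depth and work of computing $\linOprName{solve}_{A_\ell}\cdot b$, using Spielman--Teng's Lemma~5.3 for the polynomial-of-degree-$\sqrt{\kappa_i}$ structure, and then invokes Spielman--Teng's Theorem~5.5 for the $\vareps$-accuracy guarantee. Your recurrence unrolling for depth and visit-counting for work are exactly this induction spelled out in a slightly different order, and your deferral of the error analysis to Theorem~5.5 matches the paper verbatim.
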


To reason about Lemma \ref{lem:chainrunningtime}, we will rely on the
following lemma about preconditioned Chebyshev iteration and the
recursive solves that happen at each level of the chain.  This lemma
is a restatement of Spielman and Teng's Lemma 5.3 (slightly modified
so that the $\sqrt{\kappa_i}$ does not involve a constant, which shows
up instead as constant in the preconditioner chain's definition).

\begin{lemma}
  \label{lem:recursivecheby}
  Given a preconditioner chain of length $d$, it is possible to
  construct linear operators $\linOprName{solve}_{A_i}$ for all $i
  \leq d$ such that
  \[
  (1 - e^{-2}) A_i^+ \preceq solve_{A_i} \preceq (1+e^{2})
  \]
  and $\linOprName{solve}_{A_i}$ is a polynomial of degree
  $\sqrt{\kappa_i}$ involving $\linOprName{solve}_{A_{i + 1}}$ and $4$
  matrices with $m_i$ non-zero entries (from
  $\procName{GreedyElimination}$).
\end{lemma}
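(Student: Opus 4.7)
The plan is to build the operators $\linOprName{solve}_{A_i}$ inductively from the bottom of the chain up, with $i = d$ as the base case and each $\linOprName{solve}_{A_i}$ defined by a preconditioned Chebyshev recurrence that invokes $\linOprName{solve}_{A_{i+1}}$. For the base case, we do not actually need a polynomial; we can use the dense inverse discussed above the lemma to obtain an \emph{exact} solve operator for $A_d$, which trivially satisfies the spectral sandwich with much slack to spare.

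For the inductive step, fix $i < d$ and assume $\linOprName{solve}_{A_{i+1}}$ satisfies the claimed $(1 - e^{-2}) A_{i+1}^+ \preceq \linOprName{solve}_{A_{i+1}} \preceq (1 + e^{2}) A_{i+1}^+$ bound. The natural preconditioner for $A_i$ is $B_i$, since by the chain's definition $A_i \preceq B_i \preceq \frac{1}{10}\kappa_i A_i$, so the preconditioned operator $B_i^+ A_i$ has relative condition number at most $\frac{1}{10}\kappa_i$. Standard preconditioned Chebyshev iteration of degree $N = O(\sqrt{\kappa_i})$ with an \emph{exact} $B_i$-solve would then reduce the $A_i$-norm error by any desired constant factor, say below $e^{-4}$. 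The second ingredient is that by construction $A_{i+1} = \procName{GreedyElimination}(B_i)$, so a $B_i$-solve reduces to one $A_{i+1}$-solve sandwiched between a forward and a back substitution, each representable as multiplication by a matrix with at most $m_i$ non-zero entries; along with the Chebyshev step's matrix--vector multiply against $A_i$ itself, this accounts for the four $O(m_i)$-sparse matrices invoked per recursive call, and it shows that the resulting operator is indeed a degree-$\sqrt{\kappa_i}$ polynomial in the advertised objects.

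The main obstacle is that we do not have an exact $B_i$-solve; we have only the inductively-given approximate operator $\linOprName{solve}_{A_{i+1}}$, which induces an approximate $B_i$-solve through the elimination structure. The hard part of the proof is therefore showing that preconditioned Chebyshev remains stable under this inexact inner solve. The standard way to do this is to analyze the iteration in the $A_i$-norm and track how the residual $r_k = b - A_i x_k$ evolves when the preconditioner is perturbed; because the inductive bound gives a relative error of at most $e^2$ on the inner solve, the effective preconditioner is still a constant-factor spectral approximation of $B_i$, hence of $A_i$ up to factor $\kappa_i$. Choosing $N$ to be a large enough constant multiple of $\sqrt{\kappa_i}$ (here the $\frac{1}{10}$ slack in the chain's condition-number definition is exactly what absorbs the extra factor coming from the inexact inner solve), the composite error reduction at level $i$ still beats $e^{-2}$, which yields the claimed two-sided bound $(1 - e^{-2}) A_i^+ \preceq \linOprName{solve}_{A_i} \preceq (1 + e^{2}) A_i^+$ and closes the induction.

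Finally, a brief verification that the constructed operator is genuinely linear (which is needed for the spectral inequalities to make sense) comes from the fact that each Chebyshev step is an affine update of the iterate and residual using fixed matrices and the linear operator $\linOprName{solve}_{A_{i+1}}$; since we initialize at $x_0 = 0$, the map $b \mapsto x_N$ is linear, and it is a polynomial of degree $\sqrt{\kappa_i}$ in $A_i$, $\linOprName{solve}_{A_{i+1}}$, and the elimination factors, exactly as stated.
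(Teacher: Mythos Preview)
The paper does not actually prove this lemma; it is stated without proof as ``a restatement of Spielman and Teng's Lemma~5.3 (slightly modified so that the $\sqrt{\kappa_i}$ does not involve a constant, which shows up instead as a constant in the preconditioner chain's definition).'' Your sketch is essentially the argument that underlies that cited result: induct from the bottom of the chain, use an exact solve at level $d$, and at each level run preconditioned Chebyshev with $B_i$ as preconditioner, realizing the $B_i$-solve via the partial Cholesky factors from \procName{GreedyElimination} wrapped around $\linOprName{solve}_{A_{i+1}}$. Your identification of the key technical issue---that the inner solve is only approximate and one must check Chebyshev remains stable, with the $1/10$ slack in the chain's definition absorbing the resulting constant---is exactly the point of the modification the paper alludes to. So your proposal is correct and matches the intended (cited) argument; there is simply nothing in the present paper to compare it against beyond the citation.
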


Armed with this, we state and prove the following lemma:
\begin{lemma}
  \label{lem:solvedepthwork}
  For $\ell \geq 1$, given any vector $b$, the vector
  $\linOprName{solve}_{A_\ell}\cdot b$ can be computed in depth
  \[
  \log{n} \sum_{\ell \leq i \leq d}  \prod_{\ell \leq j < i} \sqrt{\kappa_j}
  \]
  and work
  \[
  \sum_{\ell \leq i \leq d - 1} m_i \cdot \prod_{\ell \leq j \leq i} \sqrt{\kappa_j}
  + m_d^2 \prod_{\ell \leq j < d} \sqrt{\kappa_j}
  \]

\end{lemma}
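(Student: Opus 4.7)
The plan is to induct downward on $\ell$, starting from the bottom of the chain at level $d$ and using the recursive structure given by Lemma~\ref{lem:recursivecheby}. Let $s_i := \sqrt{\kappa_i}$ and let $D_\ell$, $W_\ell$ denote the depth and work, respectively, of computing $\linOprName{solve}_{A_\ell}\cdot b$ on an arbitrary input vector $b$. For the base case, the fact about dense Cholesky/LU factorization preceding the lemma statement gives $D_d = O(\log n)$ and $W_d = O(n_d^2) = O(m_d^2)$ per solve (once the $O(n_d^3)$ factorization has been computed as preprocessing). The bound $n_d^2 = O(m_d^2)$ uses that each $A_i$ is a connected Laplacian, so $m_d \ge n_d - 1$.

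For the inductive step, Lemma~\ref{lem:recursivecheby} expresses $\linOprName{solve}_{A_\ell}$ as a degree-$s_\ell$ polynomial in $\linOprName{solve}_{A_{\ell+1}}$ together with $O(1)$ matrices having $m_\ell$ nonzeros. Evaluating such a polynomial by the standard three-term Chebyshev recurrence produces $s_\ell$ sequentially dependent iterations; each one invokes $\linOprName{solve}_{A_{\ell+1}}$ on one vector and performs $O(1)$ matrix-vector multiplications with matrices of size $m_\ell$, which take $O(\log n)$ depth and $O(m_\ell)$ work apiece on a PRAM. This yields the recurrences
\begin{equation*}
D_\ell \;=\; s_\ell \bigl(D_{\ell+1} + O(\log n)\bigr), \qquad W_\ell \;=\; s_\ell \bigl(W_{\ell+1} + O(m_\ell)\bigr).
\end{equation*}

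Unrolling these from level $\ell$ down to level $d$ gives the telescoping sums claimed in the statement: each contribution at level $i \in \{\ell, \ldots, d-1\}$ is multiplied by $\prod_{\ell \le j \le i} s_j$ through the recursive calls above it, the base contribution of $m_d^2$ picks up the full product $\prod_{\ell \le j < d} s_j$, and similarly for depth the base $O(\log n)$ and each per-level $O(\log n)$ term is prefixed by a product $\prod_{\ell \le j < i} s_j$, producing exactly $O(\log n)\sum_{\ell \le i \le d}\prod_{\ell \le j < i} s_j$. The main thing to be careful about is bookkeeping: ensuring the empty-product convention lines up with the claimed index ranges, and that the base-case cost is charged once (rather than once per iteration at levels above). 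There is no real technical obstacle; once the recurrences are in hand, the bounds follow by routine expansion.
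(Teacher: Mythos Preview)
Your proposal is correct and follows essentially the same approach as the paper: downward induction on $\ell$ with base case at level $d$ (dense solve costing $O(\log n)$ depth and $O(m_d^2)$ work), and an inductive step that uses Lemma~\ref{lem:recursivecheby} to derive the recurrences $D_\ell = \sqrt{\kappa_\ell}\,(D_{\ell+1} + O(\log n))$ and $W_\ell = \sqrt{\kappa_\ell}\,(W_{\ell+1} + O(m_\ell))$, which unroll to the stated sums. The paper carries out the same induction, just writing the algebra of the inductive step explicitly rather than stating the recurrence and unrolling.
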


\begin{proof}
  The proof is by induction in decreasing order on $\ell$.  When $d =
  \ell$, all we are doing is a matrix multiplication with a dense
  inverse.  This takes $O(\log{n})$ depth and $O(m_d^2)$ work.

  Suppose the result is true for $\ell + 1$.  Then since
  $\linOprName{solve}_{A_\ell}$ can be expressed as a polynomial of
  degree $\sqrt{\kappa_\ell}$ involving an operator that is
  $\linOprName{solve}_{A_{\ell+1}}$ multiplied by at most $4$ matrices
  with $O(m_\ell)$ non-zero entries.  We have that the total depth is
  \begin{align*}
  & \log{n} \sqrt{\kappa_\ell} + \sqrt{\kappa_\ell} \cdot
         \left( \log{n} \sum_{\ell + 1 \leq i \leq d}  \prod_{\ell + 1 \leq j < i} \sqrt{\kappa_j} \right) \\
  &\;\; = \;\;  \log{n} \sum_{\ell \leq i \leq d}  \prod_{\ell \leq j < i} \sqrt{\kappa_j}
  \end{align*}
  and the total work is bounded by
  \begin{align*}
  & \sqrt{\kappa_\ell}m_\ell \;+\;
   \sqrt{\kappa_\ell} \cdot
        \left( \sum_{\ell + 1 \leq i \leq d - 1} m_i \cdot \prod_{\ell + 1 \leq j \leq i} \sqrt{\kappa_j}
           + m_d^2 \prod_{\ell + 1 \leq j < d} \sqrt{\kappa_j} \right)\\
  &\;=\;\;  \sum_{\ell \leq i \leq d - 1} m_i \cdot \prod_{\ell \leq j \leq i} \sqrt{\kappa_j}
        + m_d^2 \prod_{\ell \leq j < d} \sqrt{\kappa_j}.
  \end{align*}
\end{proof}

\begin{proof}[of Lemma~\ref{lem:chainrunningtime}]
  The $\vareps$-accuracy bound follows from applying preconditioned
  Chebyshev to $\linOprName{solve}_{A_1}$ similarly to Spielman and
  Teng's Theorem 5.5~\cite{SpielmanTengSolver}, and the running time
  bounds follow from Lemma~\ref{lem:solvedepthwork} when $\ell = 1$.
\end{proof}

\subsection{Optimizing the Chain for Depth}
\label{sec:solver}

Lemma~\ref{lem:chainrunningtime} shows that the algorithm's
performance is determined by the settings of $\kappa_i$'s and $m_i$'s;
however, as we will be using Lemma~\ref{lem:precondition}, the number
of edges $m_i$ is essentially dictated by our choice of $\kappa_i$.
We now show that if we terminate chain earlier, i.e. adjusting the dimension $A_d$ to roughly $O(m^{1/3} \log \vareps^{-1})$, we can obtain good parallel performance. As a first attempt, we will set $\kappa_i$'s uniformly:

\begin{lemma}
\label{lemma:uniformkappa}
For any fixed $\theta>0$, if we construct a preconditioner chain using
Lemma~\ref{lem:precondition} setting $\kay$ to some proper constant
greater than 21, $\eta = \kay$ and extending the sequence until $m_d
\leq m^{1/3 - \delta}$ for some $\delta$ depending on $\kay$, we get a
solver algorithm that runs in $O(m^{1/3+\theta}\log(1/\vareps))$
depth and $\otilde(m\log{1/\vareps})$ work as $\kay \rightarrow
\infty$, where $\vareps$ is the accuracy precision of the solution,
as defined in the statement of Theorem~\ref{thm:main}.
\end{lemma}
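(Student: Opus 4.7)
The plan is to instantiate Lemma~\ref{lem:precondition} uniformly at every level of the chain with $\eta = \kay$, where $\kay$ is a large constant ($> 21$) to be chosen as a function of $\theta$. This yields uniform condition number $\kappa_i = \frac{1}{10}\log^{\kay^2}n$ and a per-level edge-reduction factor $f = O(1/\log^{\kay^2 - 6\kay} n)$; passing through \procName{GreedyElimination} preserves this factor for both the vertex and edge counts, so $m_{i+1} = O(f \cdot m_i)$ and sizes shrink geometrically along the chain.

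I would then choose the chain length $d$ as the smallest integer with $m_d \leq m^{1/3 - \delta}$, for a $\delta > 0$ to be fixed momentarily; this forces $d - 1 = \Theta\bigl((2/3 + \delta)\log m / ((\kay^2 - 6\kay)\log\log n)\bigr)$. Plugging the uniform $\kappa$ into Lemma~\ref{lem:chainrunningtime}, the depth sum is dominated by its last term, so
\[
\text{depth} = O\bigl(\sqrt{\kappa}^{\,d-1}\log n \log(1/\vareps)\bigr) = O\bigl(m^{(2/3+\delta)/(2(1 - 6/\kay))}\log n \log(1/\vareps)\bigr),
\]
after simplifying $\log\sqrt{\kappa} = \tfrac{\kay^2}{2}\log\log n$. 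For this expression to be $O(m^{1/3+\theta}\log(1/\vareps))$, I require $\delta \leq 2\theta - O(1/\kay)$, which is arranged by taking $\kay$ large enough in $\theta$.

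For the work, the interior sum $\sum_i m_i \sqrt{\kappa}^{\,i}$ from Lemma~\ref{lem:chainrunningtime} is a geometric series in ratio $f\sqrt{\kappa} = O(1/\log^{\kay^2/2 - 6\kay}n) = o(1)$ once $\kay > 12$, so it is dominated by its first term $m\sqrt{\kappa} = \otilde(m)$. The bottom-level contribution is $m_d^2 \prod_j \sqrt{\kappa_j} = O(m^{2(1/3 - \delta) + 1/3 + \theta}) = O(m^{1 + \theta - 2\delta})$, which is $\otilde(m)$ precisely when $\delta \geq \theta/2$. Both constraints on $\delta$ can be met simultaneously by picking $\delta \approx 2\theta$, and the $\log(1/\vareps)$ factor is carried through unchanged.

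The only real obstacle is the bookkeeping to ensure a single $\kay$ satisfies all the constraints at once: $\kay > 21$ (from Lemma~\ref{lem:precondition}), $\kay > 12$ (so the geometric work sum converges), and $\kay \gtrsim 8/\theta$ (so that the $O(1/\kay)$ slack in the depth exponent is dominated by $\theta$). Each is a lower bound on a constant $\kay$ depending only on $\theta$, which is exactly the content of the ``as $\kay \to \infty$'' clause in the statement, and together they deliver the claimed $O(m^{1/3+\theta}\log(1/\vareps))$ depth and $\otilde(m\log(1/\vareps))$ work.
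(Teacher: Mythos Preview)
Your proposal is correct and follows essentially the same approach as the paper: instantiate Lemma~\ref{lem:precondition} uniformly with $\eta=\kay$, observe that the per-level work terms $m_i\prod_{j\le i}\sqrt{\kappa_j}$ form a geometric series with ratio $o(1)$ once $\kay>12$, bound $\prod_{j<d}\sqrt{\kappa_j}$ by $m^{(1/3+\delta/2)\cdot\kay/(\kay-O(1))}$, and then balance $\delta$ against $\kay$ so that both the depth exponent and the bottom-level $m_d^2\prod_j\sqrt{\kappa_j}$ term are under control (the paper solves this by setting $\delta=7/(\kay-14)$ and reading off depth exponent $1/3+14/(\kay-14)$, whereas you phrase it as a feasibility range $\theta/2\le\delta\le 2\theta-O(1/\kay)$). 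One small correction: the threshold $\kay>21$ comes from this balancing step itself, not from Lemma~\ref{lem:precondition}, which only requires $\eta\ge\kay\ge 16$.
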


\begin{proof}
  By Lemma~\ref{thm:incrementalsparsify}, we have that $m_{i+1}$---the
  number of edges in level $i+1$---is bounded by
  \[
  O(m_i \cdot \frac{c_{\textit{PC}}}{\log^{\eta\kay - 2\eta - 4\kay}})
  = O(m_i \cdot \frac{c_{\textit{PC}}}{\log^{\kay(\kay - 6)}}),
  \]
  which can be repeatedly apply to give
  \[
  m_i \leq m \cdot \left( \frac{c_{\textit{PC}}}{\log^{\kay(\kay - 6)}{n}} \right) ^{i - 1}
  \]
Therefore, when $\kay > 12$, we have that for each $i < d$,
\begin{align*}
 m_i  \cdot \prod_{j \leq i} \sqrt{\kappa(n_j)}
&\leq  m  \cdot \left( \frac{c_{\textit{PC}}}{\log^{\kay(\kay - 6)}{n}} \right) ^{i - 1}
  \cdot \left( \sqrt{\log^{\kay^2}{n}} \right)^i \\
&=  \tilde{O}(m) \cdot \left( \frac{c_{\textit{PC}}}
  {\log^{\kay(\kay-12)/2}{n} } \right)^{i}\\
& \leq  \tilde{O}(m)
\end{align*}

Now consider the term involving $m_d$.  We have that $d$ is bounded by
\[\left( \frac{2}{3}+\delta \right) \log{m} / \log{ (\frac1{c_{\textit{PC}}}
  \log{n}^{\kay(\kay - 6)})}.\]  Combining with the $\kappa_i =
\log^{\kay^2}{n}$, we get
\begin{align*}
&\!\!\!\!\!\!\!\prod_{1 \leq j \leq d} \sqrt{\kappa(n_j)}\\
&=  \left( \log{n}^{\kay^2/2} \right)
  ^{(\frac{2}{3}+\delta)\log{m} / \log{ (c \log{n}^{\kay(\kay - 6)})}}\\
&=  \exp \left( \log\log{n} \frac{\kay^2}{2} (\frac{2}{3} + \delta)
\frac{\log{m}} {\kay(\kay-6)\log\log{n} - \log{c_{\textit{PC}}}} \right)\\
&\leq \exp \left( \log\log{n} \frac{\kay^2}{2} (\frac{2}{3} + \delta)
\frac{\log{m}} {\kay(\kay-7)\log\log{n}} \right)\\
& \qquad ( \text{since }\log{c_{PC}} \geq - \log{n} )\\
&= \exp \left(\log{n} \frac{\kay}{\kay-7} (\frac{1}{3} + \frac{\delta}{2}) \right)\\
&= O(m^{(\frac{1}{3} + \frac{\delta}{2})\frac{\kay}{\kay-7}})
\end{align*}
Since $m_d = O(m^{\frac{1}{3} - \delta})$, the total work is bounded
by
\[
O(m^{(\frac{1}{3} + \frac{\delta}{2})\frac{\kay}{\kay-7} +
  \frac{2}{3} - 2\delta}) = O(m^{1 + \frac{7}{\kay - 7} - \delta
  \frac{\kay - 14}{\kay-7}})
\]
So, setting $\delta \geq \frac{7}{\kay
  - 14}$ suffices to bound the total work by $\otilde(m)$.  And, when
$\delta$ is set to $\frac{7}{\kay - 14}$, the total parallel running
time is bounded by the number of times the last layer is called
\begin{align*}
\prod_j \sqrt{\kappa(n_j)} &\leq
 O(m^{(\frac{1}{3} + \frac{1}{2(\kay - 14)})\frac{\kay}{\kay-7}})\\
&\leq  O(m^{\frac{1}{3} + \frac{7}{\kay - 14} + \frac{\kay}{2(\kay - 14)(\kay - 7)}})\\
&\leq  O(m^{\frac{1}{3} + \frac{7}{\kay - 14} + \frac{7}{\kay - 14}})\\
&\leq  O(m^{\frac{1}{3} + \frac{14}{\kay - 14}})
~~~~~\text{when }\kay\geq 21
\end{align*}
Setting $\kay$ arbitrarily large suffices to give $O(m^{1/3 +
  \theta})$ depth.
\end{proof}

To match the promised bounds in Theorem~\ref{thm:main}, we improve the
performance by reducing the exponent on the $\log{n}$ term in the
total work from $\kay^2$ to some large fixed constant while letting
total depth still approach $O(m^{1/3+\theta})$.

\medskip


\begin{proof}[of Theorem~\ref{thm:main}]
  Consider setting $\kay = 13$ and $\eta \geq \kay$. Then,
  \[
   \eta\kay - 2\eta  - 4\kay \geq \eta(\kay - 6) \geq \frac{7}{13}\eta\kay
  \]
  We use $c_4$ to denote this constant of $\frac{7}{13}$, namely $c_4$
   satisfies
  \[
  c_{\textit{PC}}/\log^{\eta k - 2\eta - 4\kay}n \leq
  c_{\textit{PC}}/\log^{c_4 \eta\kay}n
  \]
  We can then pick a constant threshold $L$ and set $\kappa_i$ for all $i \leq L$ as
  follows:
  \[
  \kappa_1 = \log^{\kay^2}n, \kappa_2 =
  \log^{(2c_4)\kay^2}n, \cdots, \kappa_i = \log^{(2c_4)^{i - 1}\kay^2}n
  \]
  To solve $A_L$, we apply Lemma~\ref{lemma:uniformkappa}, which is
  analogous to setting $A_L, \dots, A_d$ uniformly.  The depth
  required in constructing these preconditioners is $O(m_d +
  \sum_{j=1}^L (2c_4)^{j -1}\kay^2)$, plus $O(m_d)$ for computing the
  inverse at the last level---for a total of $O(m_d) = O(m^{1/3})$.

  As for work, the total work is bounded by
\begin{align*}
&\!\!\!\!\!\!\sum_{i \leq d} m_i \prod_{1\leq j \leq i} \sqrt{\kappa_j}+
  \prod_{1\leq j \leq d}\sqrt{\kappa_j} m_d^2\\
  &=  \sum_{i < L} m_i \prod_{1 \leq j \leq i} \sqrt{\kappa_j}\\
  &\mbox{}\qquad+ \left( \prod_{1 \leq j < L} \sqrt{\kappa_j}\right)\cdot
 \left( \sqrt{\kappa_j} \sum_{i \geq L} m_i \prod_{L \leq j \leq  i} \sqrt{\kappa_j}+
    m_d^2\prod_{L \leq j \leq d} \sqrt{\kappa_j} \right)\\
  &\leq  \sum_{i < L} m_i \prod_{1 \leq j \leq i} \sqrt{\kappa_j}
  + \left(\prod_{1 \leq j < L} \sqrt{\kappa_j}\right) m_L \sqrt{\kappa_L}\\
  &= \sum_{i \leq L} m_i \prod_{1 \leq j \leq i} \sqrt{\kappa_j} \\
  &\leq  \sum_{i \leq L} \frac{m}{\prod_{j < i} \kappa_i^{c_4}}\prod_{1 \leq j \leq i} \sqrt{\kappa_j}\\
  &=   m \sum_{i \leq L} \frac{ \sqrt{\kappa_1} \prod_{2 \leq j \leq i}
    \sqrt{\kappa_{j - 1}^{2c_4}}  }{ \prod_{j < i} \kappa_i^{c_4}} \\
  &=  m L \sqrt{\kappa_1}
\end{align*}

The first inequality follows from the fact that the exponent of
$\log^{n}$ in $\kappa_L$ can be arbitrarily large, and then applying
Lemma~\ref{lemma:uniformkappa} to the solves after level $L$.  The
fact that $m_{i+1} \leq m_i \cdot O(1/\kappa_{i}^{c_4})$ follows from
Lemma~\ref{lem:precondition}.

  Since $L$ is a constant, $\prod_{1 \leq j \leq L} \in O(\polylog{n})$, so the total
  depth is still bounded by $O(m^{1/3+\theta})$ by
  Lemma~\ref{lemma:uniformkappa}.
\end{proof}


\section{Conclusion}
\label{sec:concl}

We presented a near linear-work parallel algorithm for constructing
graph decompositions with strong-diameter guarantees and parallel
algorithms for constructing $2^{O(\sqrt{\log n \log\log n})}$-stretch
spanning trees and $O(\log^{O(1)}n)$-stretch ultra-sparse subgraphs.
The ultra-sparse subgraphs were shown to be useful in the design of a
near linear-work parallel SDD solver.  By plugging our result into
previous frameworks, we obtained improved parallel algorithms for
several problems on graphs.

We leave open the design of a (near) linear-work parallel algorithm
for the construction of a low-stretch tree with polylogarithmic
stretch.  We also feel that the design of (near) work-efficient
$O(\log^{O(1)} n)$-depth SDD solver is a very interesting problem that
will probably require the development of new techniques.


\medskip
\section*{Acknowledgments}
  This work is partially supported by the National Science Foundation
  under grant numbers CCF-1018463, CCF-1018188, and CCF-1016799, by an
  Alfred P. Sloan Fellowship, and by generous gifts from IBM, Intel,
  and Microsoft.

\bibliographystyle{alpha}
\bibliography{../bib/ref,../bib/embedding,../bib/abbrev}


\end{document}